\documentclass[nonacm,screen]{acmart}

\renewcommand\footnotetextcopyrightpermission[1]{} %
\usepackage[noend]{algpseudocode}
\usepackage[ruled]{algorithm}
\usepackage{setspace}
\usepackage{subcaption}
\usepackage{amsmath, amsthm} %
\DeclareMathOperator*{\argmax}{arg\,max}
\DeclareMathOperator*{\argmin}{arg\,min}
\usepackage{wrapfig}

\AtBeginDocument{%
  }

\setcopyright{acmlicensed}
\copyrightyear{2026}
\acmYear{2026}
\acmDOI{XXXXXXX.XXXXXXX}

\begin{document}

\title{Location-of-interest vs. Current-location Privacy in Multi-hop Communication}

\author{Ryan Vogt}
\email{rvogt@ualberta.ca}
\orcid{0009-0002-8519-7465}
\author{Ioanis Nikolaidis}
\email{nikolaidis@ualberta.ca}
\orcid{0000-0003-1469-5280}
\author{Mario A. Nascimento}

\email{mario.nascimento@ualberta.ca}
\orcid{0000-0002-7609-1805}
\affiliation{%
  \institution{University of Alberta}
  \department{Department of Computing Science}
  \city{Edmonton}
  \state{Alberta}
  \country{Canada}
}

\begin{abstract}
Consider a standard multi-hop wireless radio network in which nodes relay messages for one another. While the contents of messages may be sensitive and thus encrypted, the very fact that two radios communicated can leak sensitive information about both the physical location of the origin and the physical location of the destination. We investigate the trade-off between protecting the privacy of a user's current location and the privacy of their location of interest, and how the user should choose dummy destinations to obfuscate their queries.
\end{abstract}

\begin{CCSXML}
<ccs2012>
   <concept>
       <concept_id>10002978.10003022.10003023</concept_id>
       <concept_desc>Security and privacy~Privacy protections</concept_desc>
       <concept_significance>500</concept_significance>
   </concept>
   <concept>
       <concept_id>10002978.10003022.10003024</concept_id>
       <concept_desc>Security and privacy~Pseudonymity, anonymity and untraceability</concept_desc>
       <concept_significance>500</concept_significance>
   </concept>
   <concept>
       <concept_id>10003033.10003083.10003095</concept_id>
       <concept_desc>Networks~Network privacy and anonymity</concept_desc>
       <concept_significance>300</concept_significance>
   </concept>
   <concept>
       <concept_id>10002978.10003022.10003100</concept_id>
       <concept_desc>Security and privacy~Mobile and wireless security</concept_desc>
       <concept_significance>300</concept_significance>
   </concept>
   <concept>
       <concept_id>10002978.10003022.10003026</concept_id>
       <concept_desc>Security and privacy~Privacy-preserving protocols</concept_desc>
       <concept_significance>100</concept_significance>
   </concept>
</ccs2012>
\end{CCSXML}

\ccsdesc[500]{Security and privacy~Privacy protections}
\ccsdesc[500]{Security and privacy~Pseudonymity, anonymity and untraceability}
\ccsdesc[300]{Networks~Network privacy and anonymity}
\ccsdesc[300]{Security and privacy~Mobile and wireless security}
\ccsdesc[100]{Security and privacy~Privacy-preserving protocols}

\keywords{wireless networks, multi-hop routing, eavesdropping attacks, node mobility}

\maketitle

\newcommand{\proc}[1]{\textnormal{\scshape#1}}
\newcommand{\const}[1]{\textnormal{\scshape#1}}
\newcommand{\To}{~\mathbf{to}~}
\newcommand{\Continue}{\textbf{continue}}
\newcommand{\Break}{\textbf{break}}
\newcommand{\Error}{\textbf{error}~}
\newcommand{\Preconditions}{\textbf{Preconditions:}~}
\newcommand{\fixme}[1]{\textit{\textbf{[FIXME: #1]}}}

\newcommand{\image}[4][1]{ \begin{figure} \centering \includegraphics[width=#1\textwidth]{#2} \caption{#3} \label{#4} \end{figure} }

\newcommand{\doubleimage}[5][0.5]{ \begin{figure*} \centering \subfigure{ \includegraphics[width=#1\textwidth]{#2} } \subfigure{ \includegraphics[width=#1\textwidth]{#3} } \caption{#4} \label{#5} \end{figure*} }

\newcommand{\doubleimagecap}[9][0.5]{ \begin{figure*} \centering \subfigure[#5]{ \includegraphics[width=#1\textwidth]{#2}\label{#8} } \subfigure[#6]{ \includegraphics[width=#1\textwidth]{#3}\label{#9} } \caption{#4} \label{#7} \end{figure*} }

\newcommand{\graph}[3]{\image[0.60]{#1}{#2}{#3}}

\newcommand{\doublegraph}[4]{\doubleimage[0.47]{#1}{#2}{#3}{#4}}

\section{Introduction}

We consider the problem of location privacy in multi-hop wireless communication. In multi-hop wireless communication, a transmission originates at a node, is forwarded by some number of intermediate nodes (the \textit{relays}) and is received by a destination node. Such networks form the basis of many tactical radio systems used by ground forces, where individual radios relay traffic for one another over ranges that exceed direct line-of-sight or single-hop connectivity. In these environments, the mere fact that two radios exchanged traffic can reveal the physical locations of both the originator and the destination, even when message payloads are protected by strong encryption.

Our basic assumption is that: \textit{the sequence of relayed transmissions of a message, from origin to destination, reveals a purposeful interaction between the two endpoints; that interaction maps onto the real world, and reveals a purposeful interaction between the physical locations of the two endpoints; and, the two locations and the fact that they communicated is considered sensitive information}. This assumption is warranted for environments in which an adversary can intercept wireless transmissions at multiple locations (including, in the extreme, across the entire deployment). The eavesdropping adversary is defined more precisely in Section~\ref{sect:ch2-adversary}. The key observation is that the eavesdropping scenario remains applicable under \textit{any} link-level or end-to-end encryption that could be applied to the messages. The location of the origin and the location of the destination can be inferred from the pattern of relays, independent of encryption.

We refer to messages introduced by an originator as \textit{queries}, to capture the fact that each message results in a \textit{reply} from the destination back to the origin. This model abstracts a broad category of tactical interactions, ranging from request/response exchanges for situational data or fire-support requests, to simple acknowledged command traffic.

The problem that we study is the privacy of the two \emph{locations} in question for the user: the user's \textit{current location} (the physical location of the origin) and their \textit{location of interest} (the physical location, or approximate area, of the destination). We emphasize that the problem studied is \emph{not} the privacy of the contents of the query and the reply, but instead the privacy of the physical locations of the endpoints.

Our contribution is a method inspired by $k$-anonymity~\cite{sweeney:kanon} that can be used to obfuscate the user's location of interest: the user issues not one query to their location of interest, but rather $k$ queries to a diverse set of locations. The trade-off to this approach, aside from the energy and bandwidth cost of the additional $k-1$ queries, is that the additional traffic originating from the user's current location facilitates an eavesdropping adversary determining that location. Our contribution is the study of the \textit{trade-off between protecting the privacy of the user's current location and the privacy of their location of interest, and how the user should choose the $k-1$ dummy destinations to obfuscate the real physical location of interest}.

It is important to note that in our scenario, the user trusts their own radio. However, the user does not trust the network, as there may be compromised radios. Additionally, our approach does not require that intermediate radios run any specialized software or privacy protocols. The multiple queries are initiated solely from the user's radio. In that regard, it is fundamentally different from onion-routing approaches such as Tor~\cite{dingledine:tor}. Onion routing is expensive in computational and energy cost on resource-constrained tactical radios, and critically requires a network specifically designed to support it. Our scheme works on any multi-hop radio network that can already forward ordinary messages.

We begin in Section~\ref{sect:ch2-related} by discussing related work.
We present our adversary model and assumptions in more detail in Section~\ref{sect:ch2-adversary}.
In Section~\ref{sect:ch2-metrics}, we present formal metrics for measuring the privacy of each of: the user's current location, their location of interest, and (more broadly) their area of interest.
We then discuss how a user should choose their $k-1$ dummy locations of interest to query in Section~\ref{sect:choose-fake}, followed by a discussion of the related ``partition'' and ``partitioning'' problems in Section~\ref{sect:partition-partitioning}.
We show the results of our $k$-query scheme in a featureless, uniform environment in Section~\ref{sect:ch2-results}, before adding mobility data drawn from an urban mobility simulator in Section~\ref{sect:ch2-mobility}.
In Section~\ref{sect:ch2-conclusion} we conclude and discuss future work.

\section{Related Work}\label{sect:ch2-related}

The relation between privacy and mobility in multi-hop wireless networks has been studied for some time. The literature is spread over a number of application domains, sometimes published as related to Mobile Ad-Hoc Networks (MANETs), or specific Wireless Sensor Networks (WSNs), or in more general settings.  

For example,  Hong~et~al.~\cite{hong:mobility} examined how anonymous communication in MANETs is affected by node movement, and Andersson~et~al.~\cite{andersson:requirements} outlined general requirements for privacy in MANETs. Gruteser and Grunwald~\cite{gruteser:cloaking} studied spatial cloaking so that users remain anonymous when accessing location-based services, while Gedik and Liu~\cite{gedik:k} examined how a \(k\)-anonymity approach can be realized for location-based services. Mobility patterns extracted from cell-phone records have also been shown to identify individuals~\cite{wicker:loss, zyga:phone}. The impact of compromised nodes on privacy has been investigated as well, for example by Pinto~et~al.~\cite{pinto:secrecy} which introduced the idea of routing around known compromised nodes to preserve privacy.

A significant body of work on source-location privacy in WSNs focuses on modifying the network itself. Misra and Xue~\cite{misra:clustered} show how clusters of nodes can generate and share pseudonyms when communicating with a sink. Their idea was extended by Ouyang~et~al.~\cite{ouyang:anonymity} to handle compromised shared keys. These schemes, however, address only node-to-sink traffic and are not applicable when nodes must communicate with one another to route user queries. Ozturk~et~al.~\cite{ozturk:source} and Kamat~et~al.~\cite{kamat:source} introduced phantom routing, in which each epoch’s data first follows a random directed walk away from the source before being forwarded to the sink. A similar approach is that of  Yang~et~al.~\cite{yang:global} but under a global eavesdropper. A related line of work examines an adversary that moves toward the receiver of sensor traffic~\cite{jian:receiver}. These approaches assume a mobile adversary that physically tracks a continuous stream of packets, and they require changes to the routing protocol. Subsequent evaluations have shown that phantom routing’s protection can degrade substantially under realistic network sizes and configurations~\cite{gu:phantom}.

Client-side dummy generation has long been used to achieve \(k\)-anonymity for location-based services without a trusted third party. Kido~et~al.~\cite{kido:dummy} introduced the idea of generating \(k-1\) dummy locations at the client so that a server cannot distinguish the real location. Later work refined dummy selection to improve entropy and geographic dispersion~\cite{niu:kanon, wu:dummy}. These techniques, however, assume a central server that receives the set of locations -- they do not address the path-consistency attacks that arise when the same queries must be routed across a multi-hop radio network under partial observation by an eavesdropper or compromised nodes.

Even protocols explicitly designed for anonymity remain vulnerable to traffic analysis. Liu~et~al.~\cite{liu:tia} demonstrated that a passive global adversary can recover end-to-end flows and paths in anonymous MANETs by correlating inter-arrival times. Stronger anonymity mechanisms for MANETs, such as the ALARM framework~\cite{eldefrawy:alarm}, rely on heavy cryptographic primitives and protocol changes that are costly for resource-constrained tactical radios. Onion-routing approaches~\cite{goldschlag:onion, dingledine:tor} similarly require a network specifically designed to support them and incur substantial computational and energy overhead on intermediate nodes.

The largest difference between these existing lines of work and our approach is that we focus on an environment in which the user trusts their own radio but does not trust the network (compromised radios or eavesdroppers may be present). The multiple queries are initiated solely from the user’s radio, without specialized software or privacy protocol  required at intermediate nodes. Consequently the scheme works on any multi-hop radio network that can already forward ordinary messages. This source-only design avoids the fragility of network-level routing modifications while still accounting for the path information an eavesdropper or compromised nodes can observe.

\section{Adversary Model and Assumptions}\label{sect:ch2-adversary}

We consider an adversary who can compromise individual nodes in a multi-hop wireless network. There are no physical or topological restrictions on which nodes the adversary may compromise, hence, any subset of the nodes may be under adversarial control. The adversary is incorporeal in the sense that it does not need to move through the deployment. Hence, once a node is compromised, the adversary instantly obtains all information known to that node.  Specifically, a compromised node yields to the adversary any cryptographic keys it holds and the full contents of every message it routes or overhears (whether or not those messages are encrypted under keys the adversary has obtained). 

Because the wireless medium is shared, a compromised node may also observe transmissions from honest nodes within its communication range. Interference or collisions, however, may cause a transmission to be missed, so the absence of a received message is not reliable evidence that a nearby node was uninvolved. Consequently, the adversary treats only the set of compromised nodes that did \textit{not} participate in a given query as nodes that are known \textit{not} to have routed that query.

When a user issues \(k\) queries, we assume the adversary knows the value of \(k\). This holds even if the adversary does not overhear every query. This is realistic because the entropy of a user-chosen \(k\) is small relative to the size of the network -- an adversary that observes only a subset of the queries can still bound the number of missing queries tightly.

For the purpose of analysis and simulation we adopt two simplifying assumptions about the user’s local link. First, we treat the user as interacting directly with the network node \(C\) that is closest to the user’s physical location. In practice the user would employ a handheld radio that communicates with \(C\). The additional \(2k\) transmissions on that last hop (the \(k\) queries outbound and the \(k\) replies inbound) do not alter the asymptotic privacy metrics and would only add unnecessary complexity. Second, we assume that the link between the user’s radio and \(C\) uses sufficiently low power that it is not overheard by any other node unless \(C\) itself is compromised. Both assumptions are standard in the source-location privacy literature and isolate the privacy effects that arise from multi-hop routing inside the network.

Under these conditions the adversary’s goal is to infer the user’s current location \(C\) and real location of interest \(L\) from the partial route information obtained from compromised nodes and from overheard transmissions. The precise definitions of current-location and location-of-interest privacy, and the algorithmic approach an adversary can take to attack them, are presented in the next section.

\section{Privacy Metrics}\label{sect:ch2-metrics}

We begin by defining our notation.  The network consists of a set
$\mathcal{N}$ of nodes, where \mbox{$\left|\mathcal{N}\right| = n$}.  The user
will issue $k$ queries, $Q_i$, each directed to a location $L_i$, where
\mbox{$\mathcal{L} = \left\{L_1, L_2, \ldots, L_k \right\} \subseteq
\mathcal{N}$}.  One node, \mbox{$L \in \mathcal{L}$}, is the user's real location
of interest, and the remaining \mbox{$k-1$} nodes are dummy locations of interest.
However, some of the nodes in the network may be compromised by the adversary,
and those nodes are able to overhear communication on the shared wireless medium.
Formally, the adversary attempts to
decide on the current location~(CL) and location of interest~(LOI)
of the user using the information overheard.

To study how well this $k$-anonymity scheme preserves the privacy of the user's
LOI and CL, we require formal methods to measure the privacy levels that result
from any given set of $k$ queries.  In
the following section, we define the metrics used to determine how well the
user's CL-privacy and LOI-privacy are protected.
We follow that up with a discussion of the closely related concept
of the privacy of the user's area of interest~(AOI) in Section~\ref{sect:ch2-aoi}.

\newcommand{\probGlobal}{\proc{P}_\text{\textnormal{G}}}
\newcommand{\probC}[1]{\mbox{$\probGlobal\left(C = #1\right)$}}
\newcommand{\probpC}[1]{\mbox{$\probGlobal\left(C = #1~|~C \in \mathcal{C}\right)$}}
\newcommand{\probL}[2]{\mbox{$\probGlobal\left(L = #1~|~C = #2\right)$}}
\newcommand{\probpL}[2]{\mbox{$\probGlobal\left(L = #1~|~C = #2~\text{and}~L \in \mathcal{L}_{#2}\right)$}}
\begingroup
\renewcommand*{\arraystretch}{1.4}
\begin{table}
\centering
\begin{tabular}{| l | p{3.5in} |}
\hline
$C$ & The current location~(CL) of the user, \mbox{$C \in \mathcal{N}$} \\
\hline
$C^*$ & The adversary's final guess as to the current location of the user \\
\hline
$\mathcal{C}$ & The set of possible current locations of the user, as determined by the adversary \\
\hline
$k$ & The total number of queries sent by the user, where \mbox{$k-1$} are to dummy locations of interest and
     one is to the real LOI\\
\hline
$L$ & The user's real location of interest~(LOI) \\
\hline
$L^*$ & The adversary's final guess as to the user's real location of interest \\
\hline
$\mathcal{K}$ & The set of query destinations known to the adversary, i.e.,
     \mbox{$\mathcal{K} \subseteq \mathcal{L} $} \\
\hline
$\mathcal{L} = \left\{L_1, \ldots, L_k\right\}$ & The $k$ locations queried by the user, with \mbox{$L \in \mathcal{L}$} \\
\hline
$\mathcal{L}_N$ & The set of possible locations queried by the user, as determined by the adversary,
     under the assumption that \mbox{$C = N$} for some \mbox{$N \in \mathcal{C}$} \\
\hline
$\proc{M}_\mathit{AOI},~\proc{M}_\mathit{CL},~\proc{M}_\mathit{LOI}$ & The measures of the user's AOI-, CL-, and
     LOI-privacy respectively \\
\hline
$n$ & The number of nodes in the network \\
\hline
$\mathcal{N} = \left\{N_1, \ldots, N_n\right\}$ & The nodes in the network, with \mbox{$|\mathcal{N}| = n$} \\
\hline
$\probGlobal\left(\text{condition}\right)$ & A global, aggregate probability over all users that the given condition is true;
     for example, \probpC{N} is the probability that any user's current location is $N$, given that their current location
     is known to be in $\mathcal{C}$ \\
\hline
$\mathcal{Q} = \left\{Q_1, \ldots, Q_k\right\}$ & The $k$ queries sent by the user \\
\hline
$\mathcal{R}_i = \left(N_{i, 1}, \ldots, N_{i, l_i}\right)$ & The route taken by query
     $Q_i$, transmitted through $l_i$ nodes \\
\hline
$\sigma\left(\mathcal{L}, \mathcal{N}\right)$ & A measure of how physically dispersed the elements of
     $\mathcal{L}$ are over the network $\mathcal{N}$ \\
\hline
$\mathcal{U}$ & The set of queries with destinations unknown to the adversary, i.e.,
     \mbox{$\mathcal{U} \subseteq \mathcal{Q}$} \\
\hline
\end{tabular}
\caption{Table of notation} %
\label{tab:symbolsForLOICL}
\end{table}
\endgroup

\subsection{Privacy of the Current Location and Location of Interest}\label{sect:ch2-cl}

\begin{wrapfigure}{l}{0.48\textwidth}
  \centering
  \includegraphics[width=\linewidth]{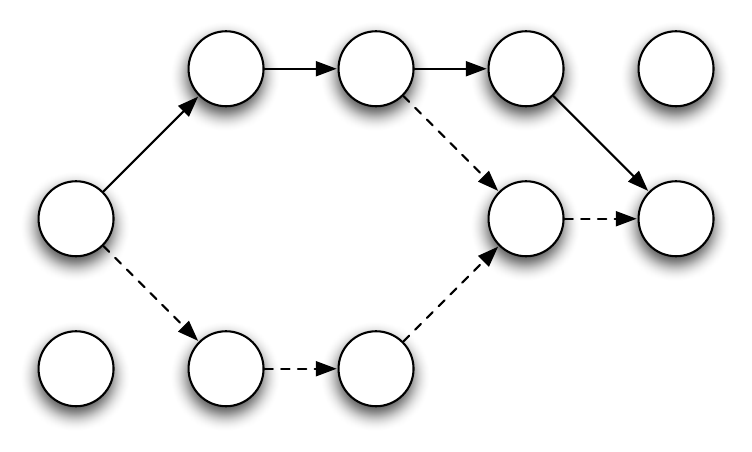}
  \caption{The possible routes from the westernmost central node to the
easternmost central node if fixed shortest-path routing is used (solid lines)
or if random shortest-path routing is used (solid and dashed lines).}\label{fig:routing}
\end{wrapfigure}

How much information a user's queries leak about
the user's current location~(CL) and location of interest~(LOI) depends on how much information the adversary
is able to overhear during
the relaying of the queries.  To determine the CL-privacy and LOI-privacy that result from a set of queries, we
will simulate the user issuing queries while malicious nodes attempt
to ascertain the origin and real destination of those queries.
This section demonstrates how an adversary could use the information captured to narrow down the possible locations where the user could be.

Central to the technique of narrowing down possible locations
is the concept of a \emph{possible route}.  In the network, which we assume to be connected, there will be a routing
algorithm capable of routing messages from any source node to any destination
node.  Consider a route \mbox{$\mathcal{R} = \left(N_{1}, N_{2}, \ldots, N_{l}
\right)$}, which is a sequence of $l$ nodes.  $\mathcal{R}$ is a possible route
from $N_1$ to $N_l$ if it is possible that the routing algorithm used in the 
network could have routed a message from $N_1$ to $N_l$ along the path
\mbox{$N_1, N_2, \ldots, N_l$}.

We will consider two routing algorithms: fixed shortest-path routing
and random shortest-path routing.  Both guarantee that any message from $N_1$ to
$N_l$ will arrive in the fewest possible hops.  In random shortest-path routing,
each node maintains a table indexed by the destination of a message, containing
all possible next hops that the message could take to arrive in the fewest hops. 
When a message arrives at a node $N_i$ destined for node $N_l$, $N_i$ will look
into its table at index $N_l$, and randomly choose one of the entries as the next
hop.  In fixed shortest-path routing, each node stores only a single next-hop
choice for each possible destination.  There is exactly one possible route from
$N_1$ to $N_l$ when fixed shortest-path routing is used, but there can be many
possible routes between $N_1$ and $N_l$ with random shortest-path routing.
An example in Figure~\ref{fig:routing} shows one possible route from the
westernmost central node to the easternmost central node if fixed shortest-path
routing is used, but three possible paths if random shortest-path routing is
used.

Recall that the user employs a mobile device to communicate with a nearby 
node, in order to route queries through the  network to the LOIs.  The user
sends the $k$ queries, denoted \mbox{$\mathcal{Q} = \left\{Q_1, Q_2, \ldots, Q_k\right\}$}, to the closest
 node, $C$.  As simplifying assumptions, we
assume both that the user knows the closest node, and that
the mobile device and $C$ communicate using
low-power communication.  That is, the adversary will only overhear communication
between the mobile device and $C$ if $C$ is compromised.  We also assume that the
adversary knows $k$~---~the implication being that if the adversary overhears
fewer than $k$ of the user's queries, the adversary knows how many queries were
not overheard.  Each $Q_i$ takes a route $\mathcal{R}_i$ through the network,
starting at $C$ and ending at $L_i$.  Denote $\mathcal{R}_i$ as a sequence of
nodes with length $l_i$, \mbox{$\mathcal{R}_i = \left(N_{i,1}, N_{i,2}, \ldots,
N_{i,l_i} \right)$}, where \mbox{$N_{i,1} = C$} and \mbox{$N_{i,l_i} = L_i$}. 
The goals of the adversary are to determine $C$ and to determine the real location of interest,
i.e., to determine $L$.

While our $k$-anonymity scheme can be built over any existing query mechanism, we
assume that query and reply messages are designed to maximize privacy.
Specifically, a query message that is being routed from $C$ to $L_i$ cannot contain
references to $C$, nor can a reply; otherwise, the adversary could easily
determine $C$ (note, though, that even if a reference to $C$ were contained in queries
and replies, this scheme could still be used to preserve the privacy of the user's location
of interest, just not their current location).  Query messages contain four pieces of information in addition to
the query itself: a unique query identifier for $Q_i$; the destination $L_i$
(which may be the real location of interest or a dummy one); the identifier for
the node currently transmitting the query, $N_{i,j}$; and, the next hop in the route,
$N_{i,j+1}$.  When node $N_{i,j+1}$ receives the query, it remembers the previous
node in the route for query $Q_i$, $N_{i,j}$.  Replies to the query message
contain only the query identifier for $Q_i$.  When node $N_{i,j+1}$ receives a
reply to the query, to be routed back to $C$, $N_{i,j+1}$ uses its memory to
identify $N_{i,j}$ as the next hop in the reply path, and sends the reply to
$N_{i,j}$ (without unnecessary information such as the identity of $N_{i,j}$ or
$N_{i,j+1}$).  However, we assume the worst case: the adversary is able to
determine which  node is transmitting a reply message if that reply is
overheard.

If a query message for query $Q_i$ is overheard and the header can be decrypted by the adversary
(e.g., if a compromised node is routing the query, or if a global encryption key is used for all transmissions
in the  network), the adversary learns one of the
LOIs, $L_i$.  Additionally, the adversary learns one hop that the query took
along the route from $C$ to $L_i$: $N_{i,j}$ and $N_{i,j+1}$, for \mbox{$1 \le j
< l_i$}, where $j$ and $l_i$ are unknown to the adversary.  That is, the
adversary learns two consecutive elements in the route, but neither their
position in the route nor the length of the route.  If a reply message for query
$Q_i$ is overheard, the adversary learns only a single node that was involved in
the route: $N_{i,j}$, for \mbox{$1 < j \le l_i$}, again with unknown $j$ and
$l_i$.

The adversary can also construct a list of  nodes that were certainly not
involved in routing $Q_i$.  Because the adversary has complete knowledge of every
message that was routed through compromised nodes in the user's network,
the adversary knows which compromised nodes were not involved in routing $Q_i$. 
An additional consideration for the adversary is that the malicious nodes
could monitor all communication by honest 
nodes in their communication range.  An adversary may conclude that if a
malicious node that is monitoring an honest node within its range did not
hear the honest node produce any traffic regarding query $Q_i$, then the honest node must not have been
involved in routing $Q_i$.  However, the malicious node may not have heard a
message transmitted by the honest node due to interference or a collision.  As such, we
assume that the adversary will restrict the list of nodes that certainly were not
involved with query $Q_i$ to the set of compromised nodes in the network
that did not route $Q_i$.

The key insight for the adversary, having collected information on the routes of
the queries, is that if no possible route from a node $N$ to the known
destination of $Q_i$, $L_i$, is consistent with the known information about
$Q_i$, then $N$ could not have been the origin of the query.  Formally, a route
$\mathcal{R} = \left\{N_1, N_2, \ldots N_l \right\}$ from $N$ to $L_i$ is
\emph{consistent} with that information, assuming the destination $L_i$ of $Q_i$
is known, if: \newline 
\noindent (1) $\mathcal{R}$ is a possible route from $N$ to $L_i$ (i.e., \mbox{$N_1 =
N$}, \mbox{$N_l = L_i$}, and the routing algorithm could have used this route); \newline
\noindent (2) For every query message about $Q_i$ decrypted by the adversary, sent by
$N_{i,j}$ to $N_{i,j+1}$, there is some \mbox{$k < l$} such that \mbox{$N_k =
N_{i,j}$} and \mbox{$N_{k+1} = N_{i,j+1}$}; \newline
\noindent (3) For ever query message about $Q_i$ overheard but not decrypted by the adversary,
sent by $N_{i,j}$ to an unknown destination, there is some
\mbox{$k < l$} such that \mbox{$N_k = N_{i,j}$}; \newline
\noindent (4) For every reply message about $Q_i$ overheard by the adversary, sent by
\mbox{$N_{i,j}$}, there is some \mbox{$k > 1$} such that \mbox{$N_k = N_{i,j}$};
and, \newline
\noindent (5) No node that is known not to have routed $Q_i$ appears in $\mathcal{R}$.

A simple example of the insight provided to the adversary from determining whether or not possible routes
are consistent with known information is illustrated in Figure~\ref{fig:narrow-cls}.  In the illustrated example, there
are two malicious nodes, one of which routed a query to the LOI and another that did not.  Using this information,
the adversary can narrow down the potential set of current locations of the user.

There is a second insight, about queries for which the adversary does not know
the destination.  The adversary may still know some information about such a
$Q_i$ (e.g., non-decrypted query messages,
overheard reply messages, or knowledge about compromised nodes that
did not route $Q_i$).  Denote $\mathcal{D}$ as a set of possible destinations
for the queries with unknown destinations.  Specifically, let $\mathcal{D}$ be the
set of all nodes in $\mathcal{N}$ that are not compromised and are not the
destination of a query with a known destination.  Let $\mathcal{U}$ be the set of
queries for which the destination is unknown to the adversary, where \mbox{$0 \le
\left|\mathcal{U}\right| \le k$}.  If it is not possible to assign a unique
destination from $\mathcal{D}$ to each query in $\mathcal{U}$, in such a way as
to ensure that there is a possible route from node $N$ to the destination
of each query in $\mathcal{U}$ that is consistent with all of the known
information about that query, then $N$ cannot be the origin of the user's
queries.

\begin{wrapfigure}{r}{0.48\textwidth}
  \centering
  \includegraphics[width=\linewidth]{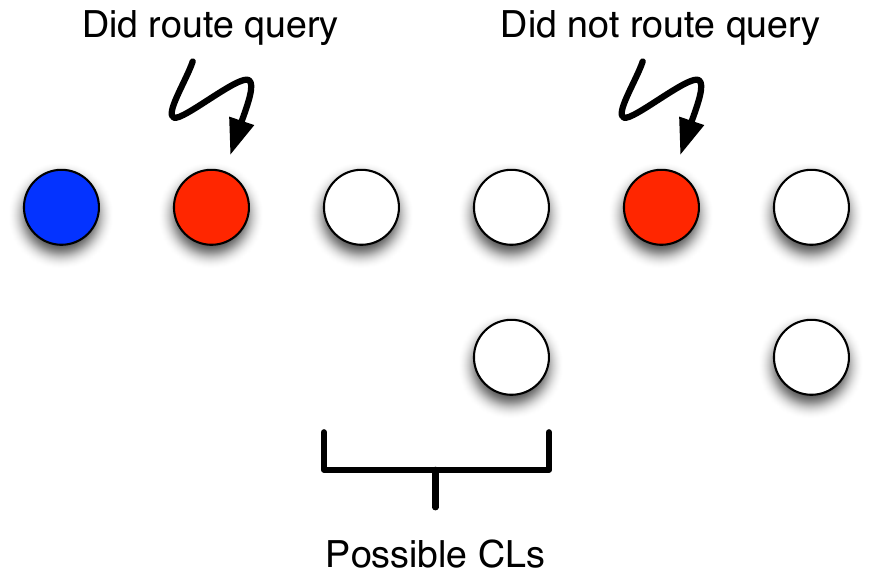}
  \caption{The westernmost (blue) node is one of the queried locations, and knowledge of which
malicious (red) nodes did and did not route the query to the LOI allows the adversary to narrow down the user's potential current locations.}\label{fig:narrow-cls}
\end{wrapfigure}

Using the pseudo-code algorithm presented in Algorithm~\ref{alg:narrow-possible-cls}
and Algorithm~\ref{alg:unknown-naive},
an adversary can narrow down the possible current locations of the user.  Additionally, each assignment
of potential destinations to the queries that is consistent with the information known by the adversary
represents all the possible query destinations for each possible current location.
The pseudocode returns a set $\mathcal{C}$, representing all possible current locations for the user,
and sets \mbox{$\mathcal{L}_\mathcal{N}$} for each \mbox{$N \in \mathcal{C}$}, representing all the possible
locations of interest (i.e., query destinations) for the user if they are indeed at current location $N$.

Algorithm~\ref{alg:narrow-possible-cls} begins, assuming that the user's current location is not compromised,
by considering all honest nodes as potential current locations on line~\ref{algline:honest}.  However, any
honest nodes from which queries consistent with the adversary's known information could not have been sent
are removed as potential current locations on line~\ref{algline:remove}.  Potential query destinations
from each potential current location obviously includes all known destinations, as computed on
line~\ref{algline:known}.  However, the adversary must compute potential destinations for queries with
unknown destinations as well, as done on lines~\ref{algline:startunknown}--\ref{algline:endunknown}.

\begin{algorithm}
\begin{algorithmic}[1]
\Function{Narrow-Possible-CLs-And-LOIs}{}
    \If {the node $C$ with which the user is communicating is compromised}
        \State \Return $\mathcal{C} = \left\{C\right\},\mathcal{L}_C = \left\{L_1,\ldots,L_k\right\}$
    \EndIf
    \State $\mathcal{N}_\mathit{honest} \gets \left\{N \in \mathcal{N}~|~N~\text{is not compromised} \right\}$\label{algline:honest}
    \State $\mathcal{C} \gets \mathcal{N}_\mathit{honest}$
    \State $\mathcal{L}_N \gets \{\}~\forall~N \in \mathcal{C}$
    \ForAll {queries $Q_i$ for which the destination is known}
        \ForAll {$N \in \mathcal{C}$}
            \If {$\Call{Possible-Consistent}{N, Q_i} = \const{False}$}
                \State Remove $N$ from $\mathcal{C}$\label{algline:remove}
            \EndIf
        \EndFor
    \EndFor
    \State $\mathcal{K} \gets \left\{L_i~|~L_i~\text{is known}\right\}$
    \State $\mathcal{U} \gets \left\{Q_i~|~L_i~\text{is unknown} \right\}$ \label{line:startComplex}
    \If {$\left|\mathcal{U}\right| > 0$}\label{algline:startunknown}
    \State $\mathcal{D} \gets \mathcal{N}_\mathit{honest} \setminus \mathcal{K}$
    \ForAll {$N \in \mathcal{C}$}
        \If {\Call{Analyse-Unknown}{$N, \mathcal{L}_N, \mathcal{D}, \mathcal{U}$} = \const{False}}
            \State Remove $N$ from $\mathcal{C}$\label{algline:endunknown}
        \EndIf
    \EndFor
    \EndIf \label{line:endComplex}
    \State $\mathcal{L}_N \gets \mathcal{L}_N \cup \mathcal{K}~\forall~N \in \mathcal{C}$\label{algline:known}
    \State \Return $\mathcal{C},\left\{\mathcal{L}_N~|~N \in \mathcal{C}\right\}$
\EndFunction
\Statex
\Function{Possible-Consistent}{$\mathit{src}, \mathit{Q}$}
    \State $L \gets$ the destination of query $Q$
    \If {$\exists$ a possible route from $\mathit{src}$ to $L$, consistent with everything known about $Q$}
        \State \Return \const{True}
    \Else
        \State \Return \const{False}
    \EndIf
\EndFunction
\end{algorithmic}
\caption{Pseudocode for finding all possible current locations and locations of interest for the user.} \label{alg:narrow-possible-cls}
\end{algorithm}

\begin{algorithm}
\begin{algorithmic}[1]
\Function{Analyse-Unknown}{$N, \mathcal{L}_N, \mathcal{D}, \mathcal{U}$}
    \State $\mathit{foundAssignment} \gets \const{False}$
    \For {$\mathcal{D}^\prime \gets$ each of the $\binom{\left|\mathcal{D}\right|}{\left|\mathcal{U}\right|}$ choices of $\left|\mathcal{U}\right|$ destinations from $\mathcal{D}$}
        \ForAll {$\left|\mathcal{D}^\prime\right|!$ assignments of nodes in $\mathcal{D}^\prime$ as destinations for $\mathcal{U}$}\label{algline:allpossible}
            \If {$\Call{Possible-Consistent}{N, Q} = \const{True}~\forall~Q \in \mathcal{U}$}
                \State $\mathit{foundAssignment} \gets \const{True}$
                \State $\mathcal{L}_N \gets \mathcal{L}_N \cup \mathcal{D}^\prime$\label{algline:addunknown}
            \EndIf
        \EndFor
    \EndFor
    \State \Return $\mathit{foundAssignment}$
\EndFunction
\end{algorithmic}
\caption{A na\"{i}ve implementation of the analysis of queries with unknown destinations} \label{alg:unknown-naive}
\end{algorithm}

The pseudocode presented in Algorithm~\ref{alg:unknown-naive} illustrates the most natural way of
thinking about how the adversary can use information gained about queries with unknown destinations.
The adversary considers all possible assignments of destinations to all queries with unknown locations on
line~\ref{algline:allpossible}, adding any possible destinations to the pool of potential destinations for the
current locations on~\ref{algline:addunknown}.  If there are no possible assignments of destinations to
queries with unknown destinations, the potential current location is removed from the pool of current location
possibilities by a return of $\const{False}$ to Algorithm~\ref{alg:narrow-possible-cls}.
However, the na\"ive implementation of this pseudocode would be highly inefficient.  The following\footnote{All proofs requiring more than a few lines are collected in Appendix \ref{longproofs}.} lemma allows the adversary to implement the algorithm far more efficiently.

\begin{lemma}\label{lemma:set-union}
Given a family of sets of integers, \mbox{$\left\{\mathcal{S}_1, \ldots, \mathcal{S}_m\right\}$}
where \mbox{$\mathcal{S}_i \subseteq \mathbb{Z}$}, we call a set
\mbox{$\mathcal{G}_m = \left\{g_1, \ldots, g_m\right\}$} a \emph{sampling} if
\mbox{$g_i \in \mathcal{S}_i$} for all \mbox{$1 \le i \le m$}, and \mbox{$g_i \neq g_j$} for all \mbox{$i \neq j$}.
Let \mbox{$\mathcal{G}_m^\prime = \left\{\mathcal{G}_m~|~\mathcal{G}_m~\text{is a sampling}\right\}$}, i.e.,
$\mathcal{G}_m^\prime$ is a set of sets, where each
\mbox{$\mathcal{G}_m \in \mathcal{G}_m^\prime$} is a set of $m$ unique integers, with each integer coming from
a different one of the $\mathcal{S}_i$, \mbox{$1 \le i \le m$}.
Then either \mbox{$\mathcal{G}_m^\prime = \emptyset$} or
\begin{equation*}
\bigcup_{\mathcal{G}_m \in \mathcal{G}_m^\prime}\mathcal{G}_m = \bigcup_{1 \le i \le m}\mathcal{S}_i\;.
\end{equation*}
\end{lemma}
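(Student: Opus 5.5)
The plan is to prove the two inclusions separately. Throughout we assume $\mathcal{G}_m^\prime \neq \emptyset$, since otherwise the first alternative of the statement holds and there is nothing to prove. The key observation is that the lemma only claims each element of $\bigcup_i \mathcal{S}_i$ appears in \emph{some} sampling, in \emph{some} position. It does not claim that it appears as the representative of the particular set $\mathcal{S}_i$ it was drawn from. This weaker claim is what makes a one-step exchange argument sufficient.

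For the inclusion $\subseteq$: every sampling $\mathcal{G}_m = \{g_1,\ldots,g_m\}$ has $g_i \in \mathcal{S}_i$ by definition. Hence $\mathcal{G}_m \subseteq \bigcup_{1\le i\le m}\mathcal{S}_i$, and taking the union over all samplings gives the inclusion.

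For the inclusion $\supseteq$: fix an arbitrary $x \in \mathcal{S}_i$ for some $i$, and pick any sampling $\mathcal{G}_m = \{g_1,\ldots,g_m\} \in \mathcal{G}_m^\prime$, which exists by assumption. There are two cases.
\begin{itemize}
\item If $x \in \mathcal{G}_m$, then $x$ already lies in the union, and we are done.
\item If $x \notin \mathcal{G}_m$, define $\mathcal{G}_m^{\ast}$ by replacing $g_i$ with $x$, that is, $\mathcal{G}_m^{\ast} = \{g_1,\ldots,g_{i-1},x,g_{i+1},\ldots,g_m\}$.
\end{itemize}
In the second case we check that $\mathcal{G}_m^{\ast}$ is again a sampling. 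Membership holds because $x \in \mathcal{S}_i$ and $g_j \in \mathcal{S}_j$ for every $j \neq i$. Distinctness among the $g_j$ with $j\neq i$ is inherited from $\mathcal{G}_m$. Finally, $x \neq g_j$ for every $j$, because $x \notin \mathcal{G}_m$. So $\mathcal{G}_m^{\ast} \in \mathcal{G}_m^\prime$ and $x \in \mathcal{G}_m^{\ast}$, which places $x$ in the union over all samplings.

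I do not expect a genuine obstacle. The one point to watch is the interpretation just described. The stronger, position-specific claim would be false: for $\mathcal{S}_1=\{1,2,3\}$ and $\mathcal{S}_2=\mathcal{S}_3=\{1,2\}$, the element $1$ is never chosen as $g_1$. It does, however, appear as $g_2$ or $g_3$, so the set-union claim survives. The exchange step must therefore be split on whether $x$ already occurs anywhere in the sampling, rather than trying to force $x$ into position $i$ through an augmenting-path argument.

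Finally, for the application in Algorithm~\ref{alg:unknown-naive}, we would remark on how the lemma gets used. Take $\mathcal{S}_j$ to be the set of destinations in $\mathcal{D}$ from which $Q_j\in\mathcal{U}$ is possibly consistent from $N$. Then the lemma says that $\mathcal{L}_N$ only needs one feasibility test, namely the existence of a system of distinct representatives, which a bipartite matching computes. After that test, $\mathcal{L}_N$ is simply $\bigcup_j \mathcal{S}_j$, with no enumeration of assignments.
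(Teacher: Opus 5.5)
Your proof is correct, but it is organized differently from the paper's. The paper inducts on $m$. Assuming the claim for $\mathcal{S}_1,\ldots,\mathcal{S}_j$, it notes that any $e\in\mathcal{S}_{j+1}\setminus\bigcup_{i\le j}\mathcal{S}_i$ can be appended to an arbitrary $j$-sampling, and from this concludes the equality for $j+1$. You instead fix one existing sampling and an arbitrary $x\in\mathcal{S}_i$ and perform a single swap. This needs no induction and treats every element of $\bigcup_i\mathcal{S}_i$ the same way.

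The difference is more than stylistic. The paper's inductive step only explicitly handles the new elements of $\mathcal{S}_{j+1}$. For an old element $x\in\bigcup_{i\le j}\mathcal{S}_i$, the hypothesis merely places $x$ in some $j$-sampling, and that $j$-sampling need not extend to a $(j+1)$-sampling. For example, with $\mathcal{S}_1=\{1,2\}$ and $\mathcal{S}_2=\{1\}$, the $1$-sampling $\{1\}$ cannot be extended, even though $1$ does lie in the $2$-sampling $\{2,1\}$. Filling that case requires exactly your exchange step, applied to some $(j+1)$-sampling, and at that point the induction becomes redundant.

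Your explicit proof of the easy inclusion and your observation that the position-specific version is false are also worthwhile. The latter explains why the lemma is a statement about the union of all samplings rather than about which $\mathcal{S}_i$ each element represents. That set-level statement is also all the matching-based algorithm needs.
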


Given a potential current location \mbox{$N \in \mathcal{C}$} and queries with unknown destinations $\mathcal{U}$,
we can compute potential destinations \mbox{$\mathcal{S}_1, \ldots, \mathcal{S}_{|\mathcal{U}|}$} for each of
the queries.
We can then test if it is possible to choose a unique destination for each query, i.e., choose
\mbox{$s_i \in \mathcal{S}_i$}
with \mbox{$s_i \neq s_j$} for all \mbox{$i \neq j$}, where \mbox{$1 \le i,j \le |\mathcal{U}|$}.
If it is possible to choose \emph{even a single set} of unique destinations,
then Lemma~\ref{lemma:set-union} tells us that the set of all
potential destinations for all the queries
with unknown destinations is \mbox{$\cup_{1 \le i \le |\mathcal{U}|}\mathcal{S}_i$}, i.e., no potential
destinations can be excluded.
Testing whether it is possible to choose a unique destination from each $\mathcal{S}_i$ can be accomplished
by creating a graph \mbox{$G = \langle(X, Y), E\rangle$}, where $X$ contains a vertex for each of the
\mbox{$|\mathcal{U}|$} sets \mbox{$\mathcal{S}_i$}, $Y$ contains a vertex for each potential destination in
\mbox{$\cup_{1 \le i \le |\mathcal{U}|}\mathcal{S}_i$}, and $E$ contains an edge between the
vertex in $X$ representing $\mathcal{S}_i$ and the
vertex in $Y$ representing $s$ if and only if \mbox{$s \in \mathcal{S}_i$}.
It is possible to choose a unique destination from each $\mathcal{S}_i$ if and only if there is a bipartite matching
on $G$ of size $|\mathcal{U}|$, and testing for a bipartite matching is a
well-studied problem~\mbox{\cite[pp.~664--669]{cormen:algorithms}}.
A revised, efficient algorithm for analyzing queries with unknown destinations is illustrated as
Algorithm~\ref{alg:unknown-efficient}.

\begin{algorithm}
\begin{algorithmic}[1]
\Function{Analyse-Unknown}{$N, \mathcal{L}_N, \mathcal{D}, \mathcal{U}$}
    \State $\mathcal{S}_i \gets \{\}~~\forall~1 \le i \le |\mathcal{U}|$
    \ForAll {$U_i \in \mathcal{U}$}
        \ForAll {$D \in \mathcal{D}$}
            \If {\Call{Possible-Consistent-Assume}{$N, D, U_i$}}
                \State Add $D$ to $\mathcal{S}_i$
            \EndIf
        \EndFor
    \EndFor
    \State $X \gets$ a vertex for each $\mathcal{S}_i, 1 \le i \le |\mathcal{U}|$
    \State $Y \gets$ a vertex for each destination in $\cup_{1 \le i \le |\mathcal{U}|}\mathcal{S}_i$
    \State $E \gets$ edges between vertex representing $\mathcal{S}_i$ and vertex representing $D$ iff $D \in \mathcal{S}_i$
    \If{there exists a bipartite matching of size $|\mathcal{U}|$ on $G=\langle(X,Y),E\rangle$}
        \State $\mathcal{L}_N \gets \mathcal{L}_N \cup \mathcal{S}_1 \cup \ldots \cup \mathcal{S}_{|\mathcal{U}|}$
        \State \Return \const{True}
    \Else
        \State \Return \const{False}
    \EndIf
\EndFunction
\Statex
\Function{Possible-Consistent-Assume}{$\mathit{src}, L, \mathit{Q}$}
    \State Assume, just for now, that the destination of $Q$ is $L$
    \If {$\exists$ a possible route from $\mathit{src}$ to $L$, consistent with all known/assumed about $Q$}
        \State \Return \const{True}
    \Else
        \State \Return \const{False}
    \EndIf
\EndFunction
\end{algorithmic} 
\caption{An efficient implementation of the analysis of queries with unknown destinations}\label{alg:unknown-efficient}
\end{algorithm}

Given the adversary's ability to narrow down the user's possible current locations to a set $\mathcal{C}$
and possible locations of interest to a set \mbox{$\mathcal{L}_N$} for each \mbox{$N \in \mathcal{C}$},
a simple yet expressive way to measure the security of a user's location of interest and current location
is the probability that the adversary will guess the LOI and CL incorrectly, respectively.
As a simple example, if the adversary knows the value of $k$
(a necessary assumption to make, as it is unsafe to assume otherwise), and if the
adversary overhears all $k$ queries in a uniform, featureless environment
(i.e., all $n$ potential query destinations in the network are equally probable destinations),
then the user's resulting LOI-privacy will be $\frac{k-1}{k}$, because the adversary can
merely guess at random that one of the $k$ destinations is the correct one.

More formally, we begin by assuming the adversary knows~---~from a global, aggregate perspective~---~the
probability that a given node in the network is a user's current location, and the probability that each node is
a user's location of interest given a known current location.  We denote these probabilities as
\probC{N} and \probL{M}{N} respectively.  For now, we assume that
\mbox{$\probC{N} = \probL{M}{N} = \frac{1}{n}$} for all nodes $N$ and $M$ in the network.
We will revisit this assumption in Section~\ref{sect:ch2-mobility}.

First, the adversary corrects the probability of each node being the current location, based on the set of possible
current locations.  That is, the adversary computes
\begin{equation*}
\probpC{N} =\begin{cases}
\displaystyle\frac{\probC{N}}{\sum_{P \in \mathcal{C}}{\probC{P}}} & \mbox{if $N \in \mathcal{C}$}\\
0 & \mbox{if $N \not\in \mathcal{C}$}\end{cases}
\end{equation*}
for each \mbox{$N \in \mathcal{C}$}.  From the computation of this corrected probability, we naturally
get the formal definition of CL-privacy, which is the probability of the adversary incorrectly guessing the
user's current location.  Define
\begin{equation*}
C^* = \argmax_{N \in \mathcal{C}} \left\{ \probpC{N} \right\} \;,
\end{equation*}
i.e., \mbox{$C^*$} is the adversary's best guess as to the user's current location.
The formal definition of CL-privacy can then be stated as
\begin{equation*}
\proc{M}_\mathit{CL} = 1 - \probpC{C^*} \;.
\end{equation*}

The user's LOI-privacy is computed in a similar fashion: the probability that the adversary incorrectly guesses
the user's actual location of interest.
First, note that the user's actual location of interest may be a destination of one of the $k$ queries where the
destination is known to the adversary (e.g., if a compromised node routed the query), or it may be the
destination of a query where the destination is unknown to the adversary.  Based on the value of
\probL{M}{N}, an adversary may actually guess that the user's real location of interest is a
location that is not known to be a query destination, if the adversary does not know the destination of all $k$ queries.
To account for this possibility, we first need to apply a ``corrective weight''
to the probability that a given
location \mbox{$M \in \mathcal{N}$} is the user's real LOI, assuming that another location
\mbox{$N \in \mathcal{N}$} is their current location.
Let $\mathcal{K}$ be the set of all query destinations that are known to
the adversary (i.e., \mbox{$\mathcal{K} \subseteq \mathcal{L}$}), then denote
\begin{equation*}
\chi(N, M) =\begin{cases}
1 & \mbox{if $M \in \mathcal{K}$} \\
0 & \mbox{if $M \not\in \mathcal{K}$ and $|\mathcal{L}_N| = |\mathcal{K}|$} \\
\displaystyle\frac{k - |\mathcal{K}|}{|\mathcal{L}_N| - |\mathcal{K}|} & \mbox{otherwise.}\end{cases}
\end{equation*}

Using $\chi$ as a corrective weight, we can compute the weighted probability that $M$ is the user's location
of interest, given that $N$ is their current location and their location of interest is in \mbox{$\mathcal{L}_N$}, as
\begin{equation*}
\phi(N, M) =\begin{cases}
\displaystyle\frac{\probL{M}{N} \cdot \chi(N, M)}
{\sum_{P \in \mathcal{L}_N}{\probL{P}{N} \cdot \chi(N, P)}}
& \mbox{if $M \in \mathcal{L}_N$} \\
0 & \mbox{$if M \not\in \mathcal{L}_N$\;,}\end{cases}
\end{equation*}
for each \mbox{$N \in \mathcal{C}$} and \mbox{$M \in \mathcal{L}_N$}.
We subsequently define \mbox{$\psi(N, M)$} as the probability that both
$N$ is the current location and $M$ is the real location of interest.  Namely: $\psi(N, M) = \probpC{N} \cdot \phi(N,M) \;.$

We assume that the goal for the adversary is to choose the most likely real LOI, independent of the adversary's
guess about the user's current location (we return to this assumption shortly).
In this case, the adversary would compute
the probability that a node $M$ is the user's real location of interest over all possible current locations as
\begin{equation*}
\omega(M) = \sum_{N \in \mathcal{C}}{\psi\left( N, M \right)} \;,
\end{equation*}
for all possible LOIs $M$ (i.e., for any \mbox{$M \in \cup_{N \in \mathcal{C}}{\mathcal{L}_N}$}).
The adversary then guesses the real LOI as \mbox{$L^* = \argmax_M \omega(M)$}.  As such,
the user's resulting LOI-privacy is
\begin{equation*}
\proc{M}_\mathit{LOI} = 1 - \omega\left(L^*\right) \;.
\end{equation*}
Given the above definitions, it is worth stating two consequences, posed as the following lemmas. 

\begin{lemma}\label{lemma:equal}
If the adversary knows that any user's real location of interest is in fact their current location, i.e.,
\begin{equation*}
\probL{N}{M} =\begin{cases}
0 & \mbox{if $N \neq M$} \\
1 & \mbox{if $N = M$\;,}\end{cases}
\end{equation*}
then \mbox{$\proc{M}_\mathit{CL} = \proc{M}_\mathit{LOI}$}.

\end{lemma}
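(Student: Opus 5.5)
The plan is to evaluate the chain $\chi \to \phi \to \psi \to \omega$ directly under the point-mass prior $\probL{M}{N} = 1$ if $M = N$ and $0$ otherwise. The goal is to show that $\omega$ coincides, as a function on nodes, with the corrected current-location distribution $\probpC{\cdot}$. The two maxima, and hence $\proc{M}_\mathit{CL}$ and $\proc{M}_\mathit{LOI}$, then agree.

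First, fix $N \in \mathcal{C}$ and look at $\phi(N, M)$. In the sum $\sum_{P \in \mathcal{L}_N} \probL{P}{N}\,\chi(N,P)$, every term with $P \neq N$ vanishes, so the denominator is $\chi(N,N)$ whenever $N \in \mathcal{L}_N$. The numerator for $M \in \mathcal{L}_N$ is likewise $0$ unless $M = N$. Hence $\phi(N,M) = 1$ if $M = N$ and $0$ otherwise, independent of the exact value of $\chi(N,N)$, provided it is nonzero. Consequently $\psi(N,M) = \probpC{N}$ when $M = N$ and $0$ otherwise. Summing over $N \in \mathcal{C}$ then gives $\omega(M) = \probpC{M}$ for every $M$, with $\omega(M) = 0$ for $M \notin \mathcal{C}$. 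Therefore $\max_M \omega(M) = \max_{N \in \mathcal{C}} \probpC{N}$, so $\omega(L^*) = \probpC{C^*}$ and $\proc{M}_\mathit{LOI} = 1 - \omega(L^*) = 1 - \probpC{C^*} = \proc{M}_\mathit{CL}$. The argmax ties do not matter, since only the maximal values enter the two metrics.

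The main obstacle is the degenerate case where the denominator of $\phi(N,\cdot)$ is zero. This happens when $N \notin \mathcal{L}_N$, or when $N \in \mathcal{L}_N \setminus \mathcal{K}$ but $\chi(N,N) = 0$ (that is, $|\mathcal{L}_N| = |\mathcal{K}|$). Such a candidate $N$ is incompatible with the prior that the LOI equals the CL, so $\phi(N,\cdot)$ is undefined there. I would handle this by observing that an adversary holding this prior would discard such $N$ from $\mathcal{C}$, just as Algorithm~\ref{alg:narrow-possible-cls} discards inconsistent origins. After that pruning, $\probpC{\cdot}$ is renormalised over the surviving set, and both metrics are computed on the same set. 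In the special case that $C$ is compromised, $\mathcal{C} = \{C\}$ and $\mathcal{L}_C = \mathcal{L} \ni L = C$, with $\chi(C,C) = 1$; here both metrics are $0$ and the identity holds trivially.

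I would also check that the true current location always survives this pruning. The true $C$ lies in $\mathcal{C}$, and $L = C$ is a queried destination, so $C \in \mathcal{L}_C$. If $C \in \mathcal{K}$ then $\chi(C,C) = 1$. Otherwise at least one destination is unknown, so $k > |\mathcal{K}|$, and Lemma~\ref{lemma:set-union} ensures that $\mathcal{L}_C$ strictly contains $\mathcal{K}$; hence $\chi(C,C) > 0$. The pruned set is therefore nonempty, and the computation above goes through on it.
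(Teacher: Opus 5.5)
Your evaluation of $\phi$, $\psi$ and $\omega$ is the same computation as in the paper. Your handling of the degenerate case, however, leaves a genuine gap. The metrics in the lemma are defined with respect to the set $\mathcal{C}$ returned by Algorithm~\ref{alg:narrow-possible-cls}, and that set does not depend on the prior $\probGlobal(L = M \mid C = N)$. If you discard candidates $N$ with $N \notin \mathcal{L}_N$ and renormalise, you change the normalisation of the current-location distribution and possibly $C^*$. You then end up proving $\proc{M}_\mathit{CL} = \proc{M}_\mathit{LOI}$ for a different, pruned adversary. Checking that the true $C$ survives only shows the pruned set is nonempty, not that it equals $\mathcal{C}$. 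Your second degenerate sub-case is in fact vacuous: $\mathcal{K} \subseteq \mathcal{L}_N$ by construction, so $N \in \mathcal{L}_N \setminus \mathcal{K}$ forces $|\mathcal{L}_N| > |\mathcal{K}|$.

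The missing idea is that the degenerate case never arises, which is the one non-mechanical step in the paper's proof: every $N \in \mathcal{C}$ satisfies $N \in \mathcal{L}_N$.
\begin{itemize}
\item If $C$ is compromised, this is your trivial case.
\item If $C$ is honest, the real query is addressed to $C$ itself, so it is never transmitted inside the network, and the user--$C$ link is not overheard. Hence it belongs to $\mathcal{U}$ (so $\mathcal{U} \neq \emptyset$ and $k > |\mathcal{K}|$). The only thing the adversary knows about it is that no compromised node routed it.
\item Any candidate $N \in \mathcal{C}$ is honest. Either $N \in \mathcal{K} \subseteq \mathcal{L}_N$, with $\chi(N,N) = 1$, or $N \in \mathcal{D}$.
\item In the second case, the trivial route $(N)$ from $N$ to itself is consistent with everything known about that query, so $N$ lies in the corresponding $\mathcal{S}_i$.
\item Since $N$ survived the matching test, Algorithm~\ref{alg:unknown-efficient} (equivalently Algorithm~\ref{alg:unknown-naive}, by Lemma~\ref{lemma:set-union}) adds $\mathcal{S}_i$ to $\mathcal{L}_N$. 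Then $\chi(N,N) = (k-|\mathcal{K}|)/(|\mathcal{L}_N| - |\mathcal{K}|) > 0$.
\end{itemize}
So the denominator of $\phi(N,\cdot)$ never vanishes and no pruning is needed. Moreover $\mathcal{C} \subseteq \bigcup_{N \in \mathcal{C}} \mathcal{L}_N$, which is exactly what justifies restricting $\max_M \omega(M)$ to $\mathcal{C}$ in your final step. With this argument in place of the pruning, your proof coincides with the paper's.
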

\begin{lemma}\label{lemma:compromised}
If the user's current location is compromised, then
\begin{equation*}
\proc{M}_\mathit{CL} = 0
\end{equation*}
and
\begin{equation*}
\proc{M}_\mathit{LOI} =
1 - \max_{M \in \mathcal{L}}{\probGlobal\left(L = M~|~C = C^*~\text{and}~L \in \mathcal{L}\right)} \;.
\end{equation*}
Consider a uniform, featureless environment~---~that is, an environment
in which, from any current location, the probability of each location
being any user's real LOI is $\frac{1}{n}$.
If the user's current location is compromised
in a uniform, featureless environment, then
\begin{equation*}
\proc{M}_\mathit{LOI} = (k-1)/k\;. %
\end{equation*}
\end{lemma}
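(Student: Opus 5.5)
We unwind the definitions using the early-return branch of Algorithm~\ref{alg:narrow-possible-cls}. If the user's current location $C$ is compromised, the algorithm returns $\mathcal{C}=\{C\}$ and $\mathcal{L}_C=\{L_1,\ldots,L_k\}=\mathcal{L}$. A compromised $C$ handles every one of the user's queries, so the adversary decrypts all $k$ headers. Hence every destination is known, and $\mathcal{K}=\mathcal{L}$ with $|\mathcal{K}|=k$. (If the queried locations are required to be distinct, as the notation $\mathcal{L}=\{L_1,\ldots,L_k\}$ indicates, then $|\mathcal{L}|=k$.)

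\textbf{CL-privacy.} Since $\mathcal{C}=\{C\}$, the normalisation gives $\probpC{C}=\probC{C}/\probC{C}=1$. Therefore $C^*=C$ and $\proc{M}_\mathit{CL}=1-1=0$. The only subtlety is that $\probC{C}>0$, which we take as given: under the standing uniform assumption it equals $1/n$.

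\textbf{LOI-privacy.} Because $\mathcal{L}_C=\mathcal{K}$, the first case of $\chi$ applies to every $M\in\mathcal{L}_C$, so $\chi(C,M)=1$ and all corrective weights drop out. The formula for $\phi$ then becomes
\begin{equation*}
\phi(C,M)=\frac{\probL{M}{C}}{\sum_{P\in\mathcal{L}}\probL{P}{C}}=\probGlobal\left(L=M~|~C=C~\text{and}~L\in\mathcal{L}\right),
\end{equation*}
which is exactly the conditional probability restricted to $\mathcal{L}$. Since $\mathcal{C}$ is a singleton with $\probpC{C}=1$, we get $\omega(M)=\psi(C,M)=\phi(C,M)$. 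So $L^*$ maximises this conditional probability over $M\in\mathcal{L}$, and $\proc{M}_\mathit{LOI}=1-\max_{M\in\mathcal{L}}\probGlobal(L=M~|~C=C^*~\text{and}~L\in\mathcal{L})$, using $C^*=C$.

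\textbf{Uniform case.} If $\probL{M}{C}=1/n$ for all $M$, each term of the conditional probability equals $(1/n)/(k/n)=1/k$. The maximum is then $1/k$, giving $\proc{M}_\mathit{LOI}=(k-1)/k$. This step needs the $k$ destinations to be distinct, so that $|\mathcal{L}|=k$.

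\textbf{Main obstacle.} The only real point to justify is that compromising $C$ reveals all $k$ destinations, so that $\mathcal{K}=\mathcal{L}_C$ and $\chi\equiv1$. This follows from the adversary model: a compromised node yields the full contents of every message it routes, and $C$ originates every query. Everything else is direct substitution into the definitions.
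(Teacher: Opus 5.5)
Your proposal is correct and follows essentially the same route as the paper's proof: from $\mathcal{C}=\{C\}$ and $\mathcal{L}_C=\mathcal{L}=\mathcal{K}$ you get $\chi\equiv 1$, reduce $\omega(M)$ to $\phi(C^*,M)$, and substitute the uniform probabilities to get a maximum of $1/k$. Your side remarks, that the prior at $C$ must be positive and the $k$ destinations distinct, make explicit assumptions the paper uses without stating them.
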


Previously, we assumed that the adversary would guess the user's current location and real location of interest
independently of one another (i.e., a best attempt to get at least one, if not both, of those two locations correct).
In some contexts, however (e.g., intercepting a user who is moving from their current location to a location of
interest), it might be in the interest of the adversary to choose the most likely
\mbox{$\langle C^*, L^* \rangle$} pair.  That is, the adversary wants to maximize their chances of guessing
\emph{both} the user's CL and LOI simultaneously correctly.  Such an adversary would choose $C^*$ and
$L^*$ as $\argmax_{\langle C^*, L^* \rangle}{\psi\left(C^*, L^*\right)} \;,$
and the computation of the user's CL- and LOI-privacy would remain the same as $\proc{M}_\mathit{CL} = 1 - \probpC{C^*}$ and $\proc{M}_\mathit{LOI} = 1 - \omega\left(L^*\right)$ respectively.
For the purposes of this paper, however, we restrict our investigation to an adversary who guesses
the user's current location and location of interest independently,
as such an adversary is stronger.

\subsection{Privacy of the Area of Interest}\label{sect:ch2-aoi}

If it is equally probable that any of the $k$ queried nodes is the real LOI (i.e., in the featureless environment),
then the adversary
cannot learn which of the $k$ nodes queried is the real LOI.
Given this assumption,
what is actually meant when we discuss protecting the
privacy of the user's LOI?  Consider the two scenarios in Figure~\ref{fig:aoi}. 
In both cases, the user's real LOI is the starred node in the northeast, and
\mbox{$k-1=3$} dummy LOIs are chosen to disguise the real LOI.  However, in
Figure~\ref{fig:aoi-poor}, all four LOIs are clustered in the east.  While an
adversary who overhears these queries would not know which node is of interest to
the user, it would be obvious that the user is interested in the eastern region
of the network.  In Figure~\ref{fig:aoi-good}, the four LOIs are dispersed
throughout the network, obfuscating the area of the network in which the
user is interested.  We need to define a measure of how dispersed the $k$ LOI
choices are.  That is, we want to measure how well-protected the privacy of the
user's area of interest~(AOI) is.

\begin{figure*} 
\centering 
\begin{subfigure}[b]{0.48\textwidth}
\centering
\includegraphics[width=\textwidth]{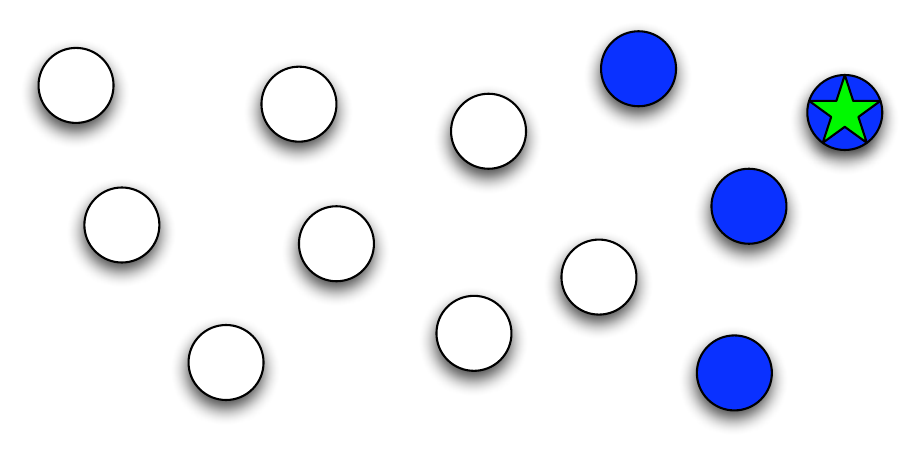}
\caption{A poor choice.}
\label{fig:aoi-poor}
\end{subfigure}
\hfill
\begin{subfigure}[b]{0.48\textwidth}
\centering
\includegraphics[width=\textwidth]{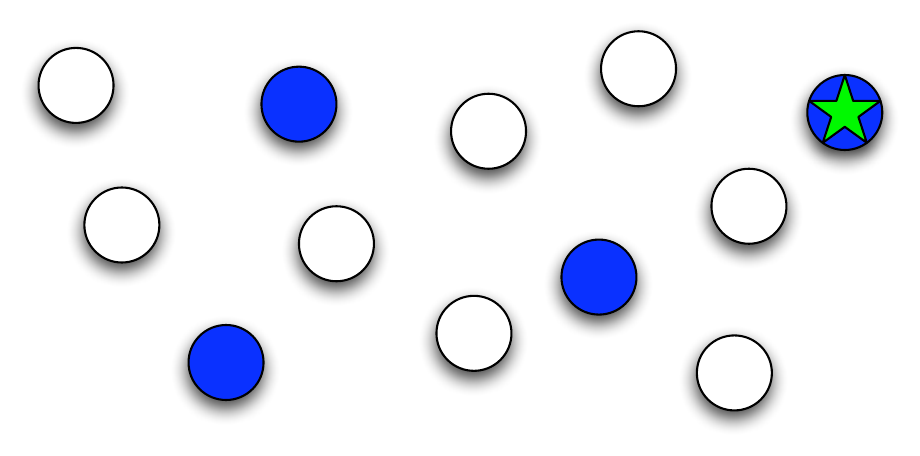}
\caption{A good choice.}
\label{fig:aoi-good}
\end{subfigure}
\caption{A comparison of two choices for \mbox{$k-1=3$} dummy LOIs (dark nodes) given one fixed real LOI (starred node).} 
\label{fig:aoi} 
\end{figure*}

To measure how well the set $\mathcal{L}$ of $k$ LOIs preserves the user's
AOI-privacy, we define a function \mbox{$\sigma\left(\mathcal{L},
\mathcal{N}\right)$} to measure how dispersed the LOIs are over the network
$\mathcal{N}$.  To allow for comparisons of different methods of choosing the
dummy LOIs over networks with different topologies, we normalize the score
returned by $\sigma$.  Let \mbox{$\sigma_\mathit{min}(k, \mathcal{N})$} and
\mbox{$\sigma_\mathit{max}(k,\mathcal{N})$} be the minimal and maximal values
returned by $\sigma$, over all \mbox{$\binom{n}{k}$} possible sets of $k$ LOIs. 
The normalized measure of AOI-privacy is defined as
\begin{equation*}
\proc{M}_\mathit{AOI}\left(\mathcal{L}, \mathcal{N}\right) =\begin{cases}
     1 & \mbox{if $\sigma_\mathit{min}\left(k, \mathcal{N}\right) = \sigma_\mathit{max}\left(k, \mathcal{N}\right)$} \\
     \dfrac{\sigma\left(\mathcal{L}, \mathcal{N}\right) - \sigma_\mathit{min}\left(k, \mathcal{N}\right)}{\sigma_\mathit{max}\left(k, \mathcal{N}\right) - \sigma_\mathit{min}\left(k, \mathcal{N}\right)} & \mbox{otherwise.}
\end{cases}
\end{equation*}
The function $\sigma$ must have the property that it returns large values for
sets $\mathcal{L}$ with minimal clustering of the LOIs, and small values
otherwise.  Ideally, it is also easy to compute on the user's low-powered, mobile
device prior to issuing a query.  Next, we discuss some alternatives for
$\sigma$.

\subsubsection{Variance-Based $\sigma$.}

One straightforward approach is to compute a variance-like quantity for the
positions of the nodes in $\mathcal{L}$, measuring the squared distance between
LOIs:
\begin{equation*}
\sigma\left(\mathcal{L}, \mathcal{N}\right) = \sum_{i=1}^{k-1} \sum_{j=i+1}^{k} \proc{D}\left(L_i, L_j\right)^2 \;,
\end{equation*}
where \proc{D} is the Euclidean distance between two nodes.  However, this
definition of $\sigma$ does not penalize clustering properly.  Consider the
examples in Figure~\ref{fig:variance}.  The choice of dummy LOIs in
Figure~\ref{fig:variance-high} has two of the LOIs clustered together.  In
Figure~\ref{fig:variance-low} the three LOIs are dispersed evenly.  However, the
$\sigma$ value for the first choice is higher.
This result is similar to how the variance for the data set \mbox{$\left\{1,2,5\right\}$} is higher than
the variance for the data set \mbox{$\left\{1,3,5\right\}$}.

\begin{figure*} 
\centering 
\begin{subfigure}[b]{0.48\textwidth}
\centering
\includegraphics[width=\textwidth]{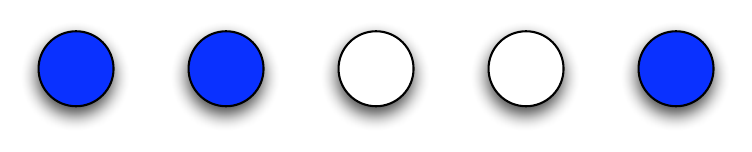}
\caption{A choice of LOIs that results in two clusters.}
\label{fig:variance-high}
\end{subfigure}
\hfill
\begin{subfigure}[b]{0.48\textwidth}
\centering
\includegraphics[width=\textwidth]{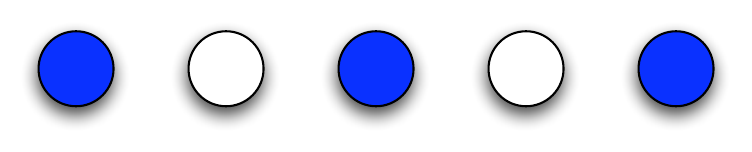}
\caption{A choice of LOIs that disperses the LOIs across the entire network.}
\label{fig:variance-low}
\end{subfigure}
\caption{A comparison of two choices for \mbox{$k=3$} LOIs in a five-node network.} 
\label{fig:variance} 
\end{figure*}

\subsubsection{Union of Circles $\sigma$.}\label{sect:circles}

Another approach is to determine how much overlap exists among the regions around
each LOI.  Specifically, one can draw a circle of radius $r$ around each LOI and
let $\sigma$ be the area of the union of the $k$ circles.  The more clustered the
LOIs, the more overlap there would be among the circles, resulting in a smaller
area.  An example is presented in Figure~\ref{fig:monte}.  With $k$ clustered
LOIs, shown in Figure~\ref{fig:monte-bad}, there is significant overlap among the
circles, unlike with the non-clustered choice shown in
Figure~\ref{fig:monte-good}.

\begin{figure*} 
\centering 
\begin{subfigure}[b]{0.48\textwidth}
\centering
\includegraphics[width=\textwidth]{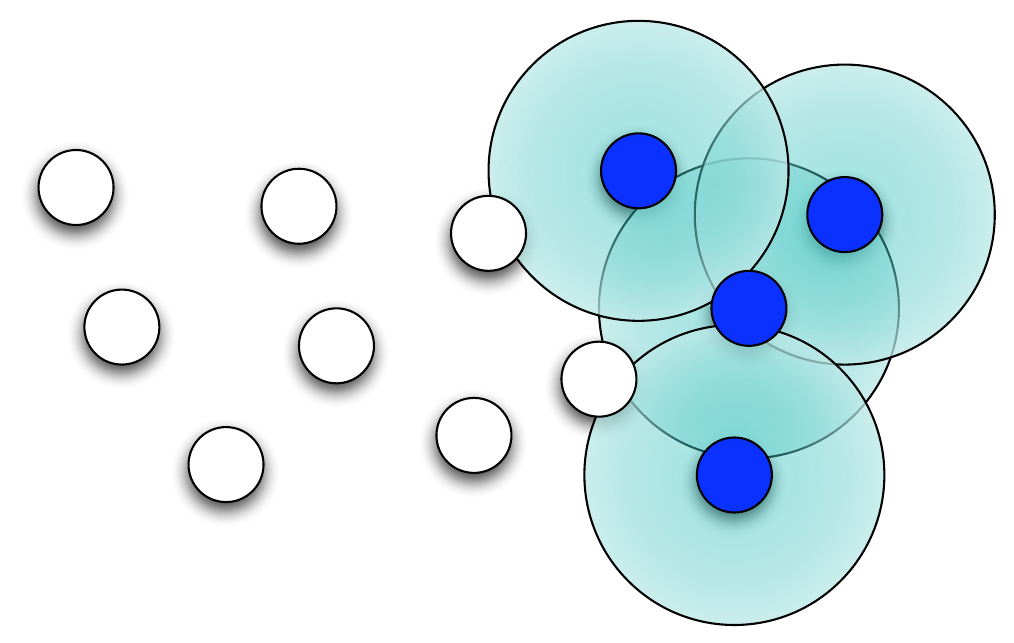}
\caption{A choice of LOIs with significant overlap among the circles.}
\label{fig:monte-bad}
\end{subfigure}
\hfill
\begin{subfigure}[b]{0.48\textwidth}
\centering
\includegraphics[width=\textwidth]{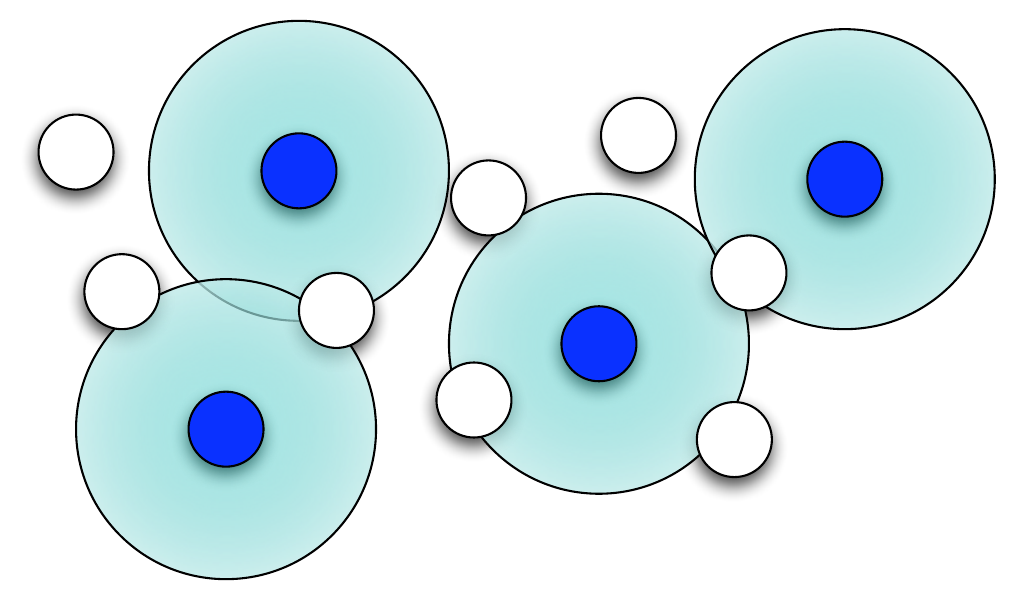}
\caption{A choice of LOIs with minimal overlap among the circles.}
\label{fig:monte-good}
\end{subfigure}
\caption{A comparison of two choices for \mbox{$k=4$} LOIs in a network, each surrounded by a circle with a fixed, arbitrary radius.} 
\label{fig:monte} 
\end{figure*}

One way to choose $r$ is to define $d$ as the maximum distance between any two
nodes in $\mathcal{N}$, then set \mbox{$r = \frac{d}{2}$}.  This definition
ensures that there is no overlap between the regions of the two nodes that
are farthest apart, but the regions surrounding any two nodes that are
closer together will overlap.  Alternately, $r$ could be chosen as the maximal size
such that $k$ nodes could be chosen from $\mathcal{N}$ with no overlap among the
$k$ associated circles.

While this $\sigma$ properly penalizes clustering, it is unclear what an ideal value
for $r$ would be.  Ultimately, we do not investigate this issue further, because
this $\sigma$ is unnecessarily computationally expensive, compared to
the other $\sigma$ alternatives mentioned subsequently.  Note, however, that
there are known methods for computing the area of the union of circles,
including an \mbox{$\mathcal{O}(k^2)$} deterministic algorithm and an
\mbox{$\mathcal{O}(k)$} Monte Carlo algorithm~\cite{spirakis:circles}.

\subsubsection{Minimal Pairwise Distance $\sigma$.}\label{sect:minimalpairwise}

A simple computation that penalizes two LOIs being close together is to measure the minimal
distance between any two LOIs.  Formally, let
\begin{equation*}
\sigma\left(\mathcal{L}, \mathcal{N}\right) = \min_{j \neq i}\left\{ \proc{D}\left(L_i, L_j\right)\right\} \;,
\end{equation*}
where \proc{D} is the Euclidean distance between two nodes.  The closer the two closest $L_i$ become,
the lower the value of $\sigma$.  Unfortunately, while this approach properly punishes having two queried nodes
close together, it does not distinguish between a mediocre situation (e.g., having two of the $L_i$ close together) and
a bad situation (e.g., having all of the $L_i$ close together).  Consider the situation in Figure~\ref{fig:pairwise}.
In Figure~\ref{fig:pairwise-bad}, the situation is not ideal: two of the LOIs are close together; that said, it is not a terrible
physical distribution of the queried location.  Figure~\ref{fig:pairwise-verybad}, on the other hand, illustrates a far worse
situation, in which all of the LOIs are clustered.  In both cases, however, the value of
\mbox{$\sigma\left(\mathcal{L}, \mathcal{N}\right)$} will be the same: the distance between the two closest nodes
(i.e., the two nodes in the upper right).

\begin{figure*} 
\centering 
\begin{subfigure}[b]{0.48\textwidth}
\centering
\includegraphics[width=\textwidth]{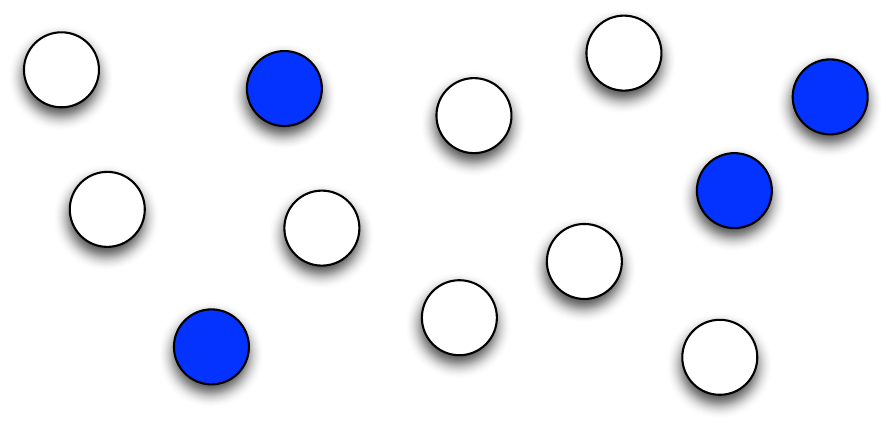}
\caption{A mediocre choice with two LOIs close together.}
\label{fig:pairwise-bad}
\end{subfigure}
\hfill
\begin{subfigure}[b]{0.48\textwidth}
\centering
\includegraphics[width=\textwidth]{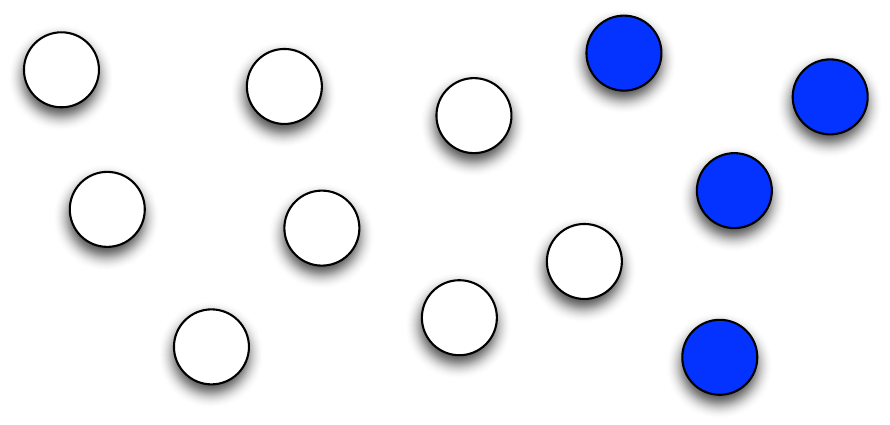}
\caption{A bad choice with all of the LOIs close together.}
\label{fig:pairwise-verybad}
\end{subfigure}
\caption{A comparison of two choices for \mbox{$k=4$} LOIs in a network, both having at least two LOIs close together.} 
\label{fig:pairwise} 
\end{figure*}

\subsubsection{Sum of Minima $\sigma$.}\label{sect:sumofmin}

Measuring the sum of minimum distances from each LOI to any other LOI penalizes
clustering properly and is easy to compute.  Formally, let
\begin{equation*}
\sigma\left(\mathcal{L}, \mathcal{N}\right) = \sum_{i=1}^{k} \min_{j \neq i}\left\{ \proc{D}\left(L_i, L_j\right) \right\} \;,
\end{equation*}
where \proc{D} is again the Euclidean distance between two nodes.  That is, the farther
away each of the $L_i$ are from each other, the higher the value of $\sigma$. 
Returning to the five-node examples in Figure~\ref{fig:variance}, this $\sigma$
returns a higher value in the scenario where the LOIs are non-clustered; and, it
can be computed quickly, given the location of every node in $\mathcal{L}$.
This $\sigma$ function is essentially a more expressive version of the minimal-pairwise-distance
version of $\sigma$, in that it not only properly punishes clustering, but also properly distinguishes
between mediocre and bad distributions of queried locations.

\bigskip \noindent
Considering these four possible $\sigma$ functions, we use the sum of minima
function for most of this paper (it will be noted explicitly when one of the
other $\sigma$ functions is used).  We chose this $\sigma$ function since we assume
that the user, prior to issuing any set of $k$ queries, will want to precompute
the AOI-privacy that will result from the queries using a mobile device with
limited processing power.  If processing power were not a concern, the union of
circles metric could be used instead (though how to best choose the radius of
the circles used in the computation remains an open question).

\section{Choosing the Dummy LOIs}\label{sect:choose-fake}

Given the metrics necessary to measure how well a set of $k$ queries preserves
the privacy of the user's area of interest~(AOI) and current location~(CL), we
now investigate how the user should choose the \mbox{$k-1$} dummy locations of
interest~(LOIs) to query, given one real LOI.  Regardless of how the dummy LOIs
are chosen, recall our assumption that, from the point of view of the adversary,
it is equally probable that any of the $k$ queried nodes is the real LOI.  Any
implementation of $k$-anonymity for network queries must take into account
real-life limitations.  For example, if troops using a military 
network queried one node at a location of strategic importance and \mbox{$k-1$}
nodes at strategically irrelevant positions, the adversary could guess the real
LOI with high probability.

Another situation that could leak the real LOI is if the user issues multiple
queries to the real LOI.  Any algorithm used to choose the \mbox{$k-1$} dummy LOIs
based on the real LOI is required to choose the same \mbox{$k-1$} dummy LOIs each
time the user issues a real query.  Otherwise, if the user first queries the $k$
nodes in $\mathcal{L}_1$ then later issues queries to the $k$ nodes in
\mbox{$\mathcal{L}_2 \neq \mathcal{L}_1$}, and if the adversary can correctly
guess that the user was issuing repeat queries to the same node (it is overly
optimistic to assume otherwise), the adversary would learn that the real LOI is
in \mbox{$\mathcal{L}_1 \cap \mathcal{L}_2$}.

Let \proc{F} be the choice function that takes the real LOI $L$ and returns a set
$\mathcal{L}$ of $k$ LOIs to query, with \mbox{$L \in \mathcal{L}$}.  If \proc{F}
is deterministic, repeat queries are of no concern, since
\mbox{$\proc{F}\left(L\right) = \mathcal{L}$} for each call to
\mbox{$\proc{F}\left(L\right)$}.  If \proc{F} is a random function, taking $L$
and a random seed $s_L$ as parameters and using a
cryptographically secure
pseudo-random number generator~\cite{blum:prng} to choose the dummy LOIs,
the user should encapsulate \proc{F} in a
deterministic function.  For example, the user could call
\mbox{$\proc{F}^\prime\left(L\right) = \proc{F}\left(L, s_L\right)$}, where
\mbox{$s_L = \proc{PRF}\left(s, L\right)$} is computed using a stored secure
seed (i.e., a key), $s$, and pdeudorandom function $\proc{PRF}$.  The pseudorandom
function could be as simple as a cryptographic hash function, e.g., SHA-2~\cite{nist:shs},
on the concatenation of $s$ and $L$, i.e.,
\mbox{$\proc{PRF}\left(s, L\right) = \proc{SHA-2-256}\left(s \circ L\right)$}, with $\circ$ representing
concatenation.  Alternately, a more robust approach could use a key-derivation function~(KDF)
as the pseudorandom function~\cite{nist:kdf}.

It is not only multiple queries to the real LOI that pose a problem; even a
single set of $k$ queries could leak the real LOI.  Knowledge of how \proc{F}
works could be sufficient for the adversary to determine the real LOI, given the
set $\mathcal{L}$ of all $k$ nodes queried.  Consider a deterministic \proc{F}
that takes the real LOI $L$ and returns a list $\mathcal{L}$ that contains $L$
and the \mbox{$k-1$} dummy LOIs, such that \mbox{$\sigma(\mathcal{L},
\mathcal{N})$} is maximized over all possible choices of \mbox{$k-1$} dummy LOIs. 
This function, which maximizes the user's AOI-privacy, may leak the user's real
LOI.  Consider the example in Figure~\ref{fig:leak}.  If the \mbox{$k=3$} nodes
illustrated in Figure~\ref{fig:leak1} are queried, the adversary would know that
the interior LOI is the real LOI.  Were the westernmost node the real LOI, the
central node and the easternmost node would have been chosen by \proc{F} as the
dummy LOIs, as illustrated in Figure~\ref{fig:leak2}, to maximize $\sigma$. 
Similarly, the easternmost node could not be the real LOI.

\begin{figure*} 
\centering 
\begin{subfigure}[b]{0.48\textwidth}
\centering
\includegraphics[width=\textwidth]{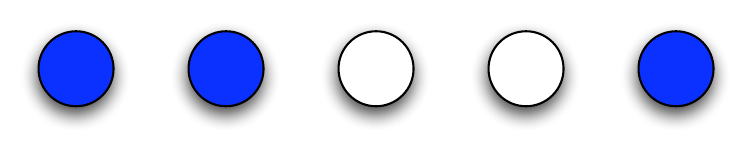}
\caption{The \mbox{$k=3$} nodes that were actually queried by the user.}
\label{fig:leak1}
\end{subfigure}
\hfill
\begin{subfigure}[b]{0.48\textwidth}
\centering
\includegraphics[width=\textwidth]{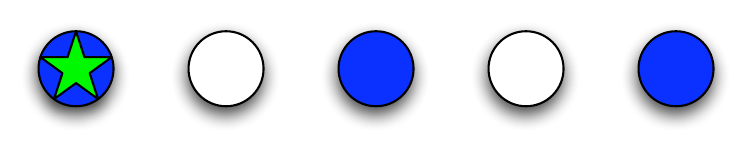}
\caption{The nodes that would have been chosen as $\mathcal{L}$ were the westernmost node the real LOI.}
\label{fig:leak2}
\end{subfigure}
\caption{An example of the $\sigma$-maximizing choice function leaking the real LOI.} 
\label{fig:leak} 
\end{figure*}

For \proc{F} not to leak information in this fashion, \proc{F} must generate a
set \mbox{$\mathcal{L} \subseteq \mathcal{N}$} that is closed under \proc{F}. 
For a deterministic \proc{F}, this closure requirement means: if
\mbox{$\proc{F}\left(L\right) = \mathcal{L}$} for some \mbox{$L \in
\mathcal{N}$}, then \mbox{$\proc{F}\left(L_i\right) = \mathcal{L}$} for all
\mbox{$L_i \in \mathcal{L}$}.
The closure requirement for a non-deterministic \proc{F} is more complex.  We
assume that the random seed $s$ passed to \proc{F} is secret; because the seed
must remain secret, it must be sufficiently long (e.g., 256 bits) to prevent
exhaustive search of the seed space by the adversary.  As such, the adversary
cannot compute \mbox{$\proc{F}\left(L_i, s\right)$} for every \mbox{$L_i \in
\mathcal{L}$}.  What the adversary can do is estimate probabilities.  Define
\mbox{$\proc{P}\left(\proc{F}, \mathcal{L}, L_i\right)$} as the proportion of all
seeds in the seed space for which \proc{F} will generate $\mathcal{L}$, given
$L_i$ and a seed as arguments.  Given $\mathcal{L}$, the adversary can estimate
\mbox{$\proc{P}\left(\proc{F}, \mathcal{L}, L_i\right)$} for each \mbox{$L_i \in
\mathcal{L}$} by running \mbox{$\proc{F}\left(L_i, s_\mathit{test}\right)$} with
a large number of different test seed values.  The adversary can then predict
that the largest estimate corresponds to the most likely real LOI.  That is, if
\mbox{$L_\textit{max} = \argmax_{L_i \in
\mathcal{L}}\left\{\proc{P}\left(\proc{F}, \mathcal{L}, L_i\right)\right\}$},
then $L_\textit{max}$ has the highest probability of having generated
$\mathcal{L}$, and hence is the most likely real LOI.  Based on this attack, the
closure property necessary for a non-deterministic \proc{F} is: if
\mbox{$\proc{F}\left(L, s\right) = \mathcal{L}$} for some \mbox{$L \in
\mathcal{N}$} and seed $s$, then \mbox{$\proc{P}\left(\proc{F}, \mathcal{L},
L\right) = \proc{P}\left(\proc{F}, \mathcal{L}, L_i\right)$} for all \mbox{$L_i \in \mathcal{L}$}.

The simplest method of fulfilling the closure requirement is to generate an unbiased random
partitioning of the network.  When the user wishes to query a real LOI, the dummy LOIs also queried
are the other nodes in the real LOI's partition.  Note that the size of the partitions, while obviously
based on the LOI-privacy requirements of the user, may not be able to be uniform.  For example,
a network of 50 nodes cannot be divided into partitions each of size \mbox{$k = 3$}.

\begin{lemma}\label{lemma:partition-size}
Any network with $n \ge 3$ nodes can be partitioned such that each partition contains $3$--$5$ nodes.
\begin{proof}
By induction.  For \mbox{$n \in [3,5]$}, use a single partition of size $n$.
Assuming you can partition a network of size \mbox{$n = 3j + k$}, where \mbox{$j \ge 0$},
\mbox{$k \in [3, 5]$}, you can partition a network of size \mbox{$n+3 = 3(j+1) + k$} by using an
additional partition of size 3.
\end{proof}
\end{lemma}

Because of Lemma~\ref{lemma:partition-size}, we will use partitions of sizes 3-5 whenever the
need for partitioning a network arises throughout this paper.  We further investigate the problem
of generating partitions and partitionings on networks in the next section.

\section{The Partition and Partitioning Problems}\label{sect:partition-partitioning}

The goal of maximizing the minimal distance between nodes in a partition, introduced in
Section~\ref{sect:ch2-aoi}, as well as the requirement to partition the nodes in a network discussed
in Section~\ref{sect:choose-fake}, give rise to two optimization problems.  We refer to these problems as
the partition problem and the partitioning problem.

For this section alone, for simplicity, we will consider the ``minimal pairwise distance'' definition of $\sigma$
presented in Section~\ref{sect:ch2-aoi}.
That is, in both problems, you are given a set of distinct points $S$ in $\mathbb{R}^2$, with \mbox{$|S| \ge 2$}.
We define the ``goodness'' of any subset \mbox{$T \subseteq S$} as
\begin{equation*}
\proc{G}(T) = \min\left\{\proc{D}(x, y)~|~a, b \in T, a \neq b\right\}\;,
\end{equation*}
where \proc{D} is Euclidean distance in $\mathbb{R}^2$.
In other words, \mbox{$\proc{G}(T)$} is the minimal Euclidean distance between any two points in $T$.
We also extend the notion of goodness to multiple subsets as
\begin{equation*}
\proc{G}(T_1, \ldots, T_n) = \min\left\{\proc{G}(T_1), \ldots, \proc{G}(T_n)\right\}\;.
\end{equation*}
Or, in other words, the goodness of a set of subsets is only as good as the goodness of the worst subset.

In the partition optimization problem, you are given a set of points $S$ in $\mathbb{R}^2$
and an integer \mbox{$2 \le k \le |S|$}.  The goal is to find the subset $T$ of $S$ with
\mbox{$|T| = k$} that maximizes \mbox{$\proc{G}(T)$}.  This problem is equivalent to maximizing the
area of interest (AOI) privacy in a single set of queries to $k$ nodes (under the
``minimal pairwise distance'' definition of $\sigma$).

In the partitioning optimization problem, you are given a set of distinct points $S$ in $\mathbb{R}^2$
and two integers $k_\mathit{min}$ and $k_\mathit{max}$ such that \mbox{$2 \le k_\mathit{min} \le k_\mathit{max} \le |S|$}.  The goal is to
partition $S$ into subsets \mbox{$T_1, \ldots, T_m$} with \mbox{$k_\mathit{min} \le |T_i| \le k_\mathit{max}$}
for all $i$, such that \mbox{$\proc{G}(T_1, \ldots, T_m)$} is maximized.
Here, partitioning into subsets has the usual meaning:
\mbox{$T_i \cap T_j = \emptyset$} for all \mbox{$i \neq j$}, and \mbox{$\cup_i T_i = S$}.
This problem is equivalent to maximizing the AOI-privacy over all the partitions in the partitioning
of a network (again, under the ``minimal pairwise distance'' definition of $\sigma$).

We begin with decision variants of these two problems on a graph (as opposed to in $\mathbb{R}^2$):
\begin{itemize}
\item {\sc Partition-Graph}: You are given a graph \mbox{$G = \langle V, E\rangle$},
\mbox{$k \in \mathbb{Z} \ge 2$}, and \mbox{$d \in \mathbb{Z} \ge 1$}.
Is there a subset \mbox{$S \subseteq V$} with \mbox{$|S|=k$}
such that the shortest path between any two elements in $S$ has length at least $d$?
\item {\sc Partitioning-Graph}: You are given a graph \mbox{$G = \langle V, E\rangle$},
\mbox{$k_\mathit{min} \in \mathbb{Z}$} and \mbox{$k_\mathit{max} \in \mathbb{Z}$}
with \mbox{$1 \le k_\mathit{min} \le k_\mathit{max} \le |V|$},
and \mbox{$d \in \mathbb{Z} \ge 1$}.
Can the vertices of $G$ be partitioned into disjoint sets \mbox{$T_1, \ldots, T_m$} such that
\mbox{$k_\mathit{min} \le |T_i| \le k_\mathit{max}$} and the shortest path between any two vertices in
$T_i$ has length at least $d$, for all $i$?
\end{itemize}

In Appendix \ref{hardness} we prove that both problems are NP-complete. Furthermore, the partition optimization problem is known to be NP-hard on $\mathbb{R}^2$ as well as on graphs~\cite{wang:geometric}. We additionally conjecture the following: 
\begin{conjecture}
The partitioning optimization problem is NP-hard on $\mathbb{R}^2$.
\end{conjecture}
While we do not have a proof that this problem is NP-hard on $\mathbb{R}^2$,
we also have no reason to doubt as much.  The related problem of partitioning points
on $\mathbb{Z}^2$ into a given number of partitions (with no size restrictions)
such that the points within
any given partition are a minimum distance apart is believed to be NP-hard~\cite{johnson:complete},
and another related problem of painting colours onto aircraft to maximize the distance between similar
colours is also believed to be hard~\cite{roley:usaf}.  A formal proof that the partitioning
problem is NP-hard on $\mathbb{R}^2$ remains an open problem.

The partition optimization problem can be solved in \mbox{$\mathcal{O}(n \log n)$} time,
where \mbox{$|\mathcal{S}| = n$},
if we restrict
the problem to the case of \mbox{$k = 2$}~\cite{jaromczyk:communication}.
The solution is based on performing a Graham scan~\cite{graham:convex} to find the convex hull
of the points $\mathcal{S}$ and determining the two furthest points from each other on the
hull~\cite{toussaint:caliphers}.

The partition optimization problem, for any \emph{fixed} value of $k$, can be solved in polynomial time.
That is because the number
of candidate solutions is \mbox{$\mathcal{O}(n^k)$},
where \mbox{$|\mathcal{S}| = n$},
which is polynomial for a fixed $k$.

\subsection{Greedy Partition Approximation}\label{sect:greedyApprox}

We present here a \mbox{$\Theta(1)$} greedy approximation algorithm to the partition problem, based on
private communication with Har-Peled and Sidiropoulos~\cite{communication:har-peled}:
\begin{enumerate}
\item Pick $x_1$ to be an arbitrary node in the network; then,
\item Inductively, given \mbox{$X_{i} = \{x_1, \ldots, x_{i}\}$}, \mbox{$i < k$}, pick
node \mbox{$x_{i+1} \notin X_{i}$} such that
\mbox{$\min\left(\proc{D}(x_1, x_{i+1}), \ldots, \proc{D}(x_i, x_{i+1})\right)$} is maximized.
\end{enumerate}
The algorithm is a simple loop that, in each iteration, considers \mbox{$\Theta(n)$} nodes to
be the next possible node added to the partition~---~doing so would require only a single distance
computation between the node under consideration and the most recent node added to the partition,
possibly causing the minimal distance to any node in the partition to be updated for the node under
consideration.  The loop iterates \mbox{$k-1$} times, so clearly this algorithm can be implemented
in \mbox{$\mathcal{O}(nk)$} time with \mbox{$\mathcal{O}(n)$} memory.

This approximation algorithm is similar to one proposed by Ravi~et~al.~\cite{ravi:heuristic}
for the maximal minimum facility dispersion problem.  A version for facility dispersion
on continuous graphs was proposed by Tamir~\cite{tamir:obnoxious}.
For the convenience of the reader, we have included an original proof
that the algorithm in indeed a \mbox{$\Theta(1)$} approximation~---~in
fact, a 2-approximation~---~for the partition problem.

\begin{lemma}\label{lemma:greedy-partition}
Let \mbox{$X = \{x_1, \ldots, x_k\}$} be the partition returned by the greedy algorithm, with
minimal pairwise distance $d^\prime$.
Let \mbox{$Y = \{y_1, \ldots, y_k\}$} be an optimal solution to the partition problem.
Let \mbox{$1 < s \le k$} be the smallest integer such that there exists \mbox{$t < s$} where
\mbox{$\proc{D}(x_s, x_t) = d^\prime$}.
Then, for all \mbox{$y \in Y$}, there exists an
\mbox{$x_y \in X_{s-1} = \{x_1, \ldots, x_{s-1}\}$} such that \mbox{$\proc{D}(y, x_y) \le d^\prime$}.
\begin{proof}
Let $y$ be an arbitrary node of the optimal solution $Y$ with a distance
greater than $d^\prime$ from every node in $X_{s-1}$.  However, the node chosen by
the greedy algorithm as $x_s$ has the greatest minimal distance to any node in
$X_{s-1}$, namely $d^\prime$, which is a contradiction.
\end{proof}
\end{lemma}

\begin{theorem}\label{theorem:2approx}
The greedy algorithm is a 2-approximation for the partition problem.
\end{theorem}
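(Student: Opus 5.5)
We want to show that the greedy output $X$ satisfies $d^\prime \ge d^*/2$, where $d^*$ is the minimal pairwise distance of the optimal solution $Y$. The plan is a pigeonhole argument built on Lemma~\ref{lemma:greedy-partition}.

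First, fix $s$ as in Lemma~\ref{lemma:greedy-partition}. Such an $s$ exists because $d^\prime$ is attained by some pair $x_s, x_t$ with $t<s$, and $s-1 \le k-1 < k$. By the lemma, every $y \in Y$ lies within distance $d^\prime$ of some centre $x_y \in X_{s-1}$.

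Next, we get a collision by pigeonhole. Map each of the $k$ points of $Y$ to a centre in $X_{s-1}$, of which there are at most $k-1$. Two distinct points $y, y^\prime \in Y$ must then share a centre $x$. The triangle inequality gives
\begin{equation*}
d^* \le \proc{D}(y, y^\prime) \le \proc{D}(y, x) + \proc{D}(x, y^\prime) \le 2d^\prime .
\end{equation*}
Hence $d^\prime \ge d^*/2$, and since $d^\prime \le d^*$ trivially, the greedy algorithm is a 2-approximation.

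The main point to check is the hypothesis of Lemma~\ref{lemma:greedy-partition}: that $x_s$ maximizes the minimal distance to $X_{s-1}$, and that this maximum equals $d^\prime$. The minimal distance from $x_s$ to $X_{s-1}$ is at most $\proc{D}(x_s,x_t)=d^\prime$. It is also at least $d^\prime$, because every pairwise distance in $X$ is at least $d^\prime$. So the greedy step at index $s$ has value exactly $d^\prime$.

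A further subtlety is the case $y \in X_{s-1}$, or a point $y$ being chosen while the greedy process is still valid. The lemma's contradiction covers this implicitly: if some $y$ were farther than $d^\prime$ from all of $X_{s-1}$, then $y \notin X_{s-1}$, and greedy would have picked a point with value exceeding $d^\prime$. I would state this remark explicitly to justify invoking the lemma, and then the pigeonhole step completes the proof.
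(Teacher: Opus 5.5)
Your proposal is correct and follows essentially the same route as the paper. Lemma~\ref{lemma:greedy-partition} places every point of $Y$ within $d^\prime$ of some point of $X_{s-1}$. Pigeonhole on $|X_{s-1}| < k$ then gives two optimal points sharing a centre, and the triangle inequality yields $d^* \le 2d^\prime$.

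The differences are only cosmetic. The paper argues by contradiction (assuming $d^\prime < d/2$) and assigns each $y$ to its closest point in $X_{s-1}$. You argue directly, and you also verify explicitly that the greedy step at index $s$ has value exactly $d^\prime$, which the lemma's proof uses without stating.
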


The greedy algorithm and its proof in Theorem~\ref{theorem:2approx} generalize to choosing
points to maximize the minimal pairwise distance over any continuous convex space in $\mathbb{R}^2$.

\subsection{Divide-and-Conquer Partitioning}\label{sect:divideAndConquer}

In this section, we present a divide-and-conquer algorithm for effectively solving the partitioning problem.
The idea underlying the algorithm is first to compute the number of nodes that will be assigned to each
partition within the partitioning.  The algorithm then finds a partition of size $k$, where $k$ is the largest
number of nodes that have to be assigned to any partition (up to a maximum of \mbox{$k = \mu$}).  This
partition could be found as an optimal partition via brute-force, or via a Monte Carlo algorithm (in our implementation,
we use a Monte Carlo algorithm that generates $\pi$ random partitions and chooses the best one).
The nodes in this partition serve as \emph{seeds} that form the basis of subnets of the network.
Each node in the network is assigned to the subnet of its closest seed, and the number of nodes to be
assigned to each partition are divided evenly among the subnets. The algorithm then continues recursively on each
subnet, until the maximal number of nodes to be assigned to each partition in a given subnet is only one~---~at
this point, each node in the subnet is assigned at random to one of the partitions requiring a node from that subnet.
The algorithm is formalized in Algorithm~\ref{alg:divide-and-conquer}, with the entire process repeating $\alpha$
times and the best partitioning out of all of those being returned on line~\ref{algline:returnbest}.  The integer division
that begins the $\Call{Divide-And-Conquer}$ function is just to divide the number of nodes that will be in each partition
as equally as possible.

\begin{algorithm}
\begin{algorithmic}[1]
 \Function{Divide-And-Conquer}{$\mathcal{N}, k_\mathit{min}, k_\mathit{max}, \alpha, \pi, \mu$}
    \State \Preconditions $1 \le k_\mathit{min} \in \mathbb{Z} \le k_\mathit{max} \in \mathbb{Z} \le |\mathcal{N}|, \alpha \in \mathbb{Z} \ge 1, \pi \in \mathbb{Z} \ge 1, \mu \in \mathbb{Z} \ge 2$
    \State $\mathit{numPartitions} \gets |\mathcal{N}| / k_\mathit{min}$ \Comment Integer division
    \If {$\mathit{numPartitions} \cdot k_\mathit{max} < |\mathcal{N}|$}
        \State \Error Impossible to generate partitions
    \EndIf
    \State $\mathit{minSize} \gets |\mathcal{N}| / \mathit{numParititons}$ \Comment Integer division
    \State $\mathit{numLarger} \gets |\mathcal{N}| - \left(\mathit{minSize} \cdot \mathit{numPartition}\right)$
    \State $\mathit{numInPart}[i] \gets \mathit{minSize} + 1~~~\forall~0 \le i < \mathit{numLarger}$ \Comment Arrays indexed from 0
    \State $\mathit{numInPart}[i] \gets \mathit{minSize}~~~\forall~numLarger \le i < \mathit{numPartitions}$
    \For {$i \gets 0 \To \alpha-1$}
        \State Assign $N$ to no partition~$\forall~N \in \mathcal{N}$
        \State \Call{Divide-And-Conquer-X}{$\mathcal{N}, \mathit{numInPart}, \mathit{numPartitions}, \pi, \mu$} 
        \State $p[i] \gets$ the current partitioning
    \EndFor
    \State \Return the best partitioning out of $p[0], \ldots, p[\alpha-1]$\label{algline:returnbest}
\EndFunction
\Statex
\Procedure{Divide-And-Conquer-X}{$\mathcal{N}, \mathit{numInPart}, \mathit{numPartitions}, \pi, \mu$}
    \State $\mathit{numSubnets} \gets \min\left(\mu, \max\left(\mathit{numInPart}[1], \ldots, \mathit{numInPart}[\mathit{numPartitions}]\right)\right)$
    \If {$\mathit{numSubnets} = 1$}
        \ForAll {partitions $P$ with $\mathit{numInPart}[P] = 1$}
            \State Assign a random unassigned node in $\mathcal{N}$ to partition $P$
        \EndFor
        \State \Return
    \EndIf
    \State $\mathit{subnet}[i] \gets \{\}~~\forall~0 \le i < \mathit{numSubnets}$
    \State $\mathit{capacity}[i] \gets 0~~\forall~0 \le i < \mathit{numSubnets}$
    \State $\mathit{nDist}[i][j] \gets 0~~\forall~0 \le i < \mathit{numSubnets}, 0 \le j < \mathit{numPartitions}$
    \State $\mathit{seeds} \gets \Call{Monte-Carlo-Partition}{\mathcal{N}, \mathit{numSubnets}, \pi}$
    \ForAll {$N \in \mathcal{N}$}
        \If {$\exists~i~\text{such that}~\mathit{seeds}[i] = N$}
            \State $\mathit{index} \gets i$
        \Else
            \State $d \gets$ the minimal distance from $N$ to any element in $\mathit{seeds}$
            \State $\mathit{indices} \gets \left\{i~|~\proc{D}\left(N, \mathit{seeds}[i]\right) = d\right\}$
            \State $\mathit{index} \gets$ a uniform random element of $\mathit{indices}$
        \EndIf
        \State Add $N$ to $\mathit{subnet}[\mathit{index}]$
        \State $\mathit{capacity}[\mathit{index}] \gets \mathit{capacity}[\mathit{index}] + 1$
    \EndFor
    \State $\mathit{onSubnet} \gets 0$
    \For {$\mathit{onPartition} \gets 0 \To \mathit{numPartitions}-1$}
        \For {$i \gets 1 \To \mathit{numInPart}[\mathit{onPartition}]$}
            \While {$\mathit{capacity}[\mathit{onSubnet}] = 0$}
                \State $\mathit{onSubnet} \gets \left(\mathit{onSubnet} + 1 \right) \mod \mathit{numSubnets}$
            \EndWhile
            \State $\mathit{nDist}[\mathit{onSubnet}][\mathit{onPartition}] \gets \mathit{nDist}[\mathit{onSubnet}][\mathit{onPartition}] + 1$
            \State $\mathit{capacity}[\mathit{onSubnet}] \gets \mathit{capacity}[\mathit{onSubnet}] - 1$
            \State $\mathit{onSubnet} \gets \left(\mathit{onSubnet} + 1 \right) \mod \mathit{numSubnets}$
        \EndFor
    \EndFor
    \For {$i \gets 0 \To \mathit{numSubnets} - 1$}
        \State \Call{Divide-And-Conquer-X}{$\mathit{subnet}[i], \mathit{nDist}[i], \mathit{numPartitions}, \pi, \mu$}
    \EndFor
\EndProcedure
\Statex
\Function{Monte-Carlo-Partition}{$\mathcal{N}, k, \pi$}
    \State \Return the best partition of size $k$ on $\mathcal{N}$, from $\pi$ randomly generated partitions
\EndFunction
\end{algorithmic}
\caption{The divide-and-conquer algorithm for solving the partitioning problem}\label{alg:divide-and-conquer}
\end{algorithm}

First, we compared the divide-and-conquer algorithm to a randomized partitioning algorithm and to a brute-force
search that finds an optimal partitioning.  The randomized algorithm is formalized as
Algorithm~\ref{alg:random-partitioning}.  It uses the same integer division technique to equally divide the partition
sizes as the divide-and-conquer algorithm, starting on line~\ref{algline:intdiv}; however, once the partition sizes are
equally distributed, it assigns random unassigned nodes to each partition to fill that partition, on
line~\ref{algline:random}.  The entire process of creating a partitioning randomly is repeated $\alpha$ times,
as with the divide-and-conquer algorithm, and the best partitioning is returned.
For the randomized algorithm, we ran 1000 trials using \mbox{$\alpha = 1000$}.  The same number
of trials were used for the divide-and-conquer algorithm, with \mbox{$\alpha = 1000$}, \mbox{$\pi = 10$}, and
\mbox{$\mu = 3$}.  Fewer trials were needed for the brute-force approach; we ran 25 trials just to get an accurate
timing.  The experiment was performed on a 9x5 grid of nodes, and the results are summarized in
Figure~\ref{fig:partition-three-algs}.
The running times, measured on a 2.93~GHz Core~i7 with 8~GB of RAM running OS~X~10.8.4, illustrate that
the divide-and-conquer partitioning algorithm is orders of magnitude faster
than a brute-force search.  While it is slower than a randomized approach, the divide-and-conquer algorithm
also returned significantly better results than the random algorithm.

\begin{algorithm}
\begin{algorithmic}[1]
\Function{Random-Partitioning}{$\mathcal{N}, k_\mathit{min}, k_\mathit{max}, \alpha$}
    \State \Preconditions $1 \le k_\mathit{min} \in \mathbb{Z} \le k_\mathit{max} \in \mathbb{Z} \le |\mathcal{N}|, \alpha \in \mathbb{Z} \ge 1$
    \State $\mathit{numPartitions} \gets |\mathcal{N}| / k_\mathit{min}$ \Comment Integer division\label{algline:intdiv}
    \If {$\mathit{numPartitions} \cdot k_\mathit{max} < |\mathcal{N}|$}
        \State \Error Impossible to generate partitions
    \EndIf
    \State $\mathit{minSize} \gets |\mathcal{N}| / \mathit{numParititons}$ \Comment Integer division
    \State $\mathit{numLarger} \gets |\mathcal{N}| - \left(\mathit{minSize} \cdot \mathit{numPartition}\right)$
    \State $\mathit{numInPart}[i] \gets \mathit{minSize} + 1~~~\forall~0 \le i < \mathit{numLarger}$ \Comment Arrays indexed from 0
    \State $\mathit{numInPart}[i] \gets \mathit{minSize}~~~\forall~numLarger \le i < \mathit{numPartitions}$
    \For {$i \gets 1 \To \alpha$}
        \State Assign $N$ to no partition~$\forall~N \in \mathcal{N}$
        \State \Call{Random-Partitioning-X}{$\mathcal{N}, \mathit{numInPart}, \mathit{numPartitions}$} 
        \State $p[i] \gets$ the current partitioning
    \EndFor
    \State \Return the best partitioning out of $p[0], \ldots, p[\alpha-1]$
\EndFunction
\Statex
\Procedure{Random-Partitioning-X}{$\mathcal{N}, \mathit{numInPart}, \mathit{numPartitions}$}
    \For{$\mathit{onPartition} \gets 0 \To \mathit{numPartitions}-1$}
        \For {$i \gets 1 \To \mathit{numInPart}[\mathit{onPartition}]$}
            \State Assign a random unassigned node in $\mathcal{N}$ to partition $\mathit{onPartition}$\label{algline:random}
        \EndFor
    \EndFor
\EndProcedure
\end{algorithmic}
\caption{The random algorithm for solving the partitioning problem}\label{alg:random-partitioning}
\end{algorithm}

\begin{figure*}[h] 
\centering 
\begin{subfigure}[b]{0.48\textwidth}
\centering
\includegraphics[width=\textwidth]{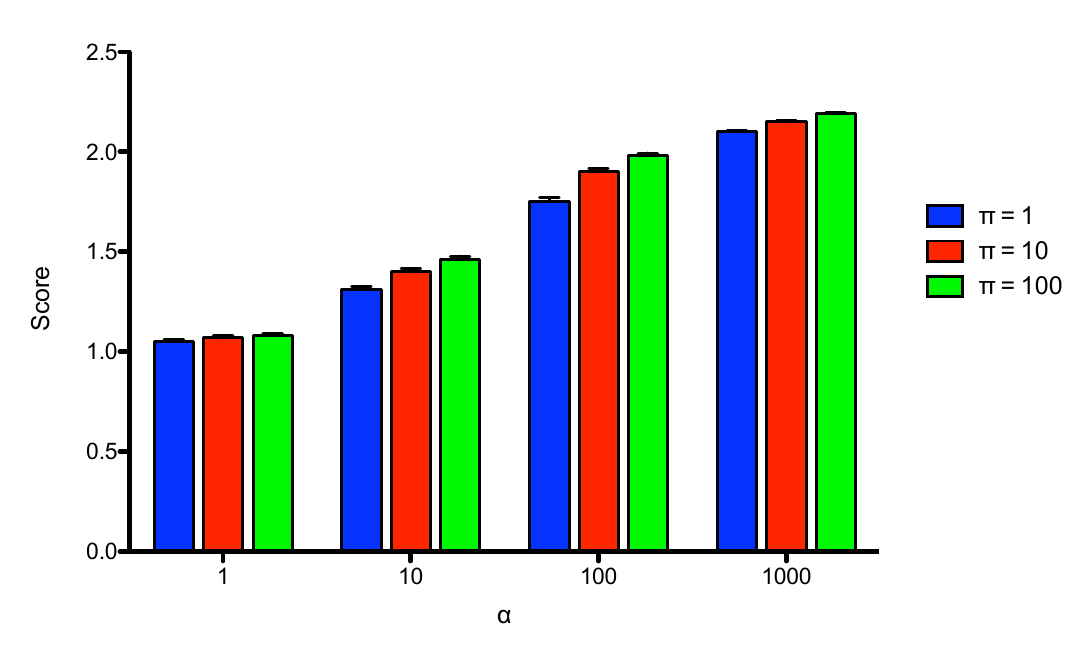}
\end{subfigure}
\hfill
\begin{subfigure}[b]{0.48\textwidth}
\centering
\includegraphics[width=\textwidth]{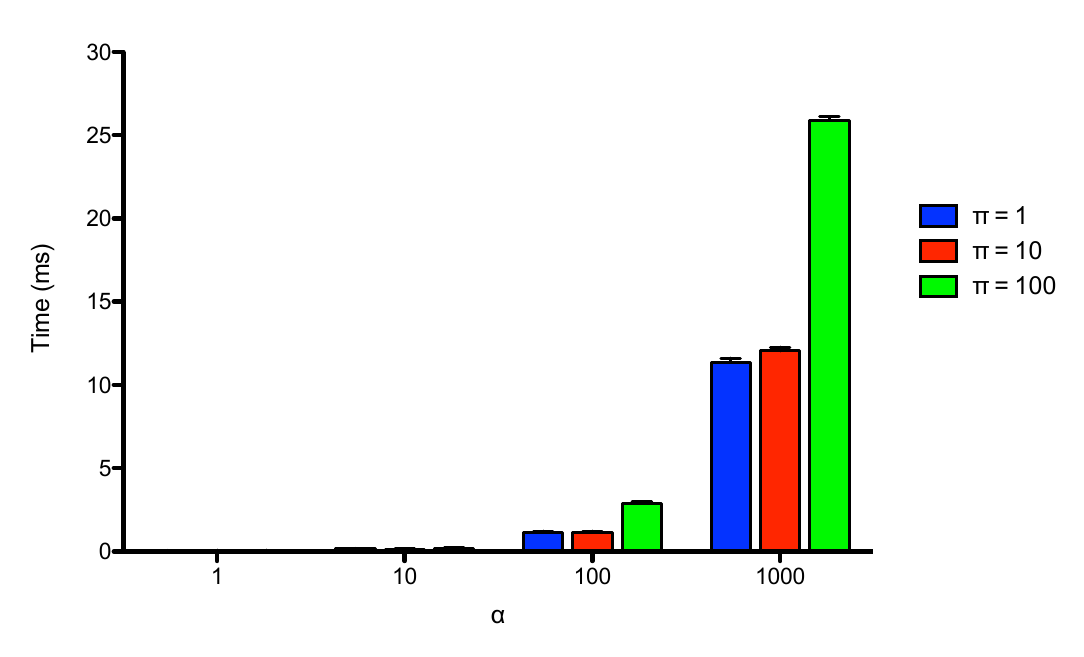}
\end{subfigure}
\caption{The effect of the $\alpha$ and $\pi$ parameters on the score and running time of the divide-and-conquer partitioning algorithm on a grid topology.} 
\label{fig:partitioning-attempts} 
\end{figure*}

\begin{figure*} 
\centering 
\begin{subfigure}[b]{0.48\textwidth}
\centering
\includegraphics[width=\textwidth]{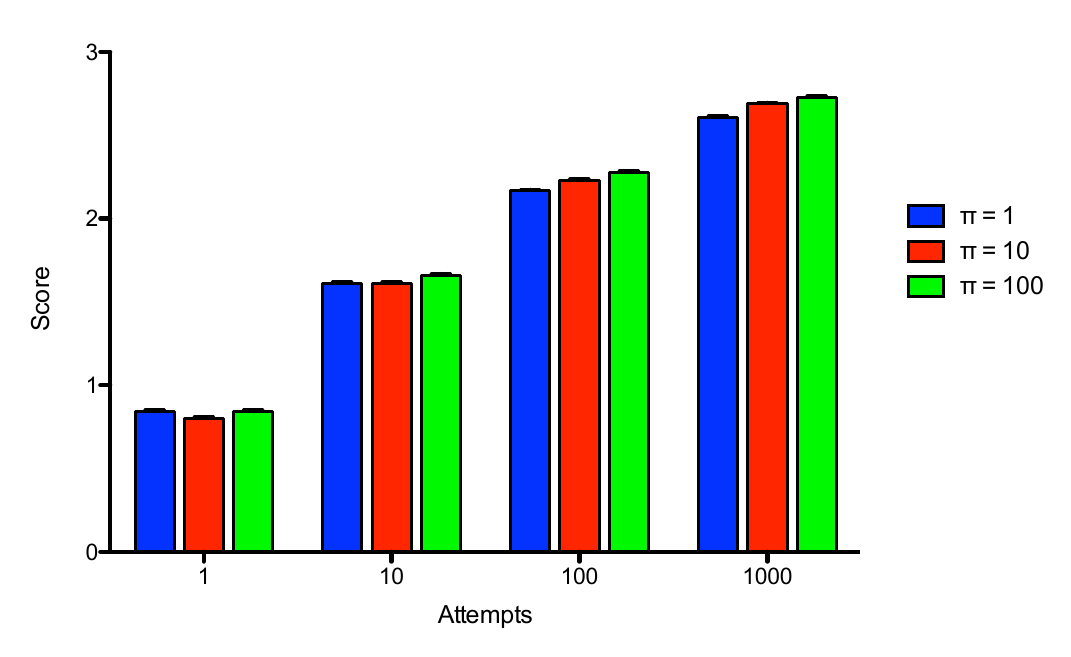}
\end{subfigure}
\hfill
\begin{subfigure}[b]{0.48\textwidth}
\centering
\includegraphics[width=\textwidth]{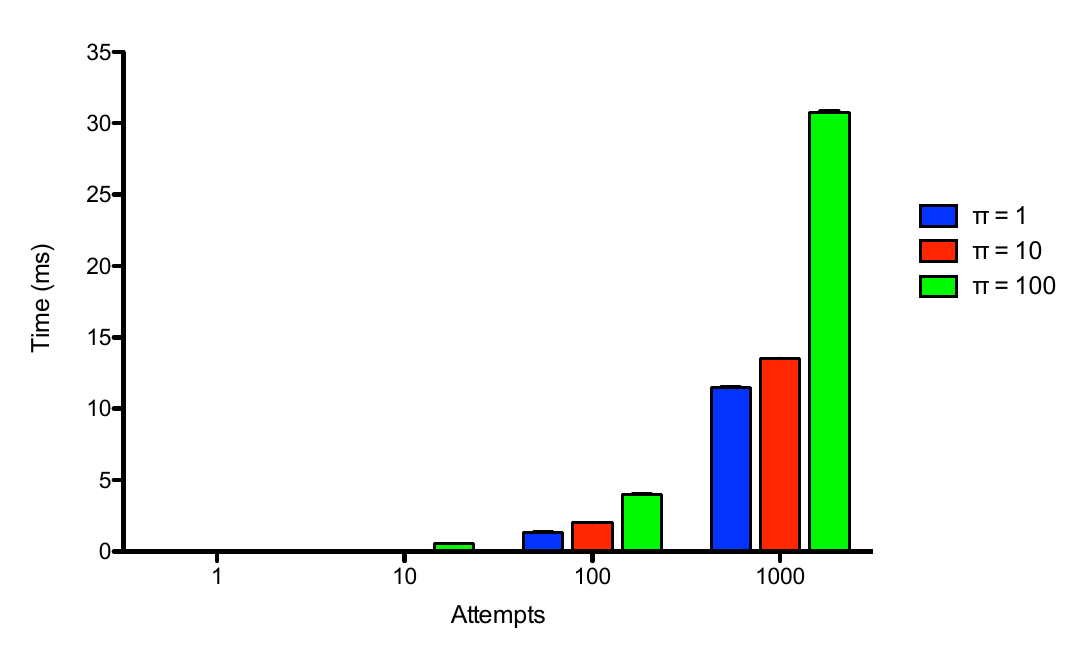}
\end{subfigure}
\caption{The effect of the $\alpha$ and $\pi$ parameters on the score and running time of the divide-and-conquer partitioning algorithm on a randomized topology.} 
\label{fig:partitioning-attempts-random} 
\end{figure*}

The effect of the variables $\alpha$ (the number of Monte Carlo attempts run for the entire partitioning algorithm)
and $\pi$ (the number of partitions generated each time seeds are needed) is highlighted in
Figure~\ref{fig:partitioning-attempts}.  We modified the values of $\alpha$ and $\pi$, while running the
divide-and-conquer partitioning algorithm on the same 9x5 grid.  For each \mbox{$\langle\alpha,\pi\rangle$}
pair, we ran 1000 trials.  Unsurprisingly, increasing the values of $\alpha$ and $\pi$ increases the
average quality of the partitioning that is chosen in the end.  However, as evidenced by the running time graph,
increasing $\pi$ to large values can increase the running time significantly, for comparably little payoff in the average
quality of the partitioning found (relative to increasing $\alpha$).

We confirmed these results by re-running the experiment, but generating a new random topology for
each trial.  Each random topology was generated by placing nodes uniformly at random in a 10x10 area,
measured in an arbitrary unit of length.  To account for the variance in topology, we increased the number
of trials for each \mbox{$\langle\alpha,\pi\rangle$} pair to 5000.  The results on the random topology,
illustrated in Figure~\ref{fig:partitioning-attempts-random}, are comparable
to those seen on the grid topology.  Based on these results, we recommend using a value of \mbox{$\pi = 10$}, with
the $\alpha$ value based on both the size of the network and the necessity of a higher-scoring result (we chose
to use \mbox{$\alpha = 1000$} for the remainder of this section).

\begin{figure}[htbp]
  \centering
  \begin{minipage}{0.48\textwidth}
    \centering
    \includegraphics[width=\linewidth]{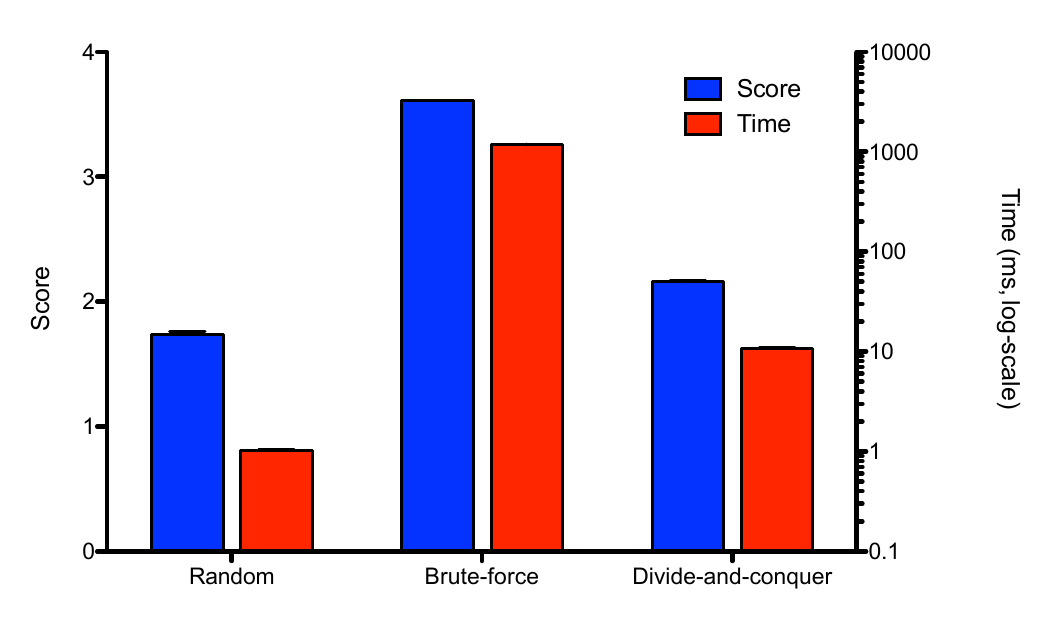}
    \caption{A comparison of the scores of the partitioning computed by three different partitioning algorithms, and the running time of each.}\label{fig:partition-three-algs}
  \end{minipage}
  \hfill
  \begin{minipage}{0.48\textwidth}
    \centering
    \includegraphics[width=\linewidth]{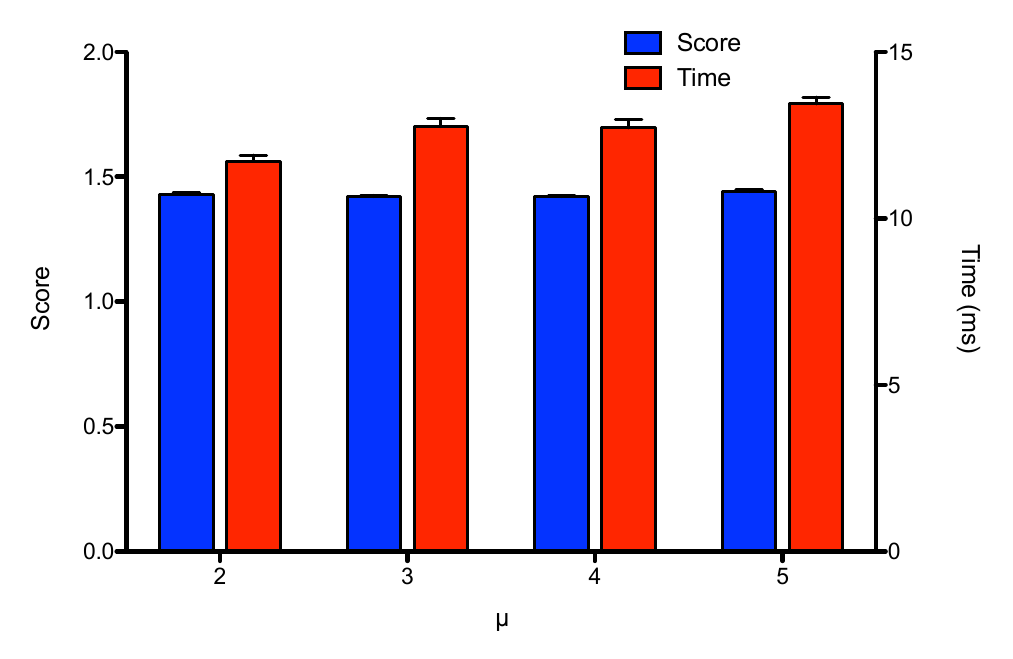}
    \caption{The effect of $\mu$, the maximal partition seed size, on the final score and running time of the divide-and-conquer partitioning algorithm.}\label{fig:partitioning-p-size}
  \end{minipage}
\end{figure}

The $\mu$ parameter is simply a safeguard for when the individual partitions in a partitioning contain a large
number of nodes.  If this were the case, the size of the partition used as a seed would be very large.
If the seed partition is sufficiently large, relative to the size of the network
(i.e., if there are a small number of large partitions),
the divide-and-conquer algorithm would nearly degenerate into a randomized partitioning.
In most cases, the value of $\mu$ can be safely ignored (i.e., set \mbox{$\mu = \infty$}).
As illustrated in Figure~\ref{fig:partitioning-p-size}, the $\mu$ parameter had no significant effect
on the final score, and the individual trial run times are so variable that $\mu$ had no overall significant effect
on the running time of the algorithm.  We recommend using \mbox{$\mu = 3$} for safety's sake; however,
if the usage case precludes large individual partitions, the $\mu$ parameter can be entirely ignored.

\section{Experimental Results in a Uniform Environment}\label{sect:ch2-results}

We begin our experiments on the $k$-anonymity scheme in a featureless environment, devoid of any
statistics learned from observing human mobility.  The network is a 10x10 grid of nodes placed unit-distance
apart, and for all nodes $N$ and $M$ in the network \mbox{$\probC{N} = \probL{M}{N} = \frac{1}{100}$} .
The user's current location is chosen proportional to $\probC{N}$ (i.e., uniformly at random, for now),
the user's real location of interest to be queried is chosen proportional to
$\probL{M}{N}$ (i.e., also uniformly at random for the moment), and the \mbox{$k-1$} dummy queried locations
are always chosen uniformly at random.

Since each trial of an experiment will observe the effect on the user's CL-privacy and LOI-privacy from a
single set of $k$ queries being issued, we do not need to generate a partitioning of the network in the
implementation of the experiment.  However, it should be noted that this behaviour is equivalent to
generating a uniform random partitioning, then querying a partition of size $k$.  Because we wish to observe
correlations between, e.g., AOI-privacy and CL-privacy, using a random partitioning scheme (as opposed
to the partitioning scheme discussed in Section~\ref{sect:divideAndConquer}) is preferred because
it gives us the widest range of results.

In our first experiment, message headers between nodes are encrypted using a global key that has been
compromised by the adversary.  The adversary has compromised 10 of the nodes in the network, and is using
them to eavesdrop on communications.  The user issues \mbox{$k = 3$} queries into the network from
their current location, and the adversary attempts to determine the user's current location and real location of
interest.

Figure~\ref{fig:scatter-loi} plots the user's resulting CL-privacy and LOI-privacy from issuing a single set of $k$
queries~---~each point on the scatterplot represents one of the 1000 trials performed.
Unsurprisingly, there is a significant correlation (\mbox{$P < 0.0001$})  between the user's resulting
LOI-privacy and CL-privacy, in that high CL-privacy is essential to achieving high LOI-privacy.
After all, the more information the adversary gains from overhearing some or all of a set of $k$ queries,
the better able the adversary is to determine both the user's current location and location of interest.
Interestingly, LOI-privacy in the featureless environment is essentially an ``all or nothing'' measure,
in that the user achieves near-perfect LOI-privacy, i.e., approximately
\begin{equation*}
\frac{|\mathcal{N}_\mathit{honest}| - 1}{|\mathcal{N}_\mathit{honest}|} = \frac{89}{90}\;,
\end{equation*}
or near-minimal LOI-privacy, i.e., approximately
\begin{equation*}
\frac{k-1}{k} = \frac{2}{3}\;.
\end{equation*}
However, the scatterplot is somewhat misleading: for example, the cluster
of points in the upper-right of the graph do not all have LOI-privacy of $\frac{89}{90}$;
many have LOI-privacy slightly below that perfect measure.  That is, the adversary was always able to
deduce, even in tiny measure, locations that could not have been queried.  However, in the featureless
environment, there is no information for the adversary to work with regarding location popularities,
significantly handicapping the adversary who is unable to overhear or deduce all of the queried locations.

\begin{wrapfigure}{L}{0.48\textwidth}
    \centering
    \includegraphics[width=\linewidth]{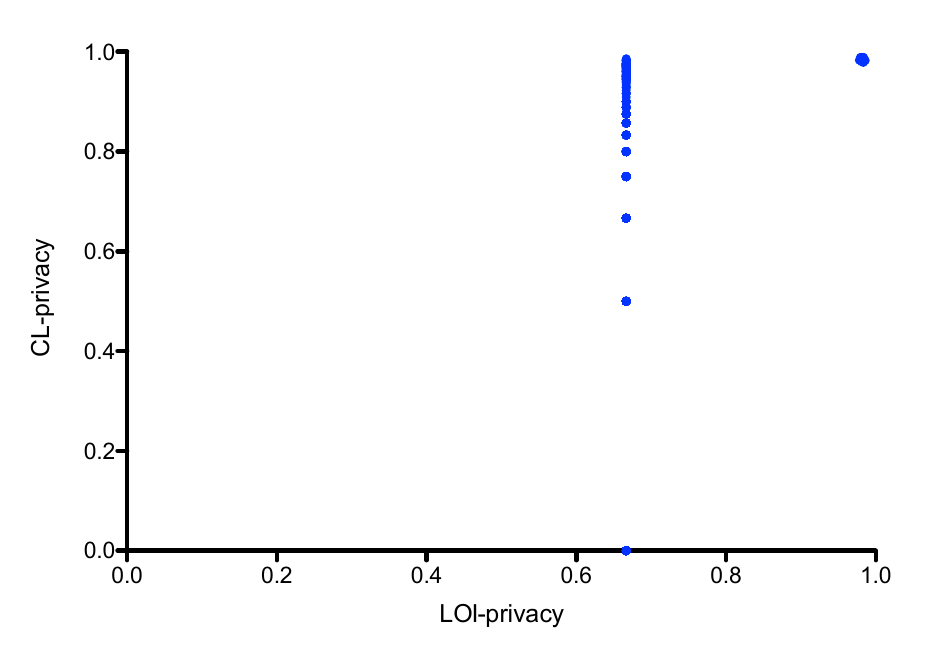}
    \caption{The user's CL-privacy and LOI-privacy over 1000 trials with $k=3$.}\label{fig:scatter-loi}
\end{wrapfigure}

Of more interest to a user, however, might be what changes they can make in order to increase the privacy
of their current location and location of interest.  The most obvious change that the user can make is to adjust
$k$, the number of queries that they send.  We repeated the previous experiment, this time adjusting
\mbox{$k \in [2, 5]$}.
As illustrated in Figure~\ref{fig:static-k-privacy},
lower values of $k$ resulted in significantly higher CL-privacy, as the adversary has fewer
query message with which to ``triangulate'' the user.  On the other hand, higher values of $k$ resulted in significantly
higher LOI-privacy, as it becomes less likely that the adversary will guess the correct queried location as the user's
real location of interest.  Each increment of $k$ resulted in a significant change to both privacy metrics
(\mbox{$P < 0.05$}), as determined
by a one-way ANOVA test with Bonferroni's multiple comparison test~\cite{graphpad:bonferroni}
on each pair of incremental $k$ values (i.e.,
from \mbox{$k=2$} to \mbox{$k=3$}, from \mbox{$k=3$} to \mbox{$k=4$}, and from \mbox{$k=4$} to \mbox{$k=5$}),
thus indicating a clear trade-off between the user's ability to protect the privacy of their current location and the
privacy of their location of interest.

\begin{wrapfigure}{R}{0.48\textwidth}
    \centering
    \includegraphics[width=\linewidth]{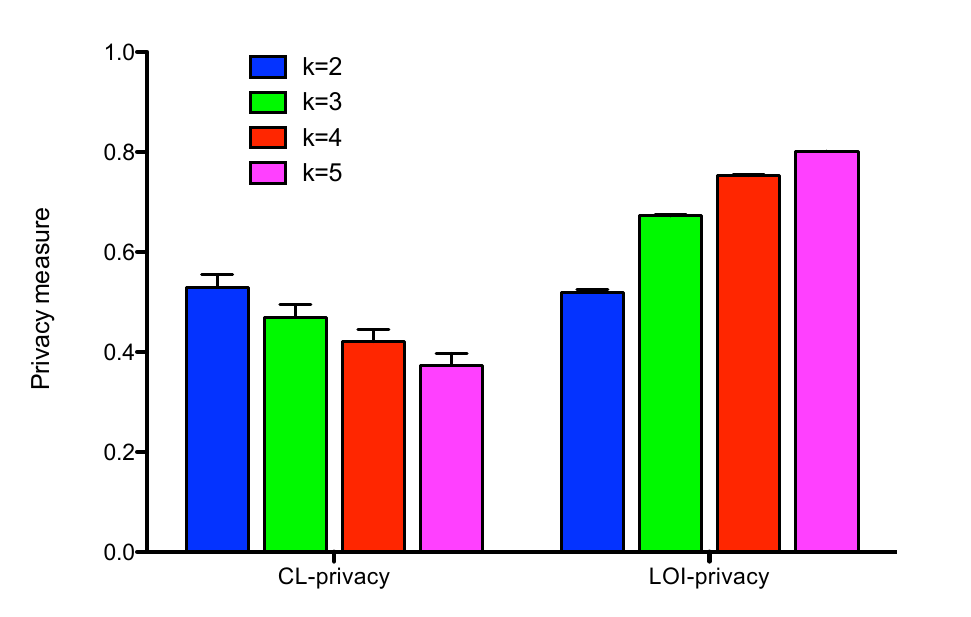}
    \caption{The user's CL-privacy and LOI-privacy, as the number of queries $k$ is varied, over 1000 trials.}\label{fig:static-k-privacy}
\end{wrapfigure}

Another parameter within the user's control is their AOI-privacy, which is determined by
how well-dispersed their $k$ queries are physically around the network.  Figure~\ref{fig:scatter-aoi} shows
the user's CL- and LOI-privacy, relative to their chosen AOI-privacy over 1000 trials.
While there is a significant correlation (\mbox{$P < 0.0001$}) between
between the user's AOI-privacy and their LOI-privacy, it is not
large (\mbox{$R^2 = 0.0158$}).  While that may seem surprising, given that the two are
conceptually similar measures of how well protected the identity of the user's real query destination is, that
lack of large correlation just indicates that how physically spread out the $k$ destinations are is not the
most important factor in helping the adversary narrow down the real location of interest.
There is also no large correlation (\mbox{$P < 0.01$}, \mbox{$R^2 < 0.01$}) between a user's
AOI-privacy and their CL-privacy.  Intuitively, one might think that the more dispersed the $k$ queries are,
the more opportunity there would be for the adversary to overhear queries and triangulate the user's location.
While that is probabilistically true, as indicated by the low $P$ value, it is not a largely meaningful factor to
the adversary, as indicated by the low $R^2$ value.

\begin{figure*} 
\centering 
\begin{subfigure}[b]{0.48\textwidth}
\centering
\includegraphics[width=\textwidth]{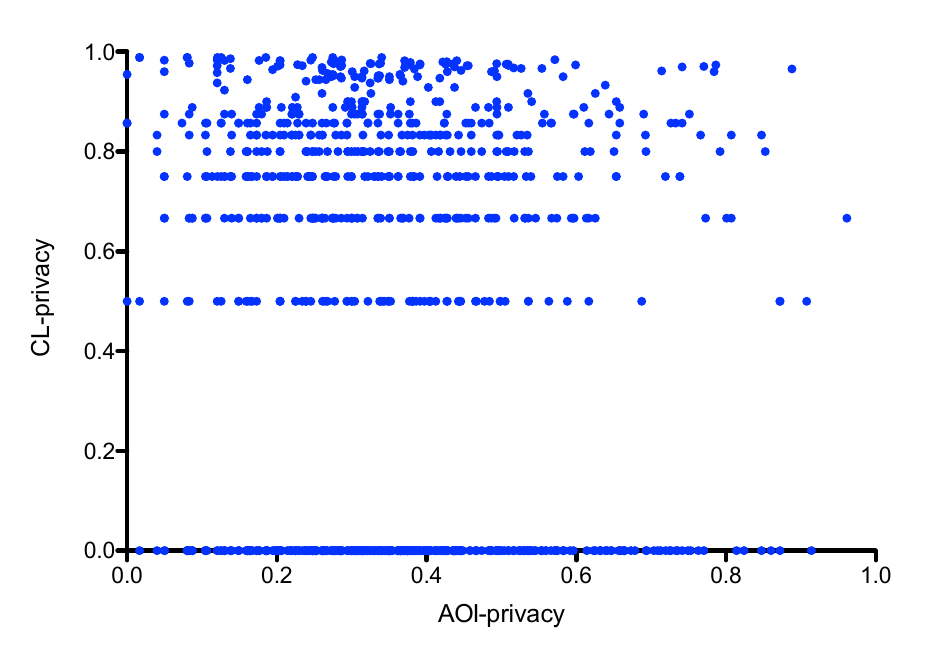}
\end{subfigure}
\hfill
\begin{subfigure}[b]{0.48\textwidth}
\centering
\includegraphics[width=\textwidth]{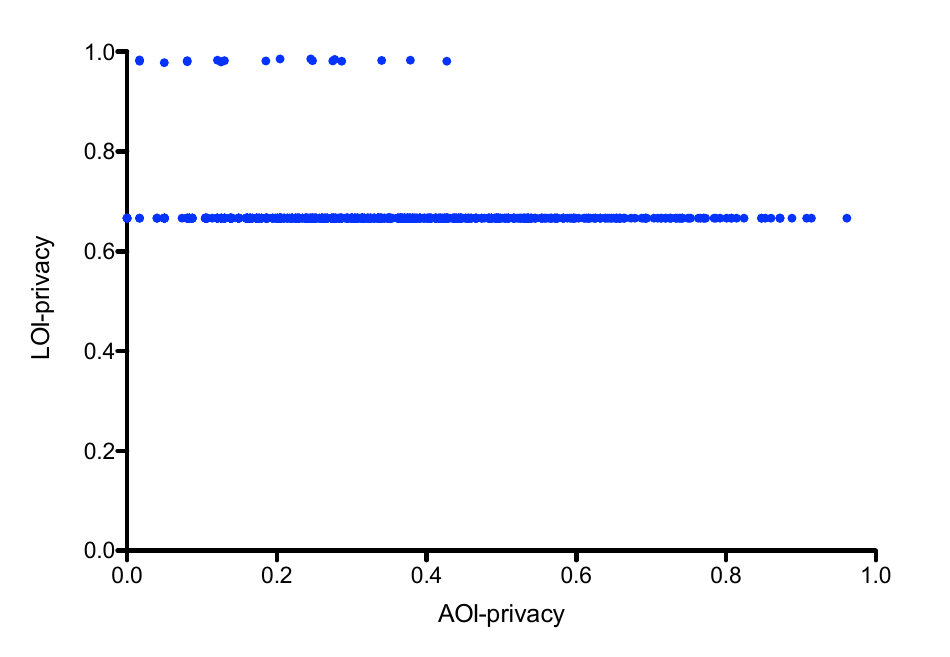}
\end{subfigure}
\caption{The user's CL-privacy and LOI-privacy, as the user's AOI-privacy is varied, over 1000 trials with \mbox{$k=3$}.} \label{fig:scatter-aoi} 
\end{figure*}

To confirm these results, we studied a metric related to AOI-privacy, more closely related to the
number of opportunities the adversary will have to intercept user queries.
That metric is the number of unique hops the user's $k$ queries
take (i.e., the number of query message transmissions between unique pairs of nodes),
which the user is able to control based on the chosen nodes to query.
Figure~\ref{fig:scatter-hops} shows the user's CL- and LOI-privacy, relative to the number of distinct hops taken
by their query messages over 1000 trials.
Again, the number of distinct hops taken by the user's queries is significantly
negatively correlated with both the user's resulting CL-privacy (\mbox{$P = 0.0097$}) and their
resulting LOI-privacy (\mbox{$P < 0.0001$}).  However, once again, the $R^2$ values are low:
\mbox{$R^2 = 0.0067$} and \mbox{$R^2 = 0.0318$} respectively.
In other words, the user's control over how dispersed their queries are will, probabilistically, influence
their CL- and LOI-privacy, but blind luck based on query routing near compromised nodes is a much
more important factor.

\begin{figure*} 
\centering 
\begin{subfigure}[b]{0.48\textwidth}
\centering
\includegraphics[width=\textwidth]{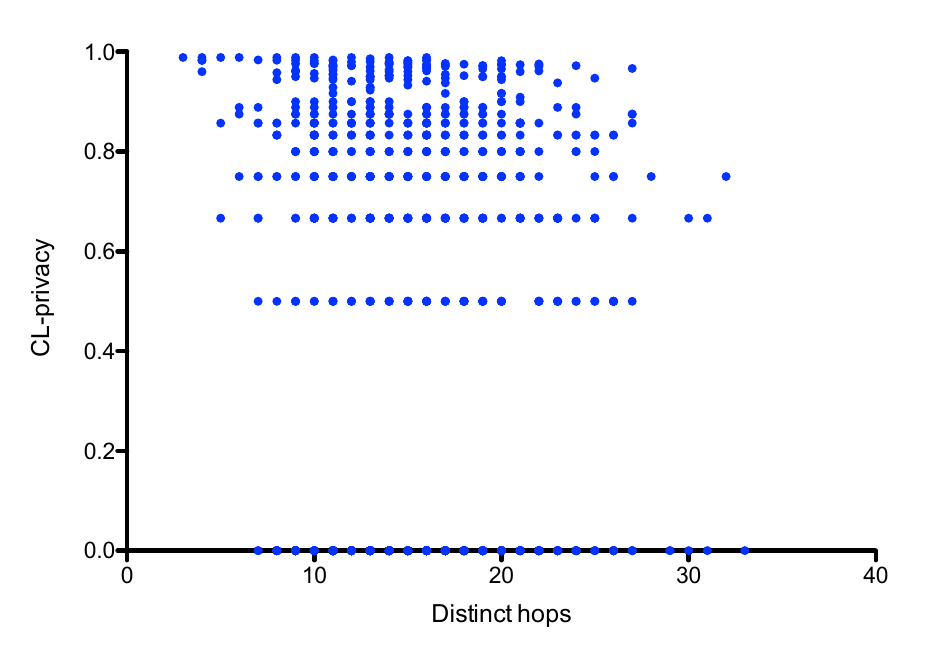}
\end{subfigure}
\hfill
\begin{subfigure}[b]{0.48\textwidth}
\centering
\includegraphics[width=\textwidth]{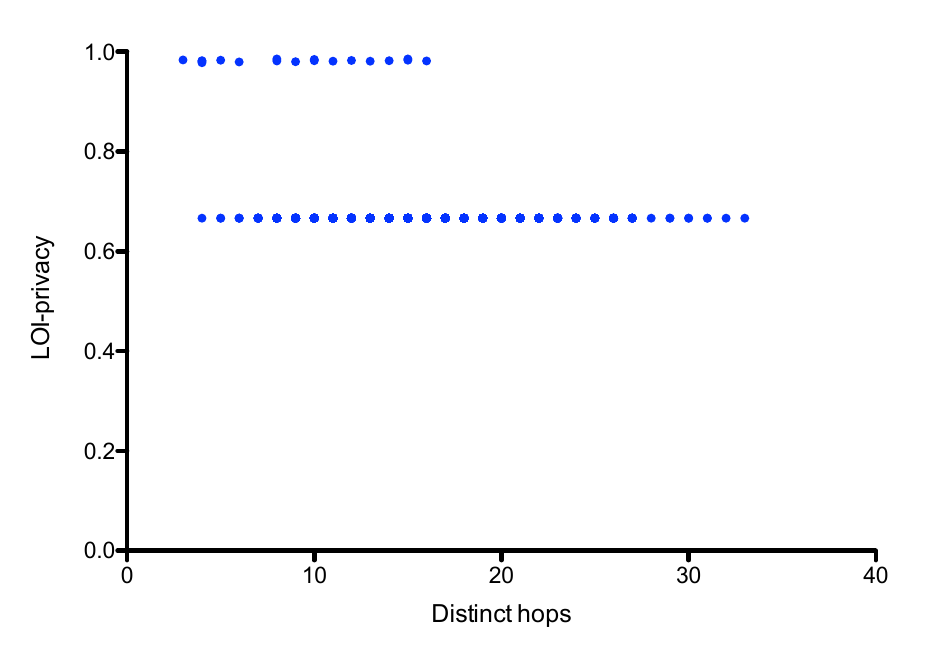}
\end{subfigure}
\caption{The user's CL-privacy and LOI-privacy, as the number of distinct hops in the queries issued by the user is varied, over 1000 trials with \mbox{$k=3$}.} 
\label{fig:scatter-hops} 
\end{figure*}

There are numerous other factors, potentially beyond the control of users of the network, that affect the adversary's
success as well.  One critical factor for the adversary is how many nodes are compromised in the network.  The
more nodes that are compromised, the more likely it is that the adversary will be able to intercept queries, learning
the locations queried and facilitating the computation of the user's current location.  This result is depicted in
Figure~\ref{fig:static-compromised-results}, showing the predictable result that increasing the number of nodes
compromised by the adversary decreases both the user's CL- and LOI-privacy.  Note, however, that
Figure~\ref{fig:static-compromised-results} illustrates the same trend first illustrated in
Figure~\ref{fig:static-k-privacy}: increasing the privacy constant $k$~---~a constant
that is entirely under the control of the user~---~decreases the user's CL-privacy
but increases the user's LOI-privacy, illustrating the trade-off between a user protecting the privacy of their
current location and that of their location of interest.

\begin{figure*} 
\centering 
\begin{subfigure}[b]{0.48\textwidth}
\centering
\includegraphics[width=\textwidth]{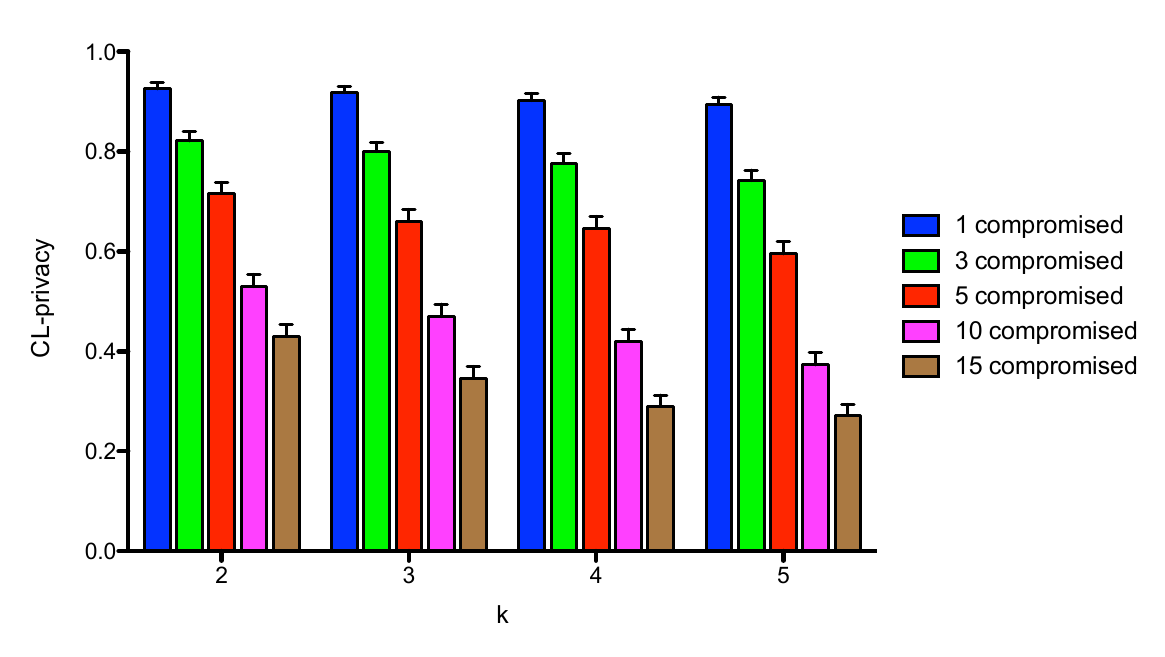}
\end{subfigure}
\hfill
\begin{subfigure}[b]{0.48\textwidth}
\centering
\includegraphics[width=\textwidth]{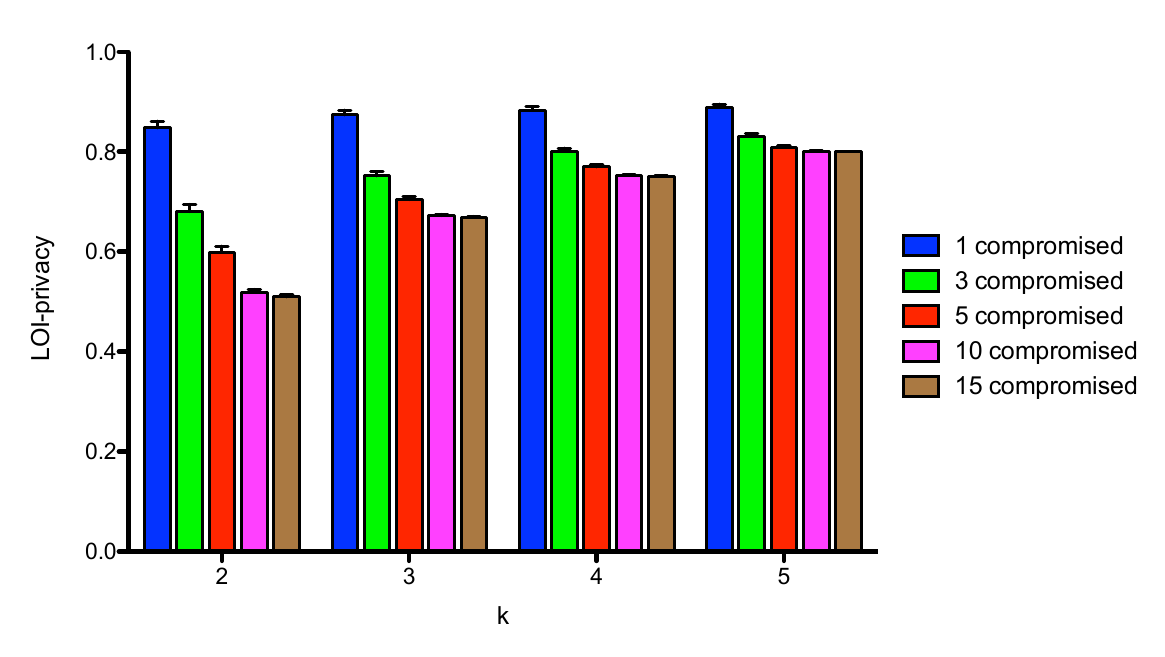}
\end{subfigure}
\caption{The user's CL-privacy and LOI-privacy, as the number of nodes compromised by the adversary varies, over 1000 trials.} 
\label{fig:static-compromised-results} 
\end{figure*}

A property that may or may not be under the control of the user is the type of encryption used in communication
between nodes.  Recall that the query proper can be encrypted end-to-end between the user and the queried
node using any number of available cryptographic techniques.  The query itself is not at issue.
The issue at hand is what information the adversary can glean from decrypting the information contained in the
headers of overheard messages.  So far, we have assumed that transmissions between the nodes in the
network are encrypted using a global key.  Should any node be compromised, the adversary learns the global key;
then, if any query transmission is overheard, the adversary learns the destination of that query, as well as
two nodes along the path to that destination (the transmitter and the intended recipient of the transmission).
However, it is possible that transmissions between nodes are encrypted using link keys~---~that is,
every pair of nodes in transmission range of each other share a unique symmetric key.  In this case,
an overheard query per se reveals to the adversary only a single node~---~the transmitter~---~in
the path from the user's current location to the queried location (we must assume that the adversary can
determine which node transmitted the query based on metrics such as signal strength, timing,
etc.~\cite{bouassida:sybil}).  It
is only if a compromised node is the recipient of the transmission that the adversary learns the ultimate
destination of the query (and, obviously, the recipient of the transmission).  A comparison of the user's CL-
and LOI-privacy based on whether global keys or link keys are used is illustrated
in Figure~\ref{fig:static-adversary-results}; for the purposes of comparison, an environment in which there is no
broadcast transmission (e.g., optical or wired communication~\cite{yick:survey}),
and thus the adversary learns nothing unless a compromised node is the recipient or transmitter of a message,
is also included.  Each increase in ``broadcast security'' resulted in a significant increase (\mbox{$P < 0.05$})
to the user's CL-privacy, as determined by
a one-way ANOVA test with Bonferroni's multiple comparison test on each incremental pair
(i.e., from a global key to link keys, and from link keys to
a non-broadcast environment).  The first increment illustrates the value to the adversary
of being able to learn more nodes along the communication path and the ultimate
destination of the communication, for the purposes of inferring from where the initial query could have come.  The
second increment illustrates the ultimate value to the adversary of overhearing anything at all.
On the other hand, the first increment in ``broadcast security'' resulted in a significant increase to the user's LOI-privacy
(\mbox{$P < 0.05$}), but the move from link keys to a non-broadcast environment did not (\mbox{$P > 0.10$}).
In other words, being able to decipher the ultimate destination of an overheard query that a compromised node
is not routing is helpful to an adversary in determining the user's location of interest, but an overheard query where
the adversary cannot decrypt the final destination is practically meaningless towards that goal.

\begin{figure*} 
\centering 
\begin{subfigure}[b]{0.48\textwidth}
\centering
\includegraphics[width=\textwidth]{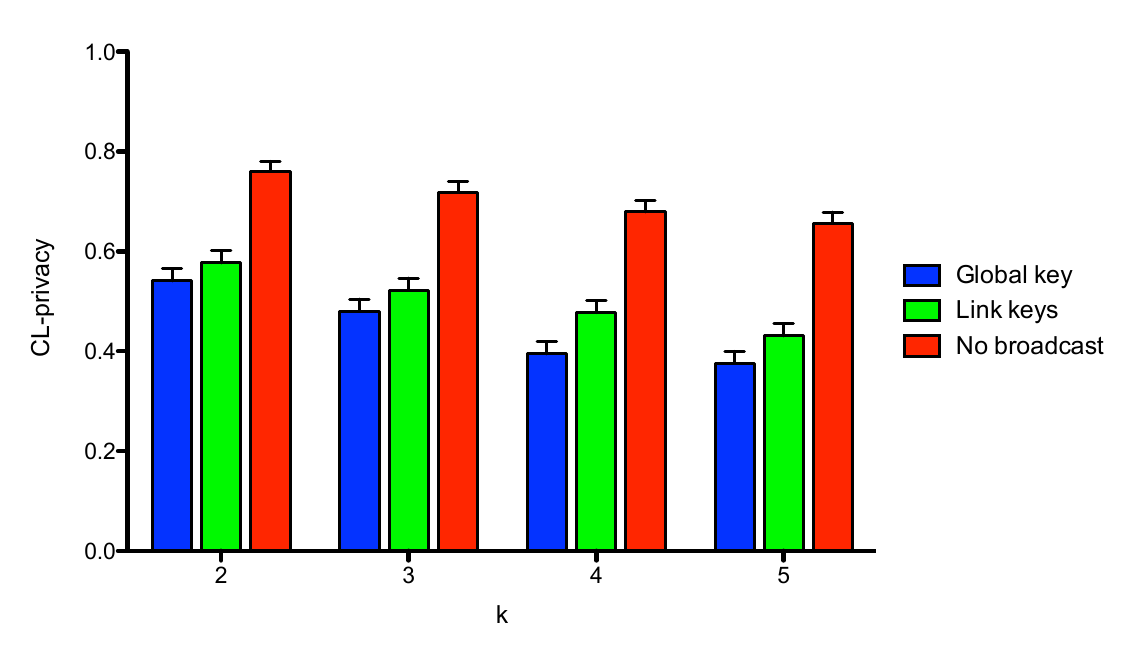}
\end{subfigure}
\hfill
\begin{subfigure}[b]{0.48\textwidth}
\centering
\includegraphics[width=\textwidth]{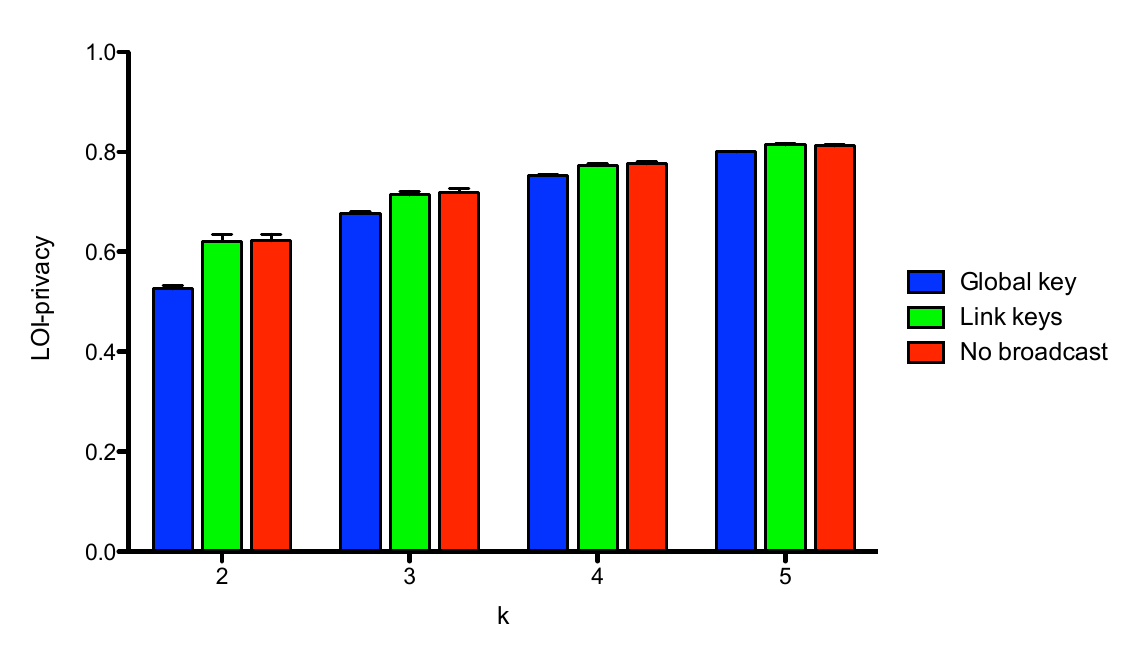}
\end{subfigure}
\caption{The user's CL-privacy and LOI-privacy, as the type of broadcast encryption varies, over 1000 trials.} 
\label{fig:static-adversary-results} 
\end{figure*}

Similar to the type of encryption used in the network, the type of routing used also may or may not be under
the control of the user. Here we compare the two types of routing described in Section~\ref{sect:ch2-cl}:
fixed shortest-path and random shortest-path routing.  In both cases, a query sent from the user's current location
to the query destination will arrive in the fewest number of hops possible; however, in fixed shortest-path routing,
a query sent between those two locations will always take the same path, whereas in random shortest path routing
it could potentially take many different paths.  The results are illustrated in Figure~\ref{fig:static-routing-results}.
Switching to random shortest-path routing significantly improved the user's CL-privacy, as the adversary could
not ascertain the origin of a query as easily from an overheard message.  Obviously, when using a global encryption
key, the type of routing used had no effect on the user's LOI-privacy.  However, the question remains: if link keys were
used instead of a global key, creating a situation in which the adversary could overhear query messages but not learn
their destination, would implementing random shortest-path routing increase the user's LOI-privacy by making it
more difficult for the adversary to narrow down possible destinations for overheard messages?  We repeated the
previous experiment, this time using link keys; the results are presented in Figure~\ref{fig:static-routing-link-results}.
Again, there was no distinction between the LOI-privacy results whether fixed shortest-path or random
shortest-path routing were used~---~in essence, being able to easily guess potential destinations
for overheard but non-decrypted packets is not highly important to the adversary.

\begin{figure*} 
\centering 
\begin{subfigure}[b]{0.48\textwidth}
\centering
\includegraphics[width=\textwidth]{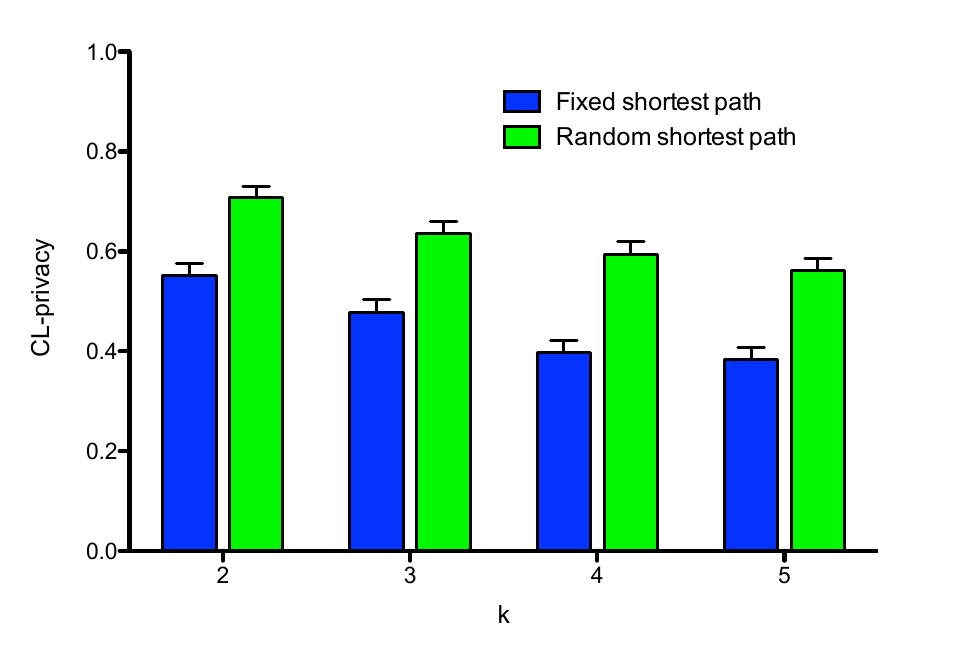}
\end{subfigure}
\hfill
\begin{subfigure}[b]{0.48\textwidth}
\centering
\includegraphics[width=\textwidth]{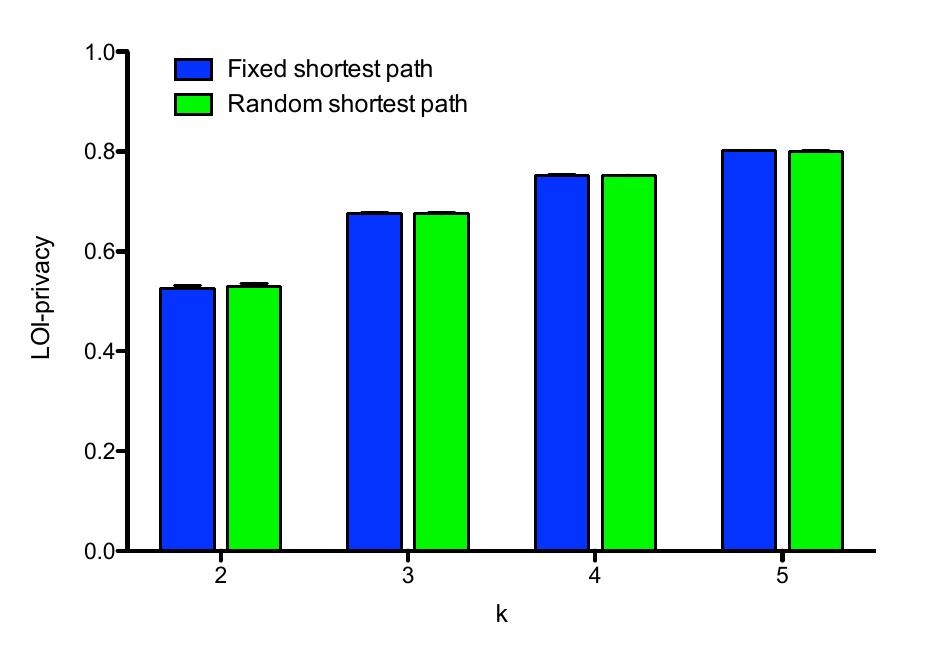}
\end{subfigure}
\caption{The user's CL-privacy and LOI-privacy, as the routing algorithm used in the network varies, over 1000 trials with a global key.} 
\label{fig:static-routing-results} 
\end{figure*}

\begin{figure*} 
\centering 
\begin{subfigure}[b]{0.48\textwidth}
\centering
\includegraphics[width=\textwidth]{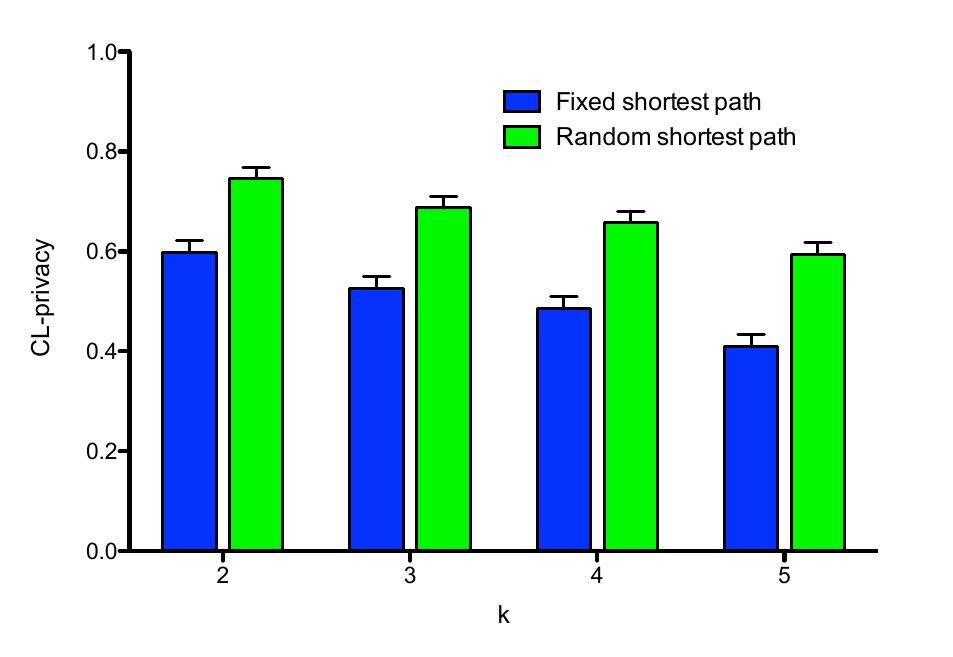}
\end{subfigure}
\hfill
\begin{subfigure}[b]{0.48\textwidth}
\centering
\includegraphics[width=\textwidth]{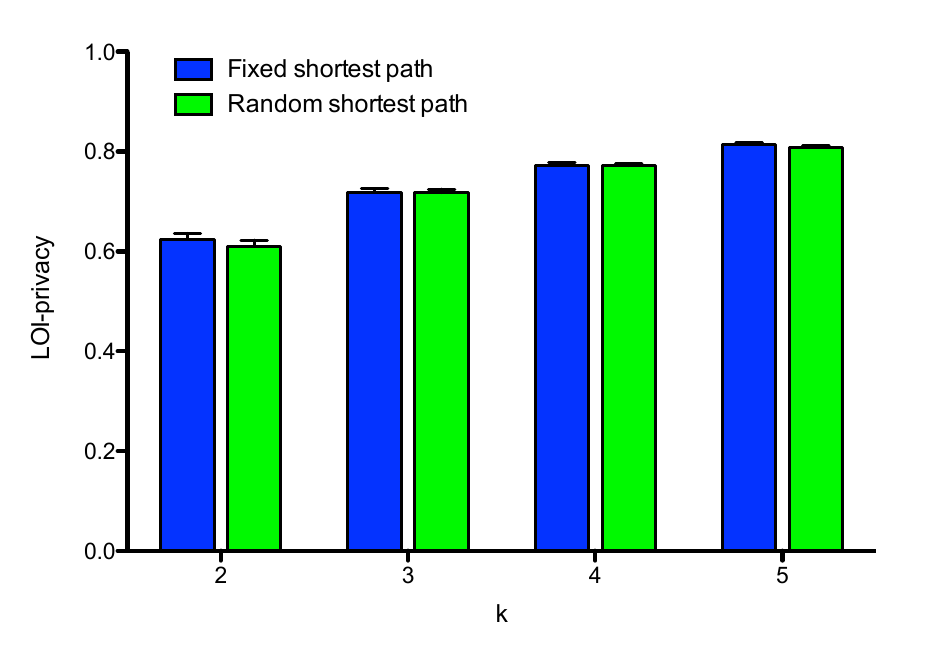}
\end{subfigure}
\caption{The user's CL-privacy and LOI-privacy, as the routing algorithm used in the network varies, over 1000 trials with link keys.} 
\label{fig:static-routing-link-results} 
\end{figure*}

\section{Experimental Results with Mobility}\label{sect:ch2-mobility}

Until now, we have run experiments in a uniform, featureless environment, in which for all nodes $N$ and $M$ in the network
\mbox{$\probC{N} = \probL{M}{N} = \frac{1}{|\mathcal{N}|}$}.
However, for a real network in an urban environment, some nodes will be located at more
population locations than other nodes.

To generate such a rich environment, we use DBS3 \cite{VOGT2012113} as the mobility simulator and the
maps of part of downtown Edmonton, Canada and of Fira, Greece shown in Figures \ref{fig:edmonton} and \ref{fig:fira}. The two maps are, respectively, a highly organized grid, and a more organic pattern  evolved from terrain constraints. %
We placed nodes centred on both maps in a grid pattern with 100~m spacing, yielding 112 nodes on the Edmonton map and
84 nodes on the Fira map.  On both maps, we initialized a 10-million-agent simulation, and computed the closest
node to each agent.  We then set
\begin{equation*}
\probC{N} = \displaystyle\frac{\nu_0(N)}{10000000}
\end{equation*}
for each \mbox{$N \in \mathcal{N}$}, where \mbox{$\nu_0(N)$} is the number of agents for whom $N$ is the
closest node at time zero.  We also set \mbox{$\probL{M}{N} = \probC{M}$} for each \mbox{$M \in \mathcal{N}$}.
In other words, the probability that a node is either the current location or the location of interest is proportional
to the amount of pedestrian traffic near it.
As before, the user's current location is chosen proportional to $\probC{N}$,
the user's real location of interest to be queried is chosen proportional to
$\probL{M}{N}$, and the \mbox{$k-1$} dummy queried locations are chosen uniformly at random. %
partitioning scheme).

\begin{figure}[htbp]
  \centering
  \begin{minipage}{0.48\textwidth}
    \centering
      \includegraphics[width=\linewidth]{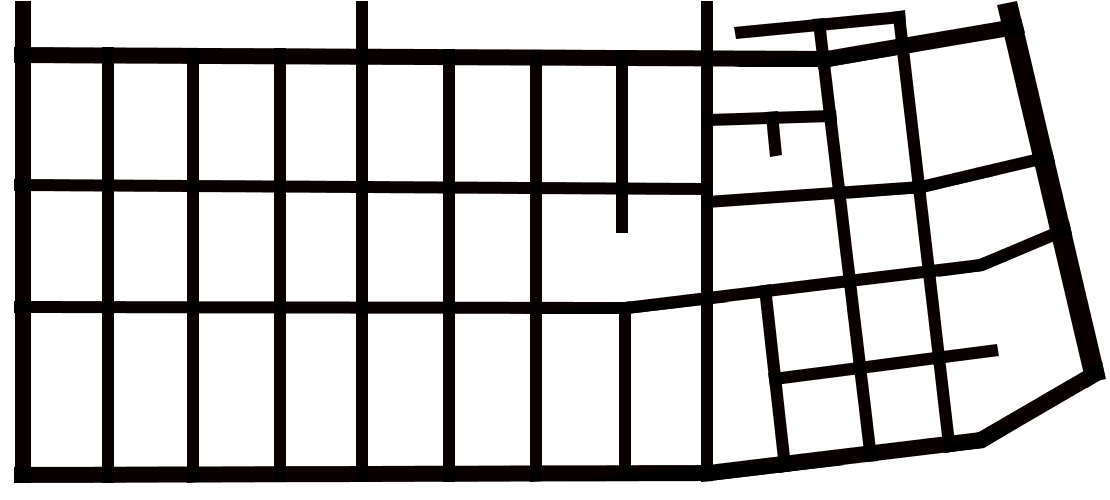}
      \caption{The map of downtown Edmonton.}\label{fig:edmonton}
  \end{minipage}
  \hfill
  \begin{minipage}{0.45\textwidth}
    \centering
  \includegraphics[width=\linewidth]{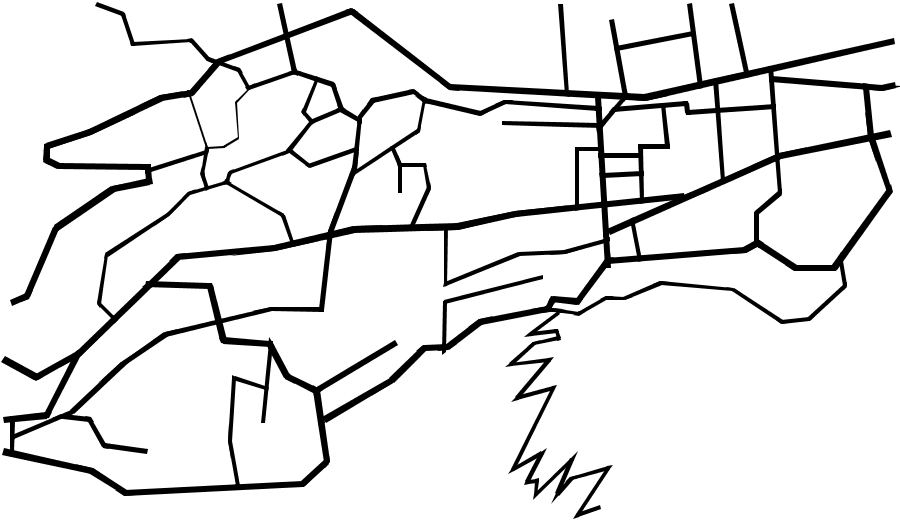}
  \caption{The map of Fira.}\label{fig:fira}
  \end{minipage}
\end{figure}

To determine how beneficial this statistical data is to an adversary, we compared the statistically rich environments
of Edmonton and Fira to otherwise equivalent environments in which
\mbox{$\probC{N} = \probL{M}{N} = \frac{1}{|\mathcal{N}|}$} for all nodes \mbox{$N, M \in \mathcal{N}$}.
As in the previous section,
message headers between nodes are encrypted using a global key that has been
compromised by the adversary.  The adversary has compromised 10 of the nodes in the network, and is using
them to eavesdrop on communications.  The user issues queries into the network from
their current location, and the adversary attempts to determine the user's current location and real location of
interest.  Because of the higher variance LOI-privacy results in the statistically rich environment, we increased
the number of trials from 1000 (used in the previous section) to 5000.

The results of this experiment are illustrated for the Edmonton map in Figure~\ref{fig:aggregate-edmonton}
and for the Fira map in Figure~\ref{fig:aggregate-fira}.  As evident from those graphs, both the user's CL- and
LOI-privacy fell considerably when this additional statistical data was available to the adversary.  Notably, LOI-privacy
was more seriously affected than CL-privacy by this additional information, because the random
partitioning scheme potentially resulted in low-probability LOIs being grouped with the user's real LOI.
Also, note that both privacy measures were affected more seriously in Fira, which is more complex from a
Space Syntax standpoint, and has more popular central thoroughfares.

\begin{figure*} 
\centering 
\begin{subfigure}[b]{0.48\textwidth}
\centering
\includegraphics[width=\textwidth]{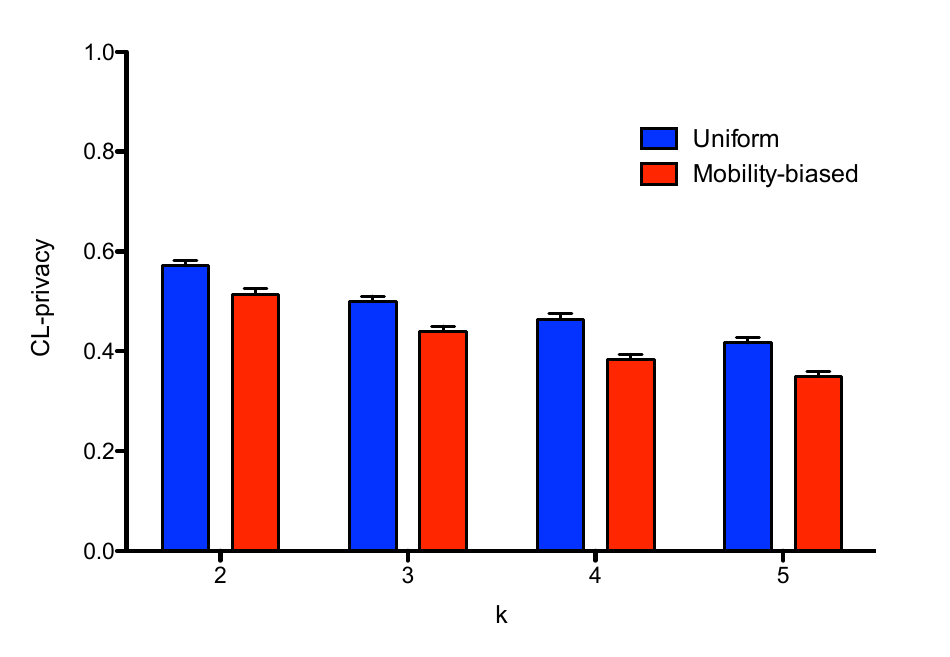}
\end{subfigure}
\hfill
\begin{subfigure}[b]{0.48\textwidth}
\centering
\includegraphics[width=\textwidth]{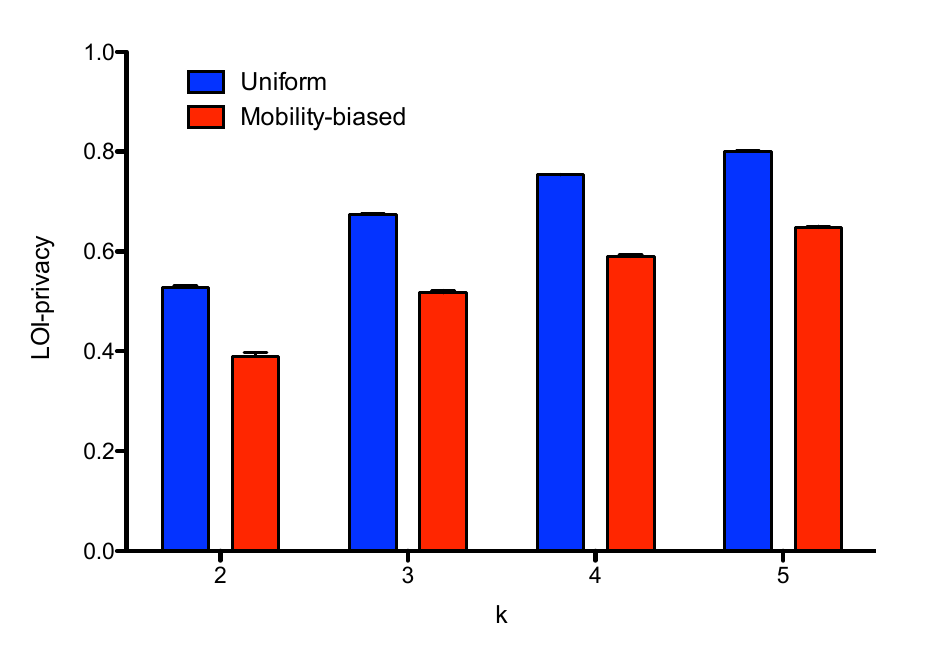}
\end{subfigure}
\caption{The user's CL-privacy and LOI-privacy, in statistically rich versus uniform environments, over 5000 trials in Edmonton.} 
\label{fig:aggregate-edmonton} 
\end{figure*}

\begin{figure*} 
\centering 
\begin{subfigure}[b]{0.48\textwidth}
\centering
\includegraphics[width=\textwidth]{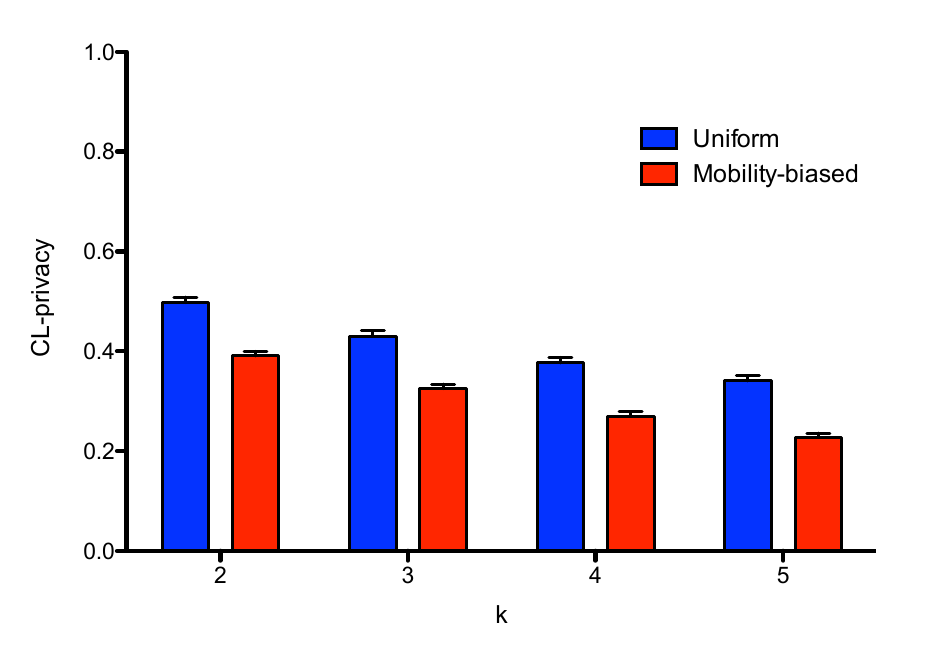}
\end{subfigure}
\hfill
\begin{subfigure}[b]{0.48\textwidth}
\centering
\includegraphics[width=\textwidth]{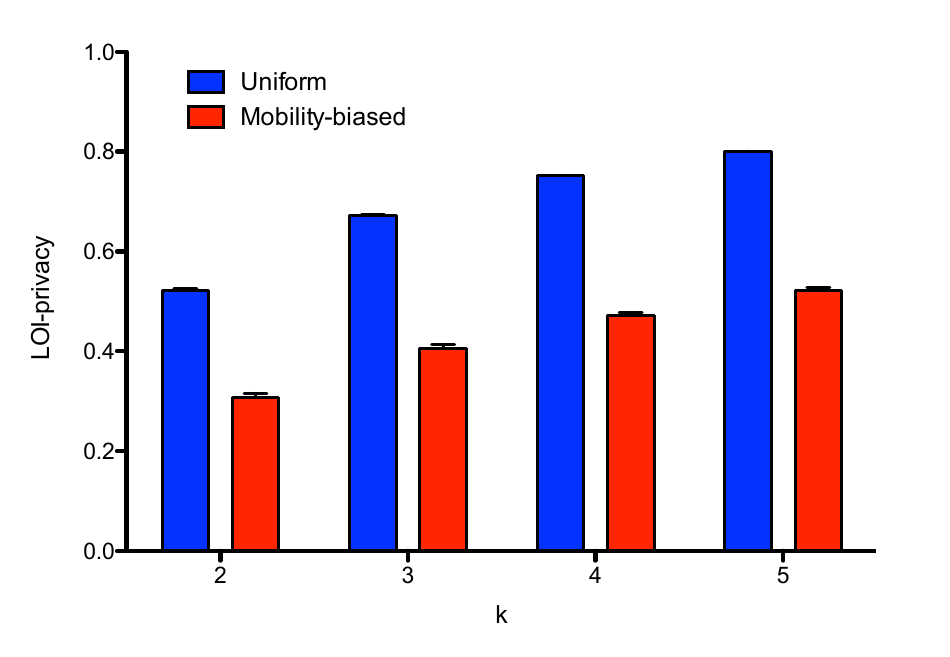}
\end{subfigure}
\caption{The user's CL-privacy and LOI-privacy, in the statistically rich versus uniform environments, over 5000 trials in Fira.} 
\label{fig:aggregate-fira} 
\end{figure*}

Another situation that could possibly benefit the adversary is if the adversary knows that the user is walking
from their current location to their location of interest in the near future.
In this way, the user's current location would affect the
probabilities that the various nodes in the network are the user's real location of interest, i.e.,
\probL{M}{N} for \mbox{$N, M \in \mathcal{N}$}.  To generate this even richer world, we advanced the
10-million-agent simulation to some time \mbox{$t \ge 0$}.  Keeping
\begin{equation*}
\probC{N} = \displaystyle\frac{\nu_0(N)}{10000000} \;,
\end{equation*}
we now set
\begin{equation*}
\probL{M}{N} = \displaystyle\frac{\nu_t(N, M)}{\nu_0(N)}
\end{equation*}
for each \mbox{$N, M \in \mathcal{N}$},
where \mbox{$\nu_t(N, M)$} is the number of agents for whom $N$ is the closest node at time zero
and $M$ is the closest node at time $t$.

To motivate that this additional short-time-horizon information could assist the
adversary, consider an adversary who, in the absence of any network communication, wants to predict the
user's location at some time \mbox{$t \ge 0$}.  Lacking any information about the user to utilize, the
adversary would just guess the most popular location on the map is the user's location, guessing successfully
with probability \mbox{$\max_{N \in \mathcal{N}}{\probC{N}}$}.
Compare that adversary to one who wants to predict the user's location at time \mbox{$t \ge 0$}, but also
has access to an oracle that reveals the user's location at time \mbox{$t = 0$}.
Such an adversary could successfully predict that user's location with probability
\begin{equation*}
\sum_{N \in \mathcal{N}}{\left[\probC{N} \cdot \max_{M \in \mathcal{N}}{\probL{M}{N}} \right]} \;.
\end{equation*}
These two adversaries are compared on the Edmonton map in Figure~\ref{fig:oracle-edmonton} and on
the Fira map in Figure~\ref{fig:oracle-fira}.  The more complex Fira map with few thoroughfares leads to a higher
baseline success rate for the adversary without an oracle than in Edmonton,
and the natural inclination of pedestrians to move towards
and stay on those thoroughfares results in the oracle-possessing adversary's success rate declining towards
the baseline more quickly than in Edmonton.  However, in either case, the adversary with the oracle has a clear
advantage over the adversary without the oracle up to a time threshold of \mbox{$t \approx$ 10--15}
(where $t$ is measured in minutes).

\begin{figure}[htbp]
  \centering
  \begin{minipage}{0.48\textwidth}
    \centering
    \includegraphics[width=\linewidth]{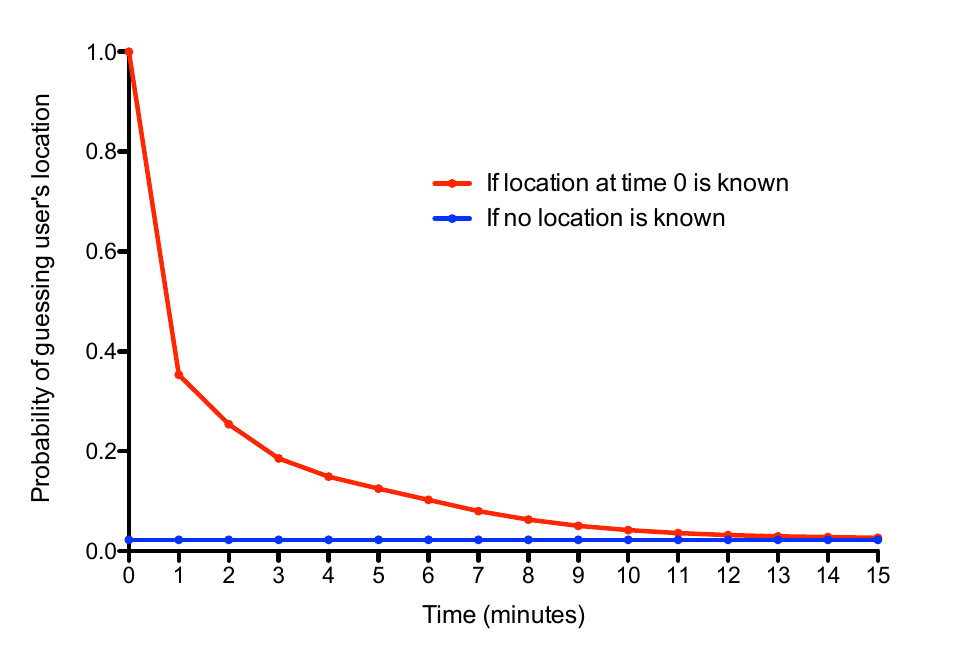}
    \caption{The success rate of an adversary predicting a user's location at time \mbox{$t \ge 0$} with and without an oracle revealing the user's location at time \mbox{$t = 0$}, in Edmonton.}\label{fig:oracle-edmonton}
  \end{minipage}
  \hfill
  \begin{minipage}{0.48\textwidth}
    \centering
    \includegraphics[width=\linewidth]{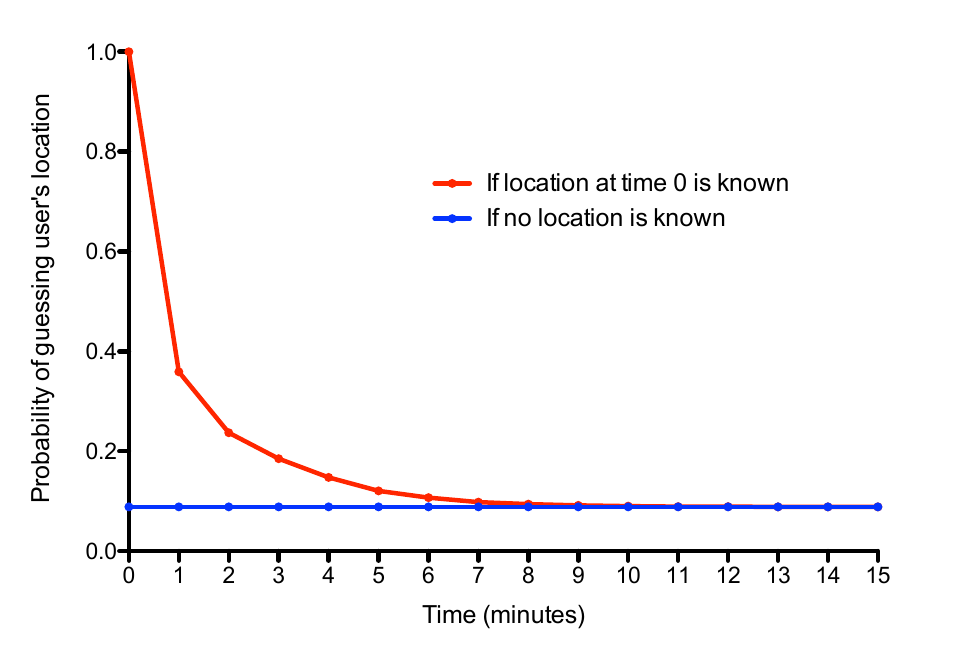}
    \caption{The success rate of an adversary predicting a user's location at time \mbox{$t \ge 0$} with and without an oracle revealing the user's location at time \mbox{$t = 0$}, in Fira.}\label{fig:oracle-fira}
  \end{minipage}
\end{figure}

To further investigate this concept of a time threshold, we reran the previous
$k$-query experiment on the Edmonton and Fira maps.  This time, however, we assumed that
the user is travelling from their current location to their location of interest~---~i.e., we used the
richer definition of \probL{M}{N}, for some chosen time threshold $t$ between when the user sends their
$k$ queries and when they arrive at their real location of interest.
The resulting CL- and LOI-privacy for the user in Edmonton, over varying
time thresholds, is
illustrated in Figure~\ref{fig:mobile-edmonton}; the results of the
same experiment in Fira are in Figure~\ref{fig:mobile-fira}.

\begin{figure*} 
\centering 
\begin{subfigure}[b]{0.48\textwidth}
\centering
\includegraphics[width=\textwidth]{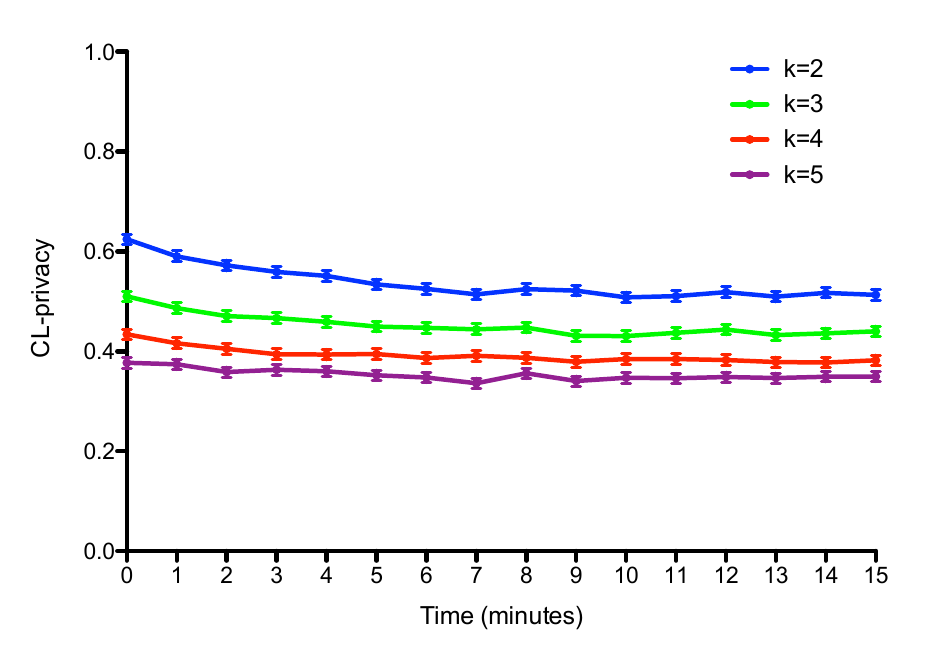}
\end{subfigure}
\hfill
\begin{subfigure}[b]{0.48\textwidth}
\centering
\includegraphics[width=\textwidth]{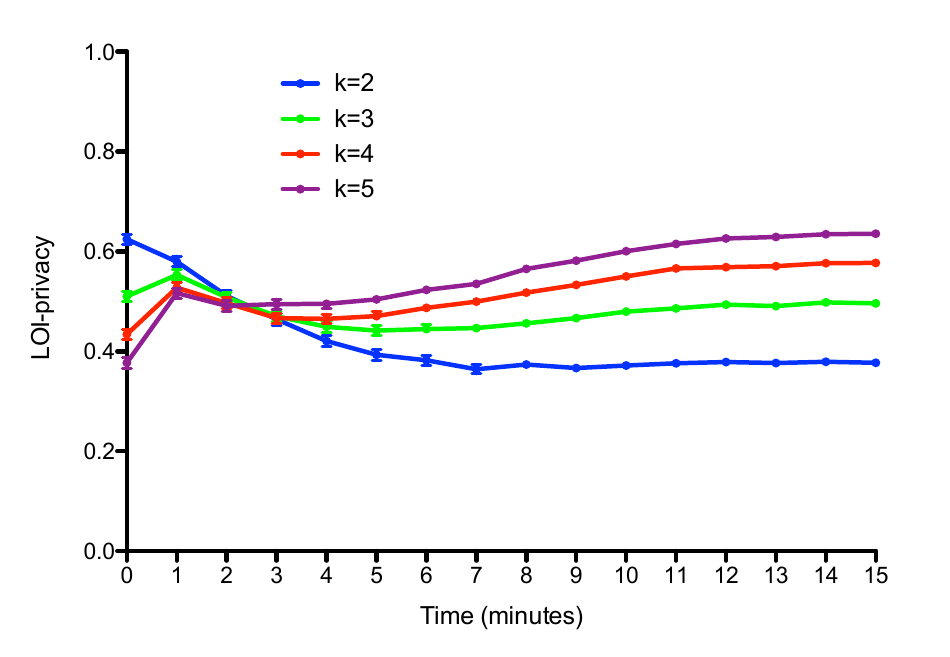}
\end{subfigure}
\caption{The user's CL-privacy and LOI-privacy, when the user queries their location of interest at varying times before arriving at it, over 5000 trials in Edmonton.} 
\label{fig:mobile-edmonton} 
\end{figure*}

\begin{figure*} 
\centering 
\begin{subfigure}[b]{0.48\textwidth}
\centering
\includegraphics[width=\textwidth]{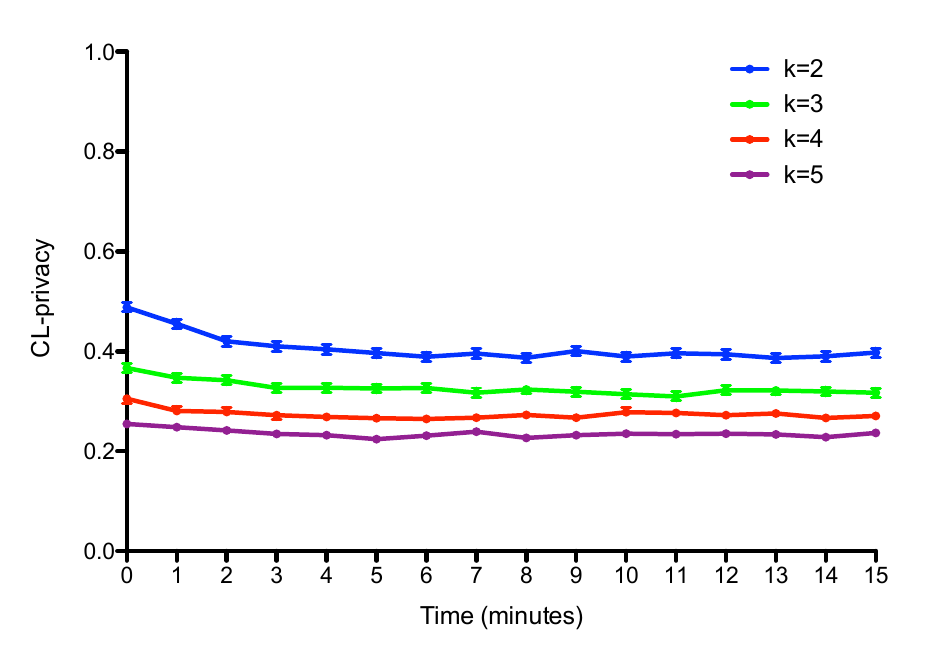}
\end{subfigure}
\hfill
\begin{subfigure}[b]{0.48\textwidth}
\centering
\includegraphics[width=\textwidth]{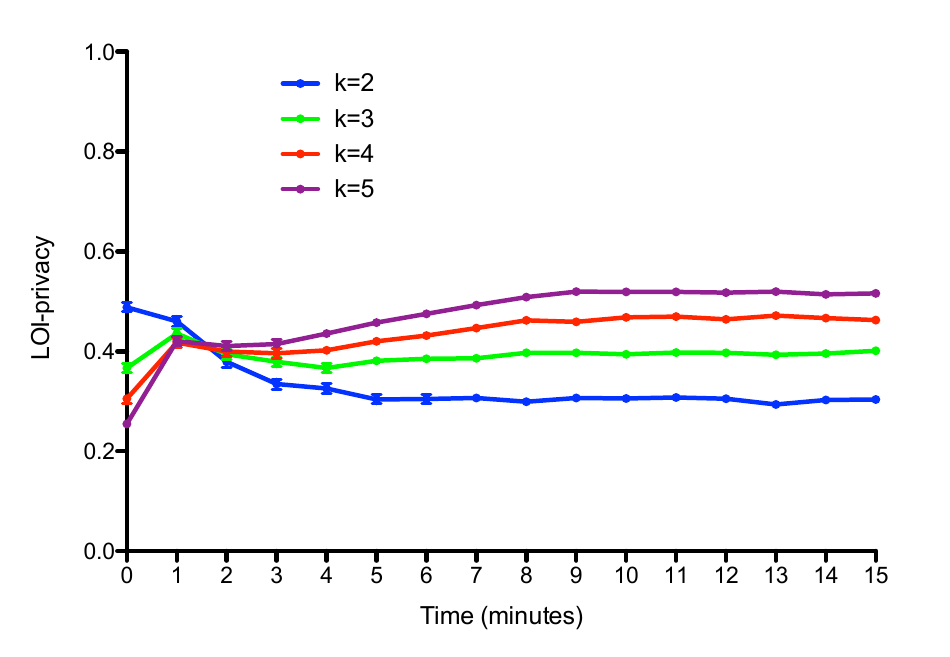}
\end{subfigure}
\caption{The user's CL-privacy and LOI-privacy, when the user queries their location of interest at varying times before arriving at it, over 5000 trials in Fira.} 
\label{fig:mobile-fira} 
\end{figure*}

Counter-intuitively, for small $t$ values, the user's CL-privacy is higher than at larger $t$ values.
Consider the case where, for \mbox{$t=0$},
one of the queried locations is guaranteed to be the user's current location,
effectively decrementing $k$ from the point of view of the adversary's ability to intercept
the user's messages.  Assuming that the user's CL-privacy has plateaued by \mbox{$t = 15$}
(based on the oracle experiment), we performed a
one-way ANOVA test with Bonferroni's multiple comparison test between the user's CL-privacy
at \mbox{$t = 15$} and each previous time (i.e., comparing
\mbox{$t = 0$} to \mbox{$t = 15$}, \mbox{$t = 1$} to \mbox{$t = 15$}, etc.).
In Edmonton using \mbox{$k=3$}, there was no significant difference (\mbox{$P > 0.05$})
in the CL-privacy at any time \mbox{$4 \le t \le 14$} and the CL-privacy at \mbox{$t = 15$}.
Similarly, in Fira using \mbox{$k=3$}, there was no significant difference (\mbox{$P > 0.05$})
in the CL-privacy at time \mbox{$3 \le t \le 14$} and the CL-privacy at \mbox{$t = 15$}.
Based on these results, we conclude that any benefit to the user's CL-privacy from a small $t$ threshold
from query to arrival has vanished by 4 and 3~minutes respectively on the Edmonton and Fira maps
when \mbox{$k=3$}.

Regarding the LOI-privacy results,
recall from Section~\ref{sect:ch2-results} the fundamental trade-off between the privacy of the user's current
location and that of their location of interest: as $k$ increases, the user's LOI-privacy increases at the cost of
their CL-privacy.  However, recall Lemma~\ref{lemma:equal}, which states that when the adversary knows the user's
current location is their real location of interest (e.g., in this experiment when \mbox{$t = 0$}),
the resulting CL- and LOI-privacies are identical.
This apparent paradox is visible in the LOI-privacy graphs wherein the lines representing the various $k$ values
cross as $t$ increases, before plateauing.  These graphs tell us that, when the adversary knows that the user's
real LOI is at or near their current location,
the LOI-privacy metric is really measuring the privacy of the user's current location.
Similar to the ANOVA and Bonferroni results on CL-privacy, there was no significant difference (\mbox{$P > 0.05$})
between the user's LOI-privacy at any time \mbox{$10 \le t \le 14$} and at time \mbox{$t = 15$} in Edmonton
using \mbox{$k=3$}; the same holds true in Fira using \mbox{$k=3$} for any time \mbox{$6 \le t \le 14$}.

From these results, we can conclude that if a user is travelling from their current location to their real location
of interest, they should send their $k$ queries long enough in advance of embarking
to mitigate any advantage the adversary has in predicting the user's real LOI.  How long is ``long enough'' will
depend on the map in question and the number of queries sent.  However, ``long enough'' in our experiments
was a surprisingly short period of time for movement at pedestrian mobility speeds, e.g., a mere
\mbox{$t = 10$ minutes} when using \mbox{$k=3$} on the Edmonton map.

\section{Conclusions}\label{sect:ch2-conclusion}

In this paper, we presented a user-controlled methodology for preserving the privacy of the
user's location of interest in a wireless multi-hop network: issuing $k$ queries instead of just one.
This methodology does not require any
special security measures in the network itself (as opposed to, e.g., onion routing).  There
is a direct trade-off between preserving the privacy of the user's location of interest and the privacy
of their current location, controlled by the user's choice of the security constant $k$, i.e., the number
of queries sent by the user.

The presence of user mobility data, reflecting the popularity of different locations, gives the adversary
a significant advantage in determining a user's current location and location of interest, relative to an
adversary operating in a featureless, uniform environment.  It is even more difficult for users
moving from their current location
to their location of interest to preserve their privacy;
however, those users can mitigate this advantage of the adversary by leaving a long enough time
between their query and their arrival at the location of interest.

Future work on this scheme should seek network partitioning schemes that yield guarantees of
area-of-interest privacy and/or location-of-interest privacy while still satisfying the closure requirement.
In particular, satisfying the closure condition in the presence of unequal location popularities
will be a substantial challenge.

\begin{acks}
This work was supported by the Natural Sciences and Engineering Research Council of Canada (NSERC) Discovery Grant program. Ryan Vogt was also supported by an Alberta Innovates / iCore scholarship. 
\end{acks}

\bibliographystyle{ACM-Reference-Format}
\bibliography{kanon,mobility}

\appendix

\section{NP-completeness Proofs} \label{hardness}

\begin{theorem}\label{thm:partition-graph}
{\sc Partition-Graph} is NP-complete.
\end{theorem}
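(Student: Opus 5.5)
Membership in NP is immediate. A certificate is the subset $S \subseteq V$ with $|S| = k$. We verify it by running breadth-first search from each vertex of $S$, which takes $\mathcal{O}(k(|V|+|E|))$ time, and checking that every pairwise shortest-path distance is at least $d$. Since $k \le |V|$, this is polynomial. One convention must be fixed here: if $G$ is disconnected, we treat the distance between different components as $\infty$, so such pairs trivially satisfy the bound.

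For hardness, we reduce from {\sc Independent-Set}: given a graph $H = \langle V_H, E_H\rangle$ and an integer $k$, is there an independent set of size $k$? We may assume $k \ge 2$, since the cases $k \le 1$ are trivial and can be mapped to fixed yes- or no-instances. The key observation is that, in any graph, two distinct vertices are at shortest-path distance at least $2$ exactly when they are non-adjacent. So the instance $(H, k)$ maps to $(G, k, d) = (H, k, 2)$. A set $S$ with $|S| = k$ and all pairwise distances $\ge 2$ is then precisely an independent set of size $k$ in $H$, and conversely. The reduction is the identity on the graph, so it is clearly polynomial, and this completes NP-completeness.

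The only points needing care concern conventions. First, a disconnected $H$ must be allowed, with infinite distance counting as ``at least $d$''. If the intended problem requires $G$ to be connected, I would instead add a universal vertex $u$ adjacent to all of $V_H$. This makes every pair of original vertices at distance at most $2$, while preserving exactly which pairs are at distance $1$. The resulting instance asks for a set of size $k$ with pairwise distance $\ge 2$. The apex $u$ is adjacent to everything, so it can belong to such a set only when $k = 1$, which we excluded. Hence the equivalence with independent sets in $H$ is preserved.

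Second, if one prefers a reduction that also works for larger $d$, as a warm-up for {\sc Partitioning-Graph}, subdividing each edge of $H$ $t$ times scales distances. However, new subdivision vertices could then be chosen into $S$, so I would keep $d=2$. I expect the connectivity and degenerate-$k$ bookkeeping to be the main, though minor, obstacle.
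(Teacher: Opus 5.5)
Your proposal is correct and follows the paper's proof: membership in NP via breadth-first searches from the vertices of the candidate set, and hardness by sending an \textsc{Independent-Set} instance $\langle G, k\rangle$ to the \textsc{Partition-Graph} instance $\langle G, k, 2\rangle$, since pairwise distance at least two is exactly non-adjacency. Your additional bookkeeping (excluding $k \le 1$, the distance convention for disconnected graphs, and the optional apex vertex) handles details the paper leaves implicit, without changing the argument.
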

\begin{proof}
First, we show that {\sc Partition-Graph} is in NP.  Consider a candidate solution.
The distance between any two vertices in the candidate solution could be established by, e.g.,
\mbox{$\mathcal{O}(k)$} breadth-first searches, each running in polynomial time and space.
The candidate solution could then be verified by checking the distance between all pairs of
vertices in the solution, requiring \mbox{$\mathcal{O}(k^2)$} distance lookups.

Next, we perform the Turing reduction.
Let \mbox{$\proc{A}(G, k, d)$} be an algorithm that solves {\sc Partition-Graph} (i.e., returns
either \const{True} or \const{False}).
Now, consider the following problem:

\noindent
\item {\sc Independent-Set}: You are given a graph \mbox{$G = \langle V, E\rangle$},
\mbox{$k \in \mathbb{Z} \le |V|$}.
Is there a subset \mbox{$S \subseteq V$} with \mbox{$|S| \ge k$}
such that no two vertices in $S$ are joined by an edge in $E$?

{\sc Independent-Set} is known to be NP-complete~\cite{karp:reducibility}.
Note that an independent set of size $k$ is equivalent to a partition of size $k$ with
a distance of at least two hops between all pairs of vertices in the partition.  Therefore,
an instance \mbox{$\langle G, k\rangle$} of {\sc Independent-Set} can be solved with a single
call to the {\sc Partition-Graph} solver, \mbox{$\proc{A}(G, k, 2)$}.
\end{proof}

\begin{theorem}
{\sc Partitioning-Graph} is NP-complete.
\end{theorem}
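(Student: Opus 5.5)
Membership in NP is the easy half and goes as in Theorem~\ref{thm:partition-graph}. A certificate is an assignment of each vertex to one of the sets $T_1,\ldots,T_m$, with $m \le |V|$. To verify it, we check in linear time that the sets are disjoint, cover $V$, and satisfy $k_\mathit{min} \le |T_i| \le k_\mathit{max}$. We then run one breadth-first search from each vertex, giving all-pairs shortest-path distances in polynomial time, and confirm that every pair inside a common $T_i$ is at distance at least $d$.

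For hardness, I would reduce from a known NP-complete partition-into-independent-sets problem. The natural choice is {\sc Graph-$3$-Colourability}, or more generally {\sc Chromatic-Number}~\cite{karp:reducibility}. With $d=2$, a set $T_i$ whose pairwise distances are all at least $2$ is exactly an independent set. So a partitioning with $d=2$ is a proper colouring in which each colour class has size between $k_\mathit{min}$ and $k_\mathit{max}$. The only difficulty is the size constraints, since a colouring places no restriction on class sizes.

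The cleanest route is to set $k_\mathit{min}=1$ and choose $k_\mathit{max}$ to force the number of classes. I would reduce from a balanced version: ``can $G$ be partitioned into $q$ independent sets each of size exactly $|V|/q$''. To get this from ordinary $q$-colourability of $G$ on $n$ vertices, form $G'$ as the disjoint union of $q$ copies of $G$, and set $k_\mathit{min}=k_\mathit{max}=n$ and $d=2$.
\begin{itemize}
\item If $G$ has a $q$-colouring, colour copy $j$ with the colours cyclically shifted by $j$. Each colour class of $G'$ then contains exactly one full colour class of $G$ from each shift, so its size is $\sum_c |\text{class } c| = n$. This gives $q$ independent sets, each of size $n$.
\item Conversely, any partitioning of $G'$ into independent sets restricts to a proper colouring of a single copy of $G$. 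The number of parts is $qn/n = q$, so that copy is $q$-coloured.
\end{itemize}
Distances across different copies are infinite, hence at least $2$, which is consistent with the construction. The reduction is clearly polynomial for fixed $q=3$. Because the problem statement requires the graph to be connected only implicitly, if at all, I would note that nothing in {\sc Partitioning-Graph} demands connectivity.

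The main obstacle is this size-balancing step, namely making the equal-cardinality constraint harmless. The cyclic-shift construction is the step I would check most carefully. If disconnected instances were disallowed, I would instead add a universal-distance gadget that keeps the copies connected while preserving distance at least $2$ between copies, for example joining the copies through paths of length $2$ via fresh vertices. That gadget requires re-balancing the class sizes, and is the fiddly part.
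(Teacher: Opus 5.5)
Your reduction is correct, but it takes a different route from the paper. The paper reduces from \textsc{Partition-Into-Triangles}. It complements $G$ (with $|V|=3q$) and calls the solver on the complement with $k_\mathit{min}=k_\mathit{max}=3$ and $d=2$. This works because a triangle of $G$ is exactly a $3$-set that is independent in the complement, i.e.\ pairwise at distance at least $2$, and the size bounds force exactly $q$ parts. Since that source problem already carries the exact-size constraint, no balancing is needed.

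You instead start from $3$-colourability, which has no size constraints, and manufacture them with $q$ disjoint copies and cyclically shifted colourings. Both directions check out. Class $t$ of the union takes colour class $C_{(t-j) \bmod q}$ from copy $j$, so it has size $\sum_c |C_c| = n$. Conversely, $k_\mathit{min}=n$ allows at most $qn/n=q$ parts, which restrict to a $q$-colouring of one copy.

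The trade-off is this. The paper's argument is shorter and shows hardness already for a small fixed part size $3$, close to the $3$--$5$ partitions the paper actually uses; fixed size $2$ would be polynomial, being a perfect matching in the graph joining vertices at distance at least $d$. Yours uses a more standard source problem and shows hardness for a fixed number of parts of unbounded size, at the cost of the balancing gadget.

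Two minor fixes. First, your sentence ``set $k_\mathit{min}=1$ and choose $k_\mathit{max}$ to force the number of classes'' is backwards. With $k_\mathit{min}=1$ every instance is a yes-instance (take all singletons), and $k_\mathit{max}$ only bounds the number of parts from below. It is $k_\mathit{min}=n$ that caps the parts at $q$, and your converse direction relies on exactly that. Since your actual construction sets $k_\mathit{min}=k_\mathit{max}=n$, the proof is unaffected.

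Second, the connectivity contingency is unnecessary. The problem as defined allows disconnected graphs, and the paper's own complement construction can produce them too: a vertex adjacent to everything in $G$ becomes isolated. So treating distances between components as infinite is the intended reading.
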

\begin{proof}
First, we show that {\sc Partitioning-Graph} is in NP.  Consider a candidate solution.
The distance between any two vertices in the graph could be established by, e.g.,
\mbox{$\mathcal{O}(|V|)$} breadth-first searches, each running in polynomial time and space.
The candidate solution could then be verified by checking the distance between all pairs of
vertices in the same partition in the solution, requiring \mbox{$\mathcal{O}(|V|^2)$} distance lookups.

Next, we perform the Turing reduction.
Let \mbox{$\proc{A}(G, k_\mathit{min}, k_\mathit{max}, d)$} be an algorithm that solves {\sc Partitioning-Graph}
(i.e., returns either \const{True} or \const{False}).
Now, consider the following problem:

\noindent
{\sc Partition-Into-Triangles}: You are given a graph \mbox{$G = \langle V, E\rangle$} where
\mbox{$|V| = 3q$} for some \mbox{$q \in \mathbb{Z}$}.
Can $V$ be partitioned into disjoint sets \mbox{$V_1, \ldots, V_q$}, where \mbox{$|V_i| = 3$} and
$V_i$ forms a clique, for all $i$?

{\sc Partition-Into-Triangles} is known to be NP-complete~\cite{schaefer:triangles,garey:computers}.
Given an instance \mbox{$\langle G \rangle$} of {\sc Partition-Into-Triangles}, first construct the
complement graph, \mbox{$G^\prime = \langle V, E^\prime \rangle$}, in which two vertices are connected
by an edge if and only if they were not connected by an edge in $G$.  Note that $V$ on $G$ can be partitioned
into disjoint sets $V_i$ with \mbox{$|V_i| = 3$} that all form cliques, if and only if $V$ on $G^\prime$ can be
partitioned into disjoint sets $V_i$ with \mbox{$|V_i| = 3$} that all form independent sets.
Having no edges connecting the three nodes is equivalent to them being a partition of size three
with a distance of at least two hops between all pairs of vertices in the partition.
Therefore, calling the {\sc Partitioning-Graph} solver a single time,
\mbox{$\proc{A}(G^\prime, 3, 3, 2)$},
yields the solution to a {\sc Partition-Into-Triangles} instance \mbox{$\langle G \rangle$}.
\end{proof}

\section{Proofs of Lemmas and Theorems}\label{longproofs}
 
\begin{proof}[Proof of Lemma \ref{lemma:set-union}]
If there is no way to choose a unique integer from each of \mbox{$\mathcal{S}_1, \ldots, \mathcal{S}_m$}
(including if \mbox{$\mathcal{S}_i = \emptyset$} for any $i$), then
\mbox{$\mathcal{G}_m^\prime = \emptyset$}, and we are done.
Assume there is a way to choose $m$ unique integers,
one from each of the $\mathcal{S}_i$ (which implies that \mbox{$\mathcal{S}_i \neq \emptyset$} for all $i$).
The proof will be by induction on \mbox{$m \in \mathbb{Z} \ge 1$}.

If \mbox{$m = 1$}, then \mbox{$\mathcal{G}_m^\prime = \left\{\left\{s\right\}~|~s \in \mathcal{S}_1\right\}$},
because \mbox{$\{s\}$} is a sampling if and only if \mbox{$s \in \mathcal{S}_1$}.
As such,
\begin{equation*}
\bigcup_{\mathcal{G}_m \in \mathcal{G}_m^\prime}\mathcal{G}_m
= \bigcup_{s \in \mathcal{S}_1}\left\{s\right\}
= \mathcal{S}_1
= \bigcup_{1 \le i \le m}\mathcal{S}_i\;.
\end{equation*}

Assume for \mbox{$j \in \mathbb{Z} \ge 1$},
\begin{equation*}
\bigcup_{\mathcal{G}_j \in \mathcal{G}_j^\prime}\mathcal{G}_j = \bigcup_{1 \le i \le j}\mathcal{S}_i\;,
\end{equation*}
and let \mbox{$\mathcal{S}_{j+1} \subseteq \mathbb{Z}$} be another set of integers.
If it is not possible to choose \mbox{$j+1$} unique integers, one from each of
\mbox{$\mathcal{S}_1, \ldots, \mathcal{S}_{j+1}$}, then
\mbox{$\mathcal{G}_{j+1}^\prime = \emptyset$} and we are done.
Assume it is possible to choose \mbox{$j+1$} unique integers as described above.
If there are any elements \mbox{$e \in \mathcal{S}_{j+1}$} that are not contained in
\mbox{$\cup_{1 \le i \le j}\mathcal{S}_i$}, then a sampling \mbox{$\mathcal{G}_{j+1}$} could be constructed as
\mbox{$\mathcal{G}_{j+1} = \mathcal{G}_j \cup \{e\}$}, for any sampling \mbox{$\mathcal{G}_j$}
on \mbox{$\mathcal{S}_1, \ldots, \mathcal{S}_j$}.
Hence,
\begin{equation*}
\bigcup_{\mathcal{G}_{j+1} \in \mathcal{G}_{j+1}^\prime}\mathcal{G}_{j+1} = \bigcup_{1 \le i \le j+1}\mathcal{S}_i\;,
\end{equation*}
because there is guaranteed to be a sampling \mbox{$\mathcal{G}_{j+1}$} that contains any
element of \mbox{$\mathcal{S}_{j+1}$} not contained in \mbox{$\cup_{1 \le i \le j}\mathcal{S}_i$}.
\end{proof}
\begin{proof}[Proof of Lemma \ref{lemma:equal}]
First, note that if $\probL{N}{M}$ is as defined, then %
\begin{align*}
\phi(N, M) &=\begin{cases}
\displaystyle\frac{\probL{M}{N} \cdot \chi(N, M)}
{\sum_{P \in \mathcal{L}_N}{\probL{P}{N} \cdot \chi(N, P)}}
& \mbox{if $M \in \mathcal{L}_N$} \\
0 & \mbox{$if M \not\in \mathcal{L}_N$}\end{cases} \\
&=\begin{cases}
\displaystyle\frac{1 \cdot \chi(N, M)}{1 \cdot \chi(N, M) + \sum_{P \in \mathcal{L}_N \setminus \{M\}}{0 \cdot \chi(N, P)}} & \mbox{if $M \in \mathcal{L}_N$ and $N = M$} \\
0 & \mbox{otherwise}\end{cases} \\
&=\begin{cases}
1 & \mbox{if $M \in \mathcal{L}_N$ and $N = M$} \\
0 & \mbox{otherwise.}\end{cases}
\end{align*}
We can further simplify $\phi$ by noting that if \mbox{$N = M$}, then under our assumption about
$\probL{N}{M}$ (which implies one of the $k$ queried locations will always be the user's current location),
it is guaranteed that \mbox{$M \in \mathcal{L}_N$}.  If the adversary has compromised the
user's current location $C$, the adversary would process the query directed to the real location of interest (i.e., $C$)
and add $C$ to \mbox{$\mathcal{L}_C$}.  If the adversary has not compromised the user's current location,
then the adversary would not overhear all of the user's $k$ queries, and would, for any possible current location
$N$, add $N$ to \mbox{$\mathcal{L}_N$},
since a query to a non-compromised current location always represents a possible
assignment of a destination to a query with an unknown destination.  As such,
\begin{equation*}
\phi(N,M) =\begin{cases}
1 & \mbox{if $N = M$} \\
0 & \mbox{otherwise.}\end{cases}
\end{equation*}

Using the simplified form of $\phi$, we can simplify $\omega$ as well:
\begin{align*}
\omega\left(M\right) &= \sum_{N \in \mathcal{C}}{\psi\left(N,M\right)} \\
&= \sum_{N \in \mathcal{C}}{\probpC{N}\cdot\phi\left(N,M\right)} \\
&=\begin{cases}
\probpC{M} \cdot 1 + \sum_{N \in \mathcal{C} \setminus \{M\}}{\probpC{N} \cdot 0} & \mbox{if $M \in \mathcal{C}$} \\
\sum_{N \in \mathcal{C}}{\probpC{N}\cdot 0} & \mbox{if $M \not\in \mathcal{C}$}\end{cases} \\
&=\begin{cases}
\probpC{M} & \mbox{if $M \in \mathcal{C}$} \\
0 & \mbox{if $M \not\in\mathcal{C}$\;.}\end{cases}
\end{align*}

Next, observe that \mbox{$\mathcal{C} \subseteq \cup_{N \in \mathcal{C}}{\mathcal{L}_N}$}, by our earlier
observation that \mbox{$N \in \mathcal{L}_N$} for all \mbox{$N \in \mathcal{C}$}.
Therefore,
\begin{equation*}
\max_{M \in \cup_{N \in \mathcal{C}}{\mathcal{L}_N}}{\omega\left(M\right)} = \max_{M \in \mathcal{C}}{\omega\left(M\right)} \;,
\end{equation*}
since \mbox{$\omega(M) = 0$} for any \mbox{$M \in \cup_{N \in \mathcal{C}}{\mathcal{L}_N} \setminus \mathcal{C}$}.
Hence,
\begin{align*}
\proc{M}_\mathit{LOI} &= 1 - \omega\left(L^*\right) = 1 - \max_{M \in \cup_{N \in \mathcal{C}}{\mathcal{L}_N}}{\omega\left(M\right)} \\
&= 1 - \max_{M \in \mathcal{C}}{\omega\left(M\right)} = 1 - \max_{M \in \mathcal{C}}{\probpC{M}} \\
&= \proc{M}_\mathit{CL} \;,
\end{align*}
demonstrating that the two measures are equivalent when the adversary knows that the user's current location is
their real location of interest.
\end{proof}

\begin{proof}[Proof of Lemma \ref{lemma:compromised}]
If the node at the user's current location $C$ is compromised, the adversary will ``guess'' with certainty
that the user's current location is \mbox{$C^* = C$}~---~i.e., \mbox{$\mathcal{C} = \{C\}$}.  The adversary
will also know all of the locations queried, i.e., \mbox{$\mathcal{L}_C = \mathcal{L}$}.
In this case, the user's CL-privacy drops to
\begin{align*}
\proc{M}_\mathit{CL} &= 1 - \probpC{C^*} \\
&= 1-1 = 0 \;.
\end{align*}
The user's LOI-privacy also drops to
\begin{align*}
\proc{M}_\mathit{LOI} &= 1 - \omega\left(L^*\right) \\
&= 1 - \max_{M \in \cup_{N \in \mathcal{C}}{\mathcal{L}_N}}{\omega\left(M\right)} = 1 - \max_{M \in \mathcal{L}}{\omega\left(M\right)} = 1 - \max_{M \in \mathcal{L}}{\left[\sum_{N \in \mathcal{C}}{\psi\left(N, M\right)}\right]} \\
&= 1 - \max_{M \in \mathcal{L}}{\psi\left(C^*, M\right)} = 1 - \max_{M \in \mathcal{L}}{\left[ \probpC{C^*} \cdot \phi\left(C^*, M \right)\right]} = 1 - \max_{M \in \mathcal{L}}{\phi\left(C^*, M\right)} \\
&= 1 - \max_{M \in \mathcal{L}}{\left[\displaystyle\frac{\probL{M}{C^*} \cdot \chi(C^*, M)}{\sum_{P \in \mathcal{L}}{\probL{P}{C^*} \cdot \chi(C^*, P)}}\right]} \;.
\end{align*}
Note that, with the user's current location compromised, \mbox{$M \in \mathcal{K}$} for all
\mbox{$M \in \mathcal{L}$}.  Therefore, \mbox{$\chi(C^*, M) = 1$} for all \mbox{$M \in \mathcal{L}$}.  Hence,
\begin{align*}
\proc{M}_\mathit{LOI}
&= 1 - \max_{M \in \mathcal{L}}{\left[\displaystyle\frac{\probL{M}{C^*} \cdot \chi(C^*, M)}{\sum_{P \in \mathcal{L}}{\probL{P}{C^*} \cdot \chi(C^*, P)}}\right]} \\
&= 1 - \max_{M \in \mathcal{L}}{\left[\displaystyle\frac{\probL{M}{C^*}}{\sum_{P \in \mathcal{L}}{\probL{P}{C^*}}}\right]} \\
&= 1 - \max_{M \in \mathcal{L}}{\probGlobal\left(L = M~|~C = C^*~\text{and}~L \in \mathcal{L}\right)} \;.
\end{align*}
In a uniform, featureless environment, where from any current location the probability of any location
being any user's real LOI is $\frac{1}{n}$, that computation simplifies to
\begin{align*}
\proc{M}_\mathit{LOI} &= 1 - \max_{M \in \mathcal{L}}{\left[\displaystyle\frac{\probL{M}{C^*}}{\sum_{P \in \mathcal{L}}{\probL{P}{C^*}}} \right]} \\
&= 1 - \max_{M \in \mathcal{L}}{\left[\displaystyle\frac{\frac{1}{n}}{\sum_{P \in \mathcal{L}}{\frac{1}{n}}} \right]} = 1 - \max_{M \in \mathcal{L}}{\left[\displaystyle\frac{\frac{1}{n}}{k \cdot \frac{1}{n}} \right]} \\
&= 1 - \frac{1}{k} = \frac{k-1}{k} \;,
\end{align*}
thus completing the proof.
\end{proof}

\begin{proof}[Proof of Theorem \ref{theorem:2approx}]
Expanding on the notation from Lemma~\ref{lemma:greedy-partition}, let $Y$ be the optimal
result to an instance of the partition problem with minimal pairwise distance $d$.  Let $X$ be
the result from the greedy algorithm with minimal pairwise distance $d^\prime$, and assume
\mbox{$d^\prime < \frac{d}{2}$}.

Define, for all \mbox{$y \in Y$}, \mbox{$\proc{C}_{s}(y)$} as the single closest node in
$X_s$ to $y$ (with ties in distance being broken by the index of the node in $X_s$).  Formally,
\begin{equation*}
\proc{C}_{s}(y) = \argmin_i\{x_i \in X_s~|~\proc{D}(x_i, y)~\text{is minimized\}}\;.
\end{equation*}
By definition, \mbox{$\left|X_{s-1}\right| < k = \left|Y\right|$}.  Therefore, by the pigeon-hole principle,
there must exist \mbox{$y_a, y_b \in Y, a \neq b$}, such that
\mbox{$\proc{C}_{s-1}(y_a) = \proc{C}_{s-1}(y_b)$} (which we denote as $x_c$).

Since $x_c$ is the closest node in $X_{s-1}$ to $y_a$, and we know from
Lemma~\ref{lemma:greedy-partition} that there must exist some node \mbox{$x \in X_{s-1}$}
such that \mbox{$\proc{D}(y_a, x) \le d^\prime$}, we know that \mbox{$\proc{D}(y_a, x_c) \le d^\prime$}.
Similarly, \mbox{$\proc{D}(y_b, x_c) \le d^\prime$}.  Therefore, by the triangle inequality,
\begin{equation*}
\proc{D}(y_a, y_b) \le 2 \cdot d^\prime < 2 \cdot \frac{d}{2} = d \;,
\end{equation*}
meaning that the minimal pairwise distance in $Y$ is less than $d$, which is a contradiction.
\end{proof}

\end{document}